\pgfplotsset{compat=1.9}
\def\ALG@special@indent{%
    \ifdim\ALG@thistlm=0pt\relax
        \hskip-\leftmargin
    \else
        \hskip\ALG@thistlm
    \fi
}
\newcommand{\Input}[1]{\item[]\noindent\ALG@special@indent \textbf{Input:}\ #1}
\newcommand{\Output}[1]{\item[]\noindent\ALG@special@indent \textbf{Output:}\ #1}
\newcommand{\Indent}[1]{\item[]\noindent\ALG@special@indent \hspace{2.675em} #1}
\DeclareMathOperator*{\minimize}{min}
\DeclareMathOperator*{\maximize}{max}
\DeclareMathOperator*{\subjto}{subj. to}
\DeclareMathOperator*{\argmax}{arg\,max}
\DeclareMathOperator*{\argmin}{arg\,min}
\DeclareMathOperator{\proj}{proj}
\DeclareMathOperator{\tr}{tr}
\DeclareMathOperator{\interior}{int}
\DeclareMathOperator{\relinterior}{relint}
\DeclareMathOperator{\domain}{dom}
\DeclareMathOperator{\diag}{diag}
\DeclareMathOperator{\vecc}{vec}
\DeclarePairedDelimiterX{\divx}[2]{(}{)}{#1\mspace{1.5mu}\delimsize\|\mspace{1.5mu}#2}
\DeclarePairedDelimiterX{\divy}[2]{(}{)}{#1\mspace{1mu}\delimsize|\mspace{1mu}#2}
\DeclarePairedDelimiterX{\inp}[2]{\langle}{\rangle}{#1, #2}
\DeclarePairedDelimiterX{\norm}[1]{\lVert}{\rVert}{#1}
\DeclarePairedDelimiterX{\abs}[1]{\lvert}{\rvert}{#1}
\DeclarePairedDelimiterX{\bk}[2]{\langle}{\rangle}{#1 \delimsize\vert #2}
\newcommand*\wc{{\mkern 2mu\cdot\mkern 2mu}}
\NewDocumentCommand{\grad}{e{_^}}{%
  \mathop{}\!
  \nabla
  \IfValueT{#1}{_{\!#1}}
  \IfValueT{#2}{^{#2}}
}
\newtheorem{thm}{Theorem}[section]
\newtheorem{lem}[thm]{Lemma}
\newtheorem{prop}[thm]{Proposition}
\newtheorem{cor}[thm]{Corollary}
\newtheorem{defn}[thm]{Definition}
\newtheorem{rem}[thm]{Remark}
\newtheorem{exmp}[thm]{Example}
\newtheorem{fact}[thm]{Fact}
\newtheorem{assump}{Assumption}
\begin{document}

\title{Efficient Computation of the Quantum Rate-Distortion Function}

\author{Kerry He}
\affiliation{Department of Electrical and Computer System Engineering, Monash University, Clayton VIC 3800, Australia}
\orcid{0000-0003-4052-969X}
\author{James Saunderson}
\affiliation{Department of Electrical and Computer System Engineering, Monash University, Clayton VIC 3800, Australia}
\orcid{0000-0002-5456-0180}
\author{Hamza Fawzi}
\affiliation{Department of Applied Mathematics and Theoretical Physics, University of Cambridge, Cambridge CB3 0WA, United Kingdom}
\maketitle

\begin{abstract}
    The quantum rate-distortion function plays a fundamental role in quantum information theory, however there is currently no practical algorithm which can efficiently compute this function to high accuracy for moderate channel dimensions. In this paper, we show how symmetry reduction can significantly simplify common instances of the entanglement-assisted quantum rate-distortion problems. This allows us to better understand the properties of the quantum channels which obtain the optimal rate-distortion trade-off, while also allowing for more efficient computation of the quantum rate-distortion function regardless of the numerical algorithm being used. Additionally, we propose an inexact variant of the mirror descent algorithm to compute the quantum rate-distortion function with provable sublinear convergence rates. We show how this mirror descent algorithm is related to Blahut-Arimoto and expectation-maximization methods previously used to solve similar problems in information theory. Using these techniques, we present the first numerical experiments to compute a multi-qubit quantum rate-distortion function, and show that our proposed algorithm solves faster and to higher accuracy when compared to existing methods.
\end{abstract}

\section{Introduction}
The rate-distortion function quantifies how much a signal can be compressed by a lossy channel while remaining under a maximum allowable distortion of the signal. This important tool in information theory was originally formulated for the classical setting in Shannon's seminal work~\cite{shannon1948mathematical}, and has recently been extended to the quantum setting~\cite{datta2012quantum,wilde2013quantum}. More concretely, in the quantum setting, the signal can be represented by an $n\times n$ positive semidefinite Hermitian matrix $\rho\in\mathbb{H}^n_+$ with unit trace, and the distortion can be quantified using a matrix $\Delta\in\mathbb{H}^{mn}_+$. The corresponding entanglement-assisted quantum rate-distortion function $R(\wc; \rho, \Delta):\mathbb{R}_+\rightarrow\mathbb{R}$ is then given by
\begin{align}\label{eqn:intro}
    R(D; \rho, \Delta) = \minimize_{\sigma \in \mathbb{H}^{mn}_+} \{I(\sigma; \rho) : \tr[\sigma\Delta] \leq D, \tr_B(\sigma) = \rho \}.
\end{align}
Here, $D\geq0$ represents the maximum allowable distortion of the signal, $\tr_B:\mathbb{H}^{mn}\rightarrow\mathbb{H}^n$ is a linear operator called the partial trace (see Section~\ref{sec:qit}), and $I(\wc; \rho)$ is a reparameterization of the quantum mutual information and defined as
\begin{equation*}
    I(\sigma; \rho) = \tr[\sigma \log(\sigma)] - \tr[\tr_R(\sigma) \log(\tr_R(\sigma))] - \tr[\rho\log(\rho)],
\end{equation*}
where $\log$ denotes the matrix logarithm and $\tr_R:\mathbb{H}^{mn}\rightarrow\mathbb{H}^m$ is another partial trace operator. We will introduce this problem in more detail in Section~\ref{sec:rd}.

In general, problem~\eqref{eqn:intro} does not admit a closed-form solution, and therefore numerical optimization algorithms are required to compute values of the quantum rate-distortion function. However, it is not immediately obvious how to solve such a problem efficiently. Functions involving logarithms often do not have Lipschitz-continuous gradients nor Hessians, which are standard assumptions required to prove global convergence rates of gradient descent-like methods. Therefore, it is not obvious how quickly gradient descent-like methods will converge to the solution, if at all. Additionally, a na\"ive projected gradient descent approach is not suitable as the algorithm typically produces iterates on the boundary of the feasible set, where the gradient of logarithms are not well-defined. 

The Blahut-Arimoto algorithm~\cite{blahut1972computation,arimoto1972algorithm} is one of the most well-known methods in information theory which has been applied to solve various information-theoretic optimization problems, including evaluating the classical rate-distortion function. Recently, we showed in~\cite{he2023mirror} how the Blahut-Arimoto algorithm is related to mirror descent, a well studied type of Bregman proximal method~\cite{nemirovskij1983problem,beck2003mirror,tseng2008accelerated,beck2017first}, and how relative smoothness analysis~\cite{bauschke2017descent, lu2018relatively, teboulle2018simplified} could be used to prove convergence rates of the algorithm when applied to various problems in quantum information theory. However, finding an efficient analog of these algorithms for the quantum rate-distortion problem is not straightforward due to the more complicated matrix-valued constraints, and the fact that non-diagonal matrices no longer necessarily commute with each other.

In this paper, we show how we can apply the mirror descent framework to compute the quantum rate-distortion function. Although the mirror descent iterates do not admit closed-form expressions in general when applied to the quantum rate-distortion problem, in this paper we explore how they can still be efficiently computed. The main novelty of our work is showing how we can exploit inherent symmetries of common quantum rate-distortion instances to better understand the structural properties of quantum states and channels obtaining the optimal rate-distortion trade-off. Moreover, we show how this knowledge can be leveraged to gain considerable improvements to the proposed optimization algorithm by restricting the problem to a smaller subspace. A secondary contribution is showing how the mirror descent iterates for solving the quantum rate-distortion problem, which are expressed as an optimization subproblem, can be efficiently computed using a combination of duality and inexact computation of the iterates while retaining convergence guarantees. We note that our resulting algorithm is similar to~\cite{hayashi2022bregman}, which proposes similar iterates to solve the quantum rate-distortion problem, but does not show how to implement certain steps of the algorithm. Altogether, we present a fully realizable method for computing the quantum rate-distortion function which outperforms all existing methods.

\paragraph{Related work}
\sloppy In~\cite{hayashi2022bregman}, it was shown how a method derived using expectation-maximization, which is related to the Blahut-Arimoto algorithm~\cite{hayashi2023minimization}, could solve classical and quantum rate-distortion problems by solving a convex subproblem at every iteration. The author shows how one can account for inexact computation of these subproblems for classical rate-distortion problems while retaining convergence guarantees. However, it is not shown how to explicitly implement these same inexact computations for the quantum rate-distortion problem. We fill in this gap between theory and application in this work by proposing an implementable inexact stepping procedure for the quantum rate-distortion problem. In~\cite{he2023mirror}, we showed how an instance of the primal-dual hybrid gradient algorithm could be used to compute the classical and quantum rate-distortion functions. However, this method converges at a sublinear rate, and therefore can be slow to converge to high-accuracy solutions. We show in Appendix~\ref{appdx:linear} that our algorithm obtains linear convergence for common instances of the quantum-rate distortion problem. Alternatively, we can formulate the quantum rate-distortion problem as an instance of a quantum relative entropy program (see, e.g.,~\cite{chandrasekaran2017relative,fawzi2018efficient}), which we can solve using existing software which supports this problem class, such as \textsc{CvxQuad}~\cite{fawzi2019semidefinite}, Hypatia~\cite{coey2023performance}, or DDS~\cite{karimi2020primal,karimi2023efficient}. However, these methods do not scale well to large problem dimensions. \textsc{CvxQuad} uses large semidefinite approximations of matrix logarithms, while Hypatia and DDS use second-order interior point algorithms. Therefore, they are limited to computing the rate-distortion function of small quantum channels. In contrast, the mirror descent algorithm is a first-order method which is better at scaling to large problem dimensions.

\paragraph{Contributions}
Our main contributions are threefold. 
\begin{itemize}
    \item First, we show that for certain common distortion measures, the corresponding rate-distortion problems possess symmetries that allow us to better understand the structure of quantum states and channels which obtain the optimal rate-distortion trade-off by using a symmetry reduction technique. For the quantum rate-distortion problem with entanglement fidelity distortion measure and maximally mixed input state, we show in Theorem~\ref{thm:qrd-maxmix-solution} that there are sufficient symmetries which allow us to directly obtain an explicit solution to the rate-distortion function. For some other common quantum-rate distortion instances, we show how symmetry can be exploited to significantly reduce the dimensionality of the optimization problem. This results in improvements to both computational complexity and memory requirements, and is independent of the optimization algorithm used.
    \item Second, we show that the expectation-maximization approach of~\cite{hayashi2022bregman} can be interpreted as a mirror descent algorithm, and that we can rederive the algorithm and its analysis in an elementary way using tools from convex optimization.
    \item Third, unlike~\cite{hayashi2022bregman}, we present several concrete strategies for how to solve the mirror descent subproblems while retaining convergence guarantees for the quantum rate-distortion problem. We do this primarily by introducing an inexact method, based on~\cite{yang2022bregman}, which improves on~\cite{hayashi2022bregman} by allowing the tolerance at which we compute each iterate at to converge to zero with each subproblem, and by using an explicitly computable error criterion. This results in a more efficient algorithm which can converge sublinearly to the solution, rather than only converging to a neighborhood around the solution.

\end{itemize}

Our main algorithm is presented in Algorithm~\ref{alg:inexact-qrd}, which we provide an implementation of with symmetry reduction techniques at
\begin{center}
    \url{https://github.com/kerry-he/efficient-qrd}.
\end{center} 
We present numerical experiments on quantum channels of up to $9$-qubits to demonstrate the scalability of our approach. To the best of our knowledge, no other works have presented numerical experiments computing the quantum rate-distortion function for more than a single-qubit channel (e.g.,~\cite{datta2013quantum}, which estimates the curve using a sampling-based method). We note that unlike~\cite{he2023mirror,hayashi2022bregman}, we do not directly account for the distortion constraint. However, it is relatively straightforward to extend most of our analysis and algorithm to account for it.

Throughout the paper, we predominantly study the quantum rate-distortion function using the entanglement fidelity distortion measure. This is generally considered as the canonical distortion measure first used by the foundational works~\cite{barnum2000quantum,datta2012quantum,wilde2013quantum} (see also~\cite{khanian2021rate,khanian2023rate}) as it represents a natural way of measuring the dissimilarity between two quantum states, and possesses nice properties which makes it useful when analyzing the quantum rate-distortion function. We note that although the symmetry reduction in Section~\ref{sec:symm} is dependent on the choice of distortion measure (we discuss how symmetry reduction can be applied to other common distortion measures in Section~\ref{sec:other-dist}), the inexact mirror descent algorithm and its convergence analysis introduced in Section~\ref{sec:alg} is independent of this choice. This means the inexact mirror descent method can be applied for any distortion measure, however each iteration may be more expensive to compute if we cannot apply symmetry reduction.

\subsection{Outline}

The remainder of the paper is structured as follows. In Section~\ref{sec:prelim}, we summarize the notation we use, as well as preliminaries about quantum information theory and mirror descent. In Section~\ref{sec:rd}, we formalize the quantum rate-distortion problem, and explain how we can reformulate the optimization problem to make it easier to solve and analyze. In Section~\ref{sec:symm}, we recall symmetry reduction and show how it can be applied to common quantum rate-distortion instances. In Section~\ref{sec:alg}, we introduce the inexact mirror descent algorithm we propose to solve the rate-distortion problem with. In Section~\ref{sec:exp}, we present experimental results comparing our proposed algorithm to existing methods. Finally, we end with concluding remarks in Section~\ref{sec:conc}. We provide some auxiliary results in the appendix, including how our results, analysis and techniques can be applied to the classical rate-distortion in Appendix~\ref{appdx:crd}, local linear convergence rates in Appendix~\ref{appdx:linear}, and derivations for bounds on the optimality gap of mirror descent in Appendix~\ref{appdx:lb}.

\section{Preliminaries} \label{sec:prelim}

\subsection{Notation}

We denote arbitrary finite-dimensional inner product spaces by $\mathbb{V}$ and $\mathbb{V}'$, the set of $n\times n$ Hermitian matrices as $\mathbb{H}^n$ with inner product $\inp{X}{Y}=\tr[XY^\dag]$, the $n$-dimensional non-negative orthant and $n\times n$ positive semidefinite cone as $\mathbb{R}_+^n$ and $\mathbb{H}^n_+$ respectively, and their respective interiors as $\mathbb{R}_{++}^n$ and $\mathbb{H}^n_{++}$. For $X,Y\in\mathbb{H}^n$, we use $X \succeq Y$ to mean $X-Y \in \mathbb{H}^n_+$, and $X \succ Y$ to mean $X-Y \in \mathbb{H}^n_{++}$. We denote the extended reals by $\bar{\mathbb{R}}=\mathbb{R}\cup \infty$, and the set of positive integers as $\mathbb{N}$. We use $\domain$ to denote the domain of a function, $\interior$ and $\relinterior$ to denote the interior and relative interior of a set, respectively, $\ker$ to denote the kernel of a linear operator, $\dim_{\mathbb{C}}$ and $\dim_{\mathbb{R}}$ to denote the dimension of a complex and real vector space respectively, $\mathbb{I}$ to denote the identity matrix, and $\otimes$ to denote the Kronecker product between two matrices. For a complex matrix $X\in\mathbb{C}^{n\times m}$, we use $\overline{X}$ to denote its entry-wise complex conjugate, and $X^\dag$ to denote its conjugate transpose. For a linear operator $\mathcal{A}:\mathbb{V}\rightarrow\mathbb{V}'$, we denote its adjoint by $\mathcal{A}^\dag:\mathbb{V}'\rightarrow\mathbb{V}$. We use $\{e_i\}$ to denote the standard basis for the vector space $\mathbb{C}^n$.

\subsection{Quantum information theory} \label{sec:qit}
Here, we introduce some preliminary concepts relating to quantum information theory. We refer the reader to e.g.,~\cite{nielsen2010quantum,wilde2017quantum,watrous2018theory} for some standard references for these topics.

\paragraph{States and channels}

We define $n\times n$ quantum states as $n \times n$ positive semidefinite Hermitian matrices with unit trace, i.e.,
\begin{equation*}
    \mathcal{D}_n \coloneqq \{\rho \in \mathbb{H}^n_+ : \tr[\rho] = 1\}.
\end{equation*}
These matrices are also commonly referred to as \emph{density matrices}. \emph{Quantum channels} are completely positive trace preserving (CPTP) (see e.g.,~\cite[Definitions 4.4.2 and 4.4.3]{wilde2017quantum}) linear operators which map $n\times n$ quantum states to $m\times m$ quantum states. We denote the set of these CPTP linear operators as $\Phi_{m,n}$. It turns out that any CPTP linear map $\mathcal{N}:\mathbb{H}^n\rightarrow\mathbb{H}^m$ in $\Phi_{m,n}$ can be written in the form
\begin{equation}\label{eqn:kraus-choi}
    \mathcal{N}(\rho) = \sum_{i=1}^k V_i \rho V_i^\dag,
\end{equation}
where $k\leq mn$ and the $V_i\in\mathbb{C}^{m\times n}$ (for $i=1,\ldots,k$) are matrices satisfying $\sum_{i=1}^k V_i^\dag V_i = \mathbb{I}$~\cite[Theorem 4.4.1]{wilde2017quantum}. See also Definition~\ref{defn:choi} for an alternate characterization of these linear maps.

\paragraph{Composite systems}
An $n$-dimensional \emph{quantum system} is represented mathematically as an $n$-dimensional complex vector space $\mathbb{C}^n$, which we can associate with the set of density matrices $\mathcal{D}_n$ which act on this space. A \emph{bipartite system} is obtained by taking the tensor product between two quantum systems, e.g., the tensor product between an $n$-dimensional system and an $m$-dimensional system produces an $nm$-dimensional system, which we can associate with the set of density matrices $\mathcal{D}_{mn}$ which act on this joint system. To aid in keeping track of these composite systems, we will denote quantum systems with letters, e.g., $A,B$, and the tensor product of these two spaces as, e.g., $AB$. We will use subscripts to clarify that a density matrix acts on a particular system. For instance, we write $\rho_A$ to indicate that the quantum state acts on system $A$. 

To make the concept of composite quantum systems more concrete, we note that the Kronecker product between any two density matrices $\rho_A\in\mathcal{D}_m$ and $\rho_B\in\mathcal{D}_n$ on systems $A$ and $B$ respectively will produce a density matrix $\rho_A\otimes\rho_B\in\mathcal{D}_{mn}$ on system $AB$. This operation is analogous to forming the joint distribution of two independent probability distributions. We also consider the ``marginalization'' operation, called the \emph{partial trace}, which is a linear operator which maps density matrices on the joint system $AB$ to density matrices on the individual systems $A$ or $B$. This is analogous to finding the marginal distribution from a joint distribution. We denote the partial trace as $\tr_A:\mathbb{H}^{mn}\rightarrow\mathbb{H}^n$ and $\tr_B:\mathbb{H}^{mn}\rightarrow\mathbb{H}^m$, and define them as the adjoint of the Kronecker product with the identity matrix, i.e., $(\tr_A)^\dag(\rho_B)=\mathbb{I}_A\otimes\rho_B$ and $(\tr_B)^\dag(\rho_A)=\rho_A\otimes\mathbb{I}_B$. More concretely, if we represent a bipartite density matrix $\rho_{AB}\in\mathcal{D}_{mn}$ on system $AB$ as a block matrix consisting of an $m\times m$ array of $n\times n$ blocks $X_{ij}$, then
\begin{equation*}
    \tr_A(\rho_{AB}) = \sum_{i=1}^m X_{ii} \qquad \textnormal{and} \qquad [\tr_B(\rho_{AB})]_{ij} = \tr[X_{ij}].
\end{equation*}
We will also use the tensor product between linear operators. In particular, for an operator $\mathcal{N}\in\Phi_{m, n}$ given by~\eqref{eqn:kraus-choi}, we define $(\mathcal{N}\otimes\mathbb{I})\in\Phi_{(m+p), (n+p)}$, where $\mathbb{I}$ is of size $p\times p$, as 
\begin{equation*}
    (\mathcal{N}\otimes\mathbb{I})(\rho) = \sum_{i=1}^k (V_i\otimes\mathbb{I}) \rho (V_i\otimes\mathbb{I})^\dag.
\end{equation*}

\paragraph{Purification}
Consider a quantum state $\rho_A\in\mathcal{D}_n$ with spectral decomposition $\rho_A=\sum_{i=1}^n \lambda_i v_iv_i^\dag$ defined on an $n$-dimensional system $A$, and a system $R$ which is isomorphic to $A$. A \emph{purification} of $\rho_A$ is a complex unit vector $\psi\in\mathbb{C}^{n^2}$ defined on the joint system $AR$ which satisfies $\tr_R(\psi\psi^\dag)=\rho_A$. The purification is not unique. However, in this paper we will consider the particular choice of purification
\begin{equation}\label{eqn:purificaiton}
    \psi = \sum_{i=1}^n \sqrt{\lambda_i}\;v_i\otimes v_i,
\end{equation}
and refer to this as the purification of $\rho_A$. With some further abuse of terminology, we will also refer to the quantum state associated with~\eqref{eqn:purificaiton}, i.e., $\rho_{AR}\coloneqq \psi\psi^\dag \in \mathcal{D}_{n^2}$, as the purification of $\rho_A$. Note that this choice of purification satisfies $\rho_R\coloneqq\tr_A(\rho_{AR}) = \tr_R(\rho_{AR}) = \rho_A$.

\paragraph{Entropy}
For a function $f: \mathbb{R}\rightarrow\bar{\mathbb{R}}$ and Hermitian matrix $X\in\mathbb{H}^{n}$ with spectral decomposition $X=\sum_{i=1}^n\lambda_i v_iv_i^\dag$, the function $f$ can be extended to matrices $X$ as
\begin{equation*}
    f(X) \coloneqq \sum_{i=1}^nf(\lambda_i) v_iv_i^\dag.
\end{equation*}
With some overloading of function notation, we define the von Neumann entropy $S:\mathbb{H}_+^n\rightarrow\mathbb{R}$ as
\begin{equation}
    S(\rho)\coloneqq-\tr[\rho \log(\rho)],
\end{equation}
and the quantum relative entropy $S:\mathbb{H}_+^n\times\mathbb{H}_{++}^n\rightarrow\mathbb{R}$ as
\begin{equation}
    S\divx{\rho}{\sigma}\coloneqq\tr[\rho (\log(\rho) - \log(\sigma)].
\end{equation}
For notational convenience, we use the natural base $e$ throughout the paper for logarithms and exponentials. However, all computational experiments are performed in base $2$ to obtain results measured in bits, i.e., the standard unit of measurement for information. This is done by rescaling relevant quantities by a factor of $\log_2(e)$.

\subsection{Mirror descent}

We will focus on constrained convex optimization problems of the form
\begin{subequations} \label{eqn:constr-min}
    \begin{align}
        \minimize_{x} \quad & f(x)  \\
        \subjto \quad & \mathcal{A}(x) = b,
    \end{align}
\end{subequations}
where $f:\mathbb{V}\rightarrow\bar{\mathbb{R}}$ is a convex function which is differentiable on $\interior\domain f$, $\mathcal{A}:\mathbb{V}\rightarrow\mathbb{V}'$ is a linear operator, and $b\in\mathbb{V}'$. Throughout the rest of the paper, we define 
\begin{equation}
    \mathcal{C}\coloneqq\{ x\in\domain f : \mathcal{A}(x) = b \},
\end{equation}
as the intersection of the implicit and explicit constraints of~\eqref{eqn:constr-min}.

Before introducing the mirror descent algorithm, we first introduce the class of Legendre functions, which are known to satisfy certain nice properties. We adopt the terminology in~\cite{rockafellar1970convex} for common terms relating to convex functions. 
\begin{defn}[{{\cite[Definition 2.1]{teboulle2018simplified}}}]
    A function $\varphi: \mathbb{V}\rightarrow\bar{\mathbb{R}}$ is \emph{Legendre} if it is proper, lower semicontinuous, strictly convex, and essentially smooth.
\end{defn}
\begin{fact}[{{\cite[Theorem 26.5]{rockafellar1970convex}}}]\label{fact:legendre}
    Let $\varphi: \mathbb{V}\rightarrow\bar{\mathbb{R}}$ be a Legendre function.
    \begin{enumerate}[label=(\roman*), ref=\ref{fact:legendre}(\roman*)]
        \item The convex conjugate $\varphi^*$ of $\varphi$ is also Legendre. \label{fact:legendre-i}
        \item \sloppy The gradient $\nabla\varphi : \interior\domain\varphi \rightarrow \interior\domain\varphi^*$ is continuous and invertible, where $(\nabla\varphi)^{-1}=\nabla\varphi^*$. \label{fact:legendre-ii}
    \end{enumerate}
\end{fact}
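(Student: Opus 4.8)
The cleanest route is the classical conjugate--duality correspondence between essential smoothness and essential strict convexity; the statement is Theorem~26.5 of~\cite{rockafellar1970convex}, and I sketch below the argument one unpacks from Theorems~26.1 and~26.3 there. First, since $\varphi$ is proper, lower semicontinuous and convex, the Fenchel--Moreau theorem gives $\varphi^{**}=\varphi$, so $\varphi^*$ is again proper, lsc and convex; and the Fenchel--Young equality yields the subdifferential inversion $y\in\partial\varphi(x)\iff x\in\partial\varphi^*(y)\iff\varphi(x)+\varphi^*(y)=\langle x,y\rangle$, i.e.\ $\partial\varphi^*=(\partial\varphi)^{-1}$ as relations on $\mathbb{V}\times\mathbb{V}$.

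The key structural lemma is that for a Legendre $\varphi$ one has $\domain\partial\varphi=\interior\domain\varphi$ (which is nonempty) and $\partial\varphi(x)=\{\nabla\varphi(x)\}$ on this set. Differentiability of $\varphi$ on the open set $\interior\domain\varphi$ forces $\partial\varphi(x)$ there to be the singleton $\{\nabla\varphi(x)\}$; and at a boundary point $x$, a nonempty $\partial\varphi(x)$ would, by monotonicity of $\nabla\varphi$ along a segment joining an interior point to $x$, keep $\|\nabla\varphi(x_k)\|$ bounded as $x_k\to x$, contradicting the gradient blow-up built into essential smoothness. This is Rockafellar's Theorem~26.1 (a closed proper convex function is essentially smooth iff its subdifferential is single-valued on its --- necessarily open --- domain), which I would apply to both $\varphi$ and, once it is available, $\varphi^*$.

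To prove (i), I would show $\varphi^*$ is essentially smooth and essentially strictly convex. Essential smoothness: strict convexity of $\varphi$ makes $\partial\varphi$ injective as a relation --- if $y\in\partial\varphi(x_1)\cap\partial\varphi(x_2)$ then adding the two subgradient inequalities forces $\varphi$ affine on $[x_1,x_2]$, hence $x_1=x_2$ --- so by the inversion $\partial\varphi^*=(\partial\varphi)^{-1}$ is single-valued on its domain, and the structural lemma applied to $\varphi^*$ gives essential smoothness with $\partial\varphi^*=\{\nabla\varphi^*\}$ on $\interior\domain\varphi^*=\domain\partial\varphi^*$. Essential strict convexity: if $\varphi^*$ were affine, with slope $x^\star$, on a nondegenerate segment $[y_0,y_1]\subseteq\domain\partial\varphi^*$, then $x^\star\in\partial\varphi^*(y_t)$ for all $t\in[0,1]$, so by inversion $[y_0,y_1]\subseteq\partial\varphi(x^\star)$, contradicting the structural lemma ($\partial\varphi(x^\star)$ is a singleton or empty). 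Since $\domain\partial\varphi^*=\interior\domain\varphi^*$, this is precisely the strict convexity required, so $\varphi^*$ is Legendre.

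For (ii), combining the structural lemma for $\varphi$ and for $\varphi^*$ with the inversion: $\nabla\varphi$ is single-valued on $\interior\domain\varphi$, $\nabla\varphi^*$ is single-valued on $\interior\domain\varphi^*$, and they are mutually inverse, so $\nabla\varphi:\interior\domain\varphi\to\interior\domain\varphi^*$ is a bijection with inverse $\nabla\varphi^*$; continuity of the gradient of a convex function on the interior of its domain (Rockafellar, Theorem~25.5) makes both maps continuous, i.e.\ $\nabla\varphi$ is a homeomorphism. The main obstacle is the structural lemma itself, plus the self-dual bookkeeping in part (i): the delicate point is tracking when ``strictly convex'' can be read off from ``essentially strictly convex'' --- equivalently, that $\domain\partial\varphi^*$ is open and equals $\interior\domain\varphi^*$ --- which is exactly where one leans on $\varphi$ being both essentially smooth \emph{and} strictly convex rather than only one of the two.
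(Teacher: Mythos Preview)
The paper does not prove this statement; it is stated as a Fact with a direct citation to Rockafellar~\cite[Theorem~26.5]{rockafellar1970convex} and no accompanying argument. Your sketch is a faithful unpacking of the standard proof one extracts from Rockafellar's Theorems~26.1, 26.3 and~26.5, and the logic is sound.

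One point worth flagging: the paper's Definition of Legendre (from~\cite{teboulle2018simplified}) requires $\varphi$ to be \emph{strictly convex}, whereas Rockafellar's Legendre type requires only \emph{essentially} strictly convex. Your argument establishes that $\varphi^*$ is essentially strictly convex (strict convexity on $\domain\partial\varphi^*=\interior\domain\varphi^*$), which is what Theorem~26.5 actually delivers; you correctly identify this as the delicate bookkeeping step. Whether $\varphi^*$ is strictly convex on all of $\domain\varphi^*$ (including boundary points) is a separate question that neither Rockafellar nor the paper addresses, and in practice the distinction is immaterial for how the Fact is used downstream (only gradients on interiors are ever invoked). So your proof matches what the cited reference proves, even if it may be marginally weaker than what the paper's definition literally demands.
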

\sloppy Given a Legendre function $\varphi: \mathbb{V}\rightarrow\bar{\mathbb{R}}$, the Bregman divergence $D_\varphi : \domain\varphi\times\interior\domain\varphi\rightarrow\mathbb{R}$ is~\cite{bregman1967relaxation}
\begin{equation}\label{eqn:breg-div}
    D_\varphi\divx{x}{y} \coloneqq \varphi(x) - \mleft(\varphi(y) + \inp{\nabla\varphi(y)}{x - y} \mright).  
\end{equation}
Now consider using a Legendre function $\varphi$ to construct a local approximation of $f$ at $y\in\interior\domain\varphi\cap\interior\domain f$ for some scaling factor $t>0$
\begin{equation}\label{eqn:mirror-pre}
    \min_{x\in\mathcal{C}} \quad f(y) + \inp{\nabla f(y)}{x - y} + \frac{1}{t} D_\varphi\divx{x}{y}.
\end{equation}
If $y$ is close to the optimal solution $x^*$ of~\eqref{eqn:constr-min}, and if $\varphi$ is, in a sense, similar to $f$, then we expect the solution to~\eqref{eqn:mirror-pre} to be a good approximation of~\eqref{eqn:constr-min}. More formally, we recall from~\cite{bauschke2017descent, lu2018relatively, teboulle2018simplified} the following definitions for relative smoothness and strong convexity to describe structural similarities between $f$ and $\varphi$.
\begin{defn}\label{def:rel}
    Consider a function $f:\mathbb{V}\rightarrow\bar{\mathbb{R}}$, a Legendre function $\varphi:\mathbb{V}\rightarrow\bar{\mathbb{R}}$, and convex set $\mathcal{X}\subseteq\mathbb{V}$. Then
    \begin{enumerate}[label=(\roman*), ref=\ref{def:rel}(\roman*)]
        \item $f$ is $L$-smooth relative to $\varphi$ on $\mathcal{X}$ if $L\varphi - f$ is convex on $\relinterior\mathcal{X}$ for some $L>0$, and
        \item $f$ is $\mu$-strongly convex relative to $\varphi$ on $\mathcal{X}$ if $f - \mu\varphi$ is convex on $\relinterior\mathcal{X}$ for some $\mu>0$. \label{def:rel-strong-convex}
    \end{enumerate}    
\end{defn}
In particular, if $f$ is $L$-smooth relative to $\varphi$ on $\mathcal{C}$, then the local approximation used in~\eqref{eqn:mirror-pre} for $t=1/L$ is a majorization of $f$, and therefore gives an upper bound on the optimal value $f(x^*)$. To solve~\eqref{eqn:constr-min}, mirror descent uses the approximation~\eqref{eqn:mirror-pre} to construct the sequence of iterates
\begin{equation}\label{eqn:mirror-descent}
    x^{k+1} = \argmin_{x \in \mathcal{C}} \mleft\{ \inp{\nabla f(x^k)}{x} + \frac{1}{t_k} D_\varphi\divx{x}{x^k} \mright\}.
\end{equation}
This is a practical algorithm when~\eqref{eqn:mirror-pre} is easier to solve compared to the original problem~\eqref{eqn:constr-min}. When $f$ satisfies certain smoothness and strong convexity properties relative to $\varphi$, mirror descent has the following convergence properties.
\begin{fact}[{{\cite[Theorem 4.1 and Proposition 4.1]{teboulle2018simplified}}}]\label{fact:conv}
    Let $\varphi$ be Legendre, and $\{ x^k \}$ be the sequence of iterates generated by mirror descent~\eqref{eqn:mirror-descent} to solve~\eqref{eqn:constr-min}. 
    \begin{enumerate}[label=(\roman*), ref=\ref{def:rel}(\roman*)]
    \item If $f$ is $L$-smooth relative to $\varphi$, then $f(x^k)$ satisfies
    \begin{equation*}
        f(x^k) - f^* \leq \frac{L}{k}D_\varphi\divx{x^*}{x^0}, \quad \forall k\in\mathbb{N},
        \label{eqn:sublin-conv}
    \end{equation*}
    i.e., $f(x^k)$ converges sublinearly, $O(1/k)$, to the optimal value. 
    \item If $f$ is $L$-smooth relative to $\varphi$ and $\mu$-strongly convex relative to $\varphi$, then $f(x^k)$ satisfies
    \begin{equation*}
        f(x^k) - f^* \leq \mleft(1 - \frac{\mu}{L}\mright)^k LD_\varphi\divx{x^*}{x^0}, \quad \forall k\in\mathbb{N},
        \label{eqn:lin-conv}
    \end{equation*}
    i.e., $f(x^k)$ converges linearly, $O(\alpha^k)$ for $\alpha\in(0, 1)$, to the optimal value.
    \end{enumerate}    
\end{fact}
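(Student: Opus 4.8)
The plan is to follow the standard relative-smoothness analysis of Bregman proximal methods, specialized to the affinely constrained problem~\eqref{eqn:constr-min}; the two ingredients are the three-point identity for Bregman divergences and the first-order optimality condition for the subproblem~\eqref{eqn:mirror-descent}. The three-point identity states that for $u\in\domain\varphi$ and $v,w\in\interior\domain\varphi$,
\[
    \inp{\nabla\varphi(w) - \nabla\varphi(v)}{u - w} = D_\varphi\divx{u}{v} - D_\varphi\divx{u}{w} - D_\varphi\divx{w}{v},
\]
which I would verify directly by expanding each Bregman divergence with~\eqref{eqn:breg-div} and cancelling the $\varphi$ terms. For the optimality condition, note that since $\varphi$ is Legendre its gradient blows up at the boundary of $\domain\varphi$, so (assuming the subproblem is well-posed, i.e.\ $\mathcal{C}\cap\interior\domain\varphi\neq\emptyset$ and the infimum is attained) the unique minimizer $x^{k+1}$ of~\eqref{eqn:mirror-descent} lies in $\interior\domain\varphi\cap\mathcal{C}$, and stationarity over the convex set $\mathcal{C}$ gives
\[
    \inp{\nabla f(x^k) + \tfrac{1}{t_k}\bigl(\nabla\varphi(x^{k+1}) - \nabla\varphi(x^k)\bigr)}{x - x^{k+1}} \geq 0 \quad \text{for all } x\in\mathcal{C}.
\]

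Next I would fix $t_k = 1/L$ and combine three facts for an arbitrary comparison point $x\in\mathcal{C}$: relative $L$-smoothness, which majorizes $f$ and gives $f(x^{k+1}) \le f(x^k) + \inp{\nabla f(x^k)}{x^{k+1}-x^k} + L\,D_\varphi\divx{x^{k+1}}{x^k}$; the optimality condition above combined with the three-point identity (taken at $u=x$, $v=x^k$, $w=x^{k+1}$), which bounds $\inp{\nabla f(x^k)}{x^{k+1}-x} \le L\bigl(D_\varphi\divx{x}{x^k} - D_\varphi\divx{x}{x^{k+1}} - D_\varphi\divx{x^{k+1}}{x^k}\bigr)$; and convexity of $f$, which gives $\inp{\nabla f(x^k)}{x-x^k} \le f(x) - f(x^k)$. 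Splitting $\inp{\nabla f(x^k)}{x^{k+1}-x^k} = \inp{\nabla f(x^k)}{x^{k+1}-x} + \inp{\nabla f(x^k)}{x-x^k}$ across the last two bounds and cancelling the $\pm L\,D_\varphi\divx{x^{k+1}}{x^k}$ terms yields the key per-iteration inequality
\[
    f(x^{k+1}) - f(x) \le L\bigl(D_\varphi\divx{x}{x^k} - D_\varphi\divx{x}{x^{k+1}}\bigr), \quad \text{for all } x\in\mathcal{C}.
\]
Taking $x = x^k$ shows $\{f(x^k)\}$ is nonincreasing; taking $x = x^*$ and summing over $k=0,\dots,K-1$ telescopes the right side to at most $L\,D_\varphi\divx{x^*}{x^0}$, and monotonicity then gives $f(x^K) - f^* \le \tfrac1K\sum_{k=1}^{K}(f(x^k)-f^*) \le \tfrac{L}{K}D_\varphi\divx{x^*}{x^0}$, the sublinear bound.

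For the linear rate I would redo the same computation with $x = x^*$, but replace the convexity bound by the $\mu$-relative-strong-convexity bound $\inp{\nabla f(x^k)}{x^*-x^k} \le f^* - f(x^k) - \mu\,D_\varphi\divx{x^*}{x^k}$. This produces $f(x^{k+1}) - f^* \le (L-\mu)\,D_\varphi\divx{x^*}{x^k} - L\,D_\varphi\divx{x^*}{x^{k+1}}$; since the left side is nonnegative we get the contraction $D_\varphi\divx{x^*}{x^{k+1}} \le (1-\mu/L)\,D_\varphi\divx{x^*}{x^k}$, and iterating this and substituting back into the same inequality gives $f(x^k) - f^* \le L(1-\mu/L)^k D_\varphi\divx{x^*}{x^0}$. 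I expect the main obstacle to be not the inequalities but the well-posedness and regularity bookkeeping: one must know that each iterate $x^{k+1}$ stays in $\interior\domain\varphi\cap\interior\domain f$ (so that $\nabla f(x^{k+1})$ and $\nabla\varphi(x^{k+1})$ exist and the next subproblem is again solvable), and one must justify writing the subproblem's optimality condition as a variational inequality over $\mathcal{C}$ — using essential smoothness of $\varphi$ to keep minimizers off the relative boundary, and the fact that feasible directions $x - x^{k+1}$ lie in $\ker\mathcal{A}$. These are exactly the points established in~\cite{teboulle2018simplified}, so in practice I would import the needed regularity conclusions from there and keep the argument focused on the three displayed inequalities.
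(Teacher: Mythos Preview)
Your proposal is correct and follows the standard relative-smoothness argument. Note that the paper does not actually prove Fact~\ref{fact:conv}; it is stated as a cited result from~\cite{teboulle2018simplified}. However, the paper does prove an inexact generalization (Theorem~\ref{prop:adapt-rate-sublin}) using essentially the same ingredients you outline --- relative smoothness to majorize $f$, relative strong convexity to compare with $f(u)$, the optimality condition of the subproblem combined with the three-point identity, and then telescoping (sublinear case) or recursion (linear case) --- so your approach is in line with both the cited reference and the paper's own methodology.
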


We will make the following assumptions throughout the rest of the paper to ensure the mirror descent algorithm is well defined and to simplify some of the analysis.
\begin{assump}\label{assump:general}
    Consider the convex optimization problem~\eqref{eqn:constr-min} and mirror descent rule~\eqref{eqn:mirror-descent}.
    \begin{enumerate}[label=(\roman*), ref=\ref{assump:general}(\roman*)]
        \item $f:\mathbb{V}\rightarrow\bar{\mathbb{R}}$ is convex and continuously differentiable on $\interior\domain f$,\label{assump:general-i}
        \item $\varphi:\mathbb{V}\rightarrow\bar{\mathbb{R}}$ is Legendre,\label{assump:general-ii}
        \item $\domain f = \domain \varphi$, \label{assump:general-iii}
        \item $\mathcal{C}$ is compact and non-empty, and \label{assump:general-iv}
        \item $\domain\varphi^*=\mathbb{V}$. \label{assump:general-v}
    \end{enumerate}
\end{assump}
Under Assumptions~\cref{assump:general-i,assump:general-ii,assump:general-iii,assump:general-iv}, the mirror descent iterate~\eqref{eqn:mirror-descent} always exists and is uniquely defined~\cite[Lemma 2]{bauschke2017descent}. Assumption~\ref{assump:general-v} may seem restrictive, however it is satisfied by many important examples (see~\cite[Example 2.1]{teboulle2018simplified}), including the ones we later use to compute the classical and quantum rate-distortion functions. The motivation behind Assumption~\ref{assump:general-iii} is that this allows the Bregman divergence to act as a natural barrier for the domain of the objective function, allowing these implicit constraints to be more easily handled. In particular, Legendre functions satisfy the following property.
\begin{fact}[{{\cite[Lemma 2.2]{maddison2021dual}}}]\label{fact:interior}
    Consider an essentially smooth convex function $\varphi: \mathbb{V}\rightarrow\bar{\mathbb{R}}$. If $\varphi$ is minimized at $x\in\domain\varphi$, then $x\in\interior\domain\varphi$.
\end{fact}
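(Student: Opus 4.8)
The plan is to argue by contradiction, using the defining property of essential smoothness: the gradient norm blows up along any sequence of interior points of $\domain\varphi$ converging to a boundary point. Since $x$ is a minimizer we have $\varphi(x)<+\infty$, so $x\in\domain\varphi$; assume towards a contradiction that $x\notin\interior\domain\varphi$. Essential smoothness ensures $\interior\domain\varphi\neq\emptyset$, and $\domain\varphi$ is convex, so $x$ is a boundary point of $\domain\varphi$. Fix $y\in\interior\domain\varphi$; since a convex function that is finite on an open set is continuous there, we may pick $r>0$ with the closed ball of radius $r$ about $y$ contained in $\interior\domain\varphi$, and $M<\infty$ bounding $\varphi$ on that ball. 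For $\lambda\in(0,1]$ set $x_\lambda:=(1-\lambda)x+\lambda y$; because the segment from a point of a convex set to an interior point (excluding the far endpoint) stays in the interior, $x_\lambda\in\interior\domain\varphi$, so $\varphi$ is differentiable at $x_\lambda$ with $\nabla\varphi(x_\lambda)$ its unique subgradient there.

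The argument then rests on two uses of the subgradient inequality $\varphi(z)\ge\varphi(x_\lambda)+\inp{\nabla\varphi(x_\lambda)}{z-x_\lambda}$. Taking $z=x$, using that $x$ is a global minimizer (so $\varphi(x)\le\varphi(x_\lambda)$) together with $x-x_\lambda=-\lambda(y-x)$, gives $\inp{\nabla\varphi(x_\lambda)}{y-x}\ge 0$. Taking instead $z=y+ru$ for an arbitrary unit vector $u$ and writing $(y+ru)-x_\lambda=(1-\lambda)(y-x)+ru$ gives
\[
    M \ge \varphi(x_\lambda)+(1-\lambda)\inp{\nabla\varphi(x_\lambda)}{y-x}+r\inp{\nabla\varphi(x_\lambda)}{u} \ge \varphi(x)+r\inp{\nabla\varphi(x_\lambda)}{u},
\]
where the last step uses $\varphi(x_\lambda)\ge\varphi(x)$ and the previous inequality. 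Hence $\inp{\nabla\varphi(x_\lambda)}{u}\le (M-\varphi(x))/r$ for every unit vector $u$, so $\norm{\nabla\varphi(x_\lambda)}\le (M-\varphi(x))/r$ uniformly in $\lambda\in(0,1]$. But $x_\lambda\to x$ as $\lambda\downarrow 0$ and $x$ lies on the boundary of $\domain\varphi$, so essential smoothness forces $\norm{\nabla\varphi(x_\lambda)}\to\infty$ — a contradiction, and therefore $x\in\interior\domain\varphi$.

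I expect the one genuine difficulty to be obtaining a bound on the \emph{full} gradient norm rather than merely on a single directional derivative: the naive attempt of bounding $\inp{\nabla\varphi(x_\lambda)}{y-x}$ alone does not preclude $\nabla\varphi(x_\lambda)$ escaping to infinity in a direction orthogonal to $y-x$. Probing $\varphi$ on an entire ball around $y$, rather than only along the segment through $x$, is exactly what converts the directional estimate into a norm estimate, and this is the step I would take the most care with. The remaining ingredients — continuity (hence local boundedness) of a convex function on the interior of its domain, that the gradient is the unique subgradient at a point of differentiability, and that segments from the interior of a convex set to a point of the set remain in the interior — are standard and can simply be cited.
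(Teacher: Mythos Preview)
Your proof is correct. The paper does not supply its own argument for this statement: it is recorded as a Fact with a citation to \cite{maddison2021dual} and used without proof. So there is nothing to compare against, and your contradiction argument via essential smoothness is a valid way to fill in the details.

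One small remark on presentation: the key observation --- that probing $\varphi$ on a full ball around $y$ (rather than only along the segment $[x,y]$) upgrades the directional control to a uniform bound on $\norm{\nabla\varphi(x_\lambda)}$ --- is exactly the right idea, and you have executed it cleanly. The standard convex-analysis facts you invoke (continuity of a convex function on the interior of its domain, the line-segment principle for convex sets, and that the gradient is the unique subgradient at a point of differentiability) are indeed routine and safe to cite.
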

Therefore, as the sum of an essentially smooth convex function and an affine function is still clearly essentially smooth and convex, Fact~\ref{fact:interior} tells us that the mirror descent iterates~\eqref{eqn:mirror-descent} always lie in $\interior\domain\varphi$.

\paragraph{Dual of mirror descent subproblem}
It will also be useful to consider the Lagrangian of the problem~\eqref{eqn:mirror-pre} for some $y\in\interior\domain\varphi$
\begin{equation}
    \mathcal{L}(x, \nu; y) \coloneqq \inp{\nabla f(y)}{x} + \frac{1}{t} D_\varphi\divx{x}{y} + \inp{\nu}{\mathcal{A}(x) - b},
\end{equation}
and the corresponding dual function
\begin{equation}\label{eqn:md-dual}
    g(\nu; y) \coloneqq \inf_x \mathcal{L}(x, \nu; y) = - \frac{1}{t} \varphi^*[ \nabla\varphi(y) - t(\nabla f(y) + \mathcal{A}^\dag(\nu)) ] - \inp{b}{\nu},
\end{equation}
which under Assumption~\ref{assump:general-v} has the property $\domain g(\wc; y)=\mathbb{V}'$. When $y=x^k$, these are the Lagrangian and dual function for the mirror descent subproblem~\eqref{eqn:mirror-descent} at the $k$-th mirror descent iteration. We will use $\mathcal{L}^k(x, \nu) \coloneqq \mathcal{L}(x, \nu; x^k)$ and $g^k(\nu)\coloneqq g(\nu; x^k)$ to simplify notation. An alternative characterization of the dual is $g(\nu; y) = \mathcal{L}(\Tilde{x}, \nu; y)$ where
\begin{equation}\label{eqn:md-p-var}
    \Tilde{x} = \argmin_x \mathcal{L}(x, \nu; y) = (\nabla\varphi)^{-1}[\nabla\varphi(y) - t (\nabla f(y) + \mathcal{A}^\dag(\nu))],
\end{equation}
which can be used to recover a primal optimal solution from a dual optimal solution assuming $(\nabla\varphi)^{-1}$ is easily computable. Using Fact~\ref{fact:interior}, it follows that $\Tilde{x}\in\interior\domain\varphi$. We can show that the dual problem 
\begin{equation}\label{eqn:md-dual-prob}
    \max_\nu \; g(\nu; y),
\end{equation}
is well-posed under certain conditions using the following proposition. We begin with a standard definition from convex analysis before introducing the main result.
\begin{defn}[Coercive function]\label{defn:coercive}
    A function $f: \mathbb{V}\rightarrow\bar{\mathbb{R}}$ is coercive if for every sequence $\{x^k\}$ such that $\norm{x^k}\rightarrow\infty$, we have $f(x^k) \rightarrow \infty$.
\end{defn}
\begin{prop}\label{prop:dual-prop}
    Consider the dual function $g(\wc; y) : \mathbb{V}'\rightarrow\mathbb{R}$, as defined in~\eqref{eqn:md-dual}, for some $y\in\interior\domain\varphi$ and where $\varphi$ is Legendre. Then
    \begin{enumerate}[label=(\roman*), ref=\ref{}(\roman*)]
        \item  $-g(\wc; y)$ is strictly convex if and only if $\mathcal{A}^\dag$ is injective, and
        \item  $-g(\wc; y)$ is coercive if and only if $\mathcal{A}^\dag$ is injective and $b\in\{ \mathcal{A}(x) : x\in\interior\domain\varphi \}$.
    \end{enumerate}
\end{prop}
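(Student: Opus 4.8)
The plan is to analyze the function $-g(\wc; y)$ through its explicit form $-g(\nu; y) = \frac{1}{t}\varphi^*[\nabla\varphi(y) - t(\nabla f(y) + \mathcal{A}^\dag(\nu))] + \inp{b}{\nu}$, which is a composition of the Legendre conjugate $\varphi^*$ with an affine map in $\nu$, plus a linear term. Since $\varphi$ is Legendre, Fact~\ref{fact:legendre-i} gives that $\varphi^*$ is Legendre as well, hence strictly convex on $\interior\domain\varphi^*$ and essentially smooth; by Assumption~\ref{assump:general-v}, $\domain\varphi^* = \mathbb{V}$, so $\varphi^*$ is a finite, differentiable, strictly convex function on all of $\mathbb{V}$. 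Write $\ell(\nu) = \nabla\varphi(y) - t\nabla f(y) - t\mathcal{A}^\dag(\nu)$, so $-g(\nu;y) = \frac1t\varphi^*(\ell(\nu)) + \inp{b}{\nu}$, with $\ell$ affine and with linear part $-t\mathcal{A}^\dag$.

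For part (i), strict convexity: the linear term $\inp{b}{\nu}$ is irrelevant, so I reduce to asking when $\nu\mapsto\varphi^*(\ell(\nu))$ is strictly convex. If $\mathcal{A}^\dag$ is injective, then $\ell$ is an injective affine map, and precomposing a strictly convex function with an injective affine map preserves strict convexity — concretely, for $\nu_0\neq\nu_1$ we have $\ell(\nu_0)\neq\ell(\nu_1)$, and strict convexity of $\varphi^*$ along that chord gives the strict inequality. Conversely, if $\mathcal{A}^\dag$ is not injective, pick $0\neq\nu$ with $\mathcal{A}^\dag(\nu)=0$; then $\ell$ is constant along the line $\{\nu_0 + s\nu\}$, so $-g(\wc;y)$ restricted to that line is affine in $s$, hence not strictly convex. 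This direction is straightforward.

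For part (ii), coercivity: assume $\mathcal{A}^\dag$ injective and $b = \mathcal{A}(\bar x)$ for some $\bar x\in\interior\domain\varphi$. The natural move is to interpret $-g(\wc;y)$ as (up to constants) a Fenchel-type dual and show the associated primal is suitably well-behaved; more directly, I expect to use the conjugate-duality fact that coercivity of $-g(\wc;y)$ is equivalent to $b$ lying in the interior of the "attainable set" $\{\mathcal{A}(x): x\in\domain\varphi\}$, combined with $\domain\varphi$ being full-dimensional. Concretely: since $\varphi^*$ is Legendre with $\domain\varphi^* = \mathbb{V}$, essential smoothness forces $\varphi^*$ to be coercive in the directions where it is "steep," and one shows that along any ray $\nu = s w$ with $\norm{w}=1$, the term $\frac1t\varphi^*(\ell(sw))$ grows like $\varphi^*$ evaluated far in the direction $-t\mathcal{A}^\dag(w)$ (nonzero by injectivity), while the linear term contributes $s\inp{b}{w} = s\inp{\mathcal{A}(\bar x)}{w} = s\inp{\bar x}{\mathcal{A}^\dag(w)}$. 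Writing $u = \mathcal{A}^\dag(w)\neq 0$, the behavior of $\frac1t\varphi^*(c - t s u) + s\inp{\bar x}{u}$ as $s\to\infty$ is governed by the recession/support behavior of $\varphi^*$: since $\bar x\in\interior\domain\varphi$ and $\varphi^*$ is the conjugate of $\varphi$, the directional recession function of $\varphi^*$ at direction $-tu$ strictly dominates the linear rate $-t\inp{\bar x}{u}$, giving growth to $+\infty$ uniformly over the compact set of unit directions $w$. For the converse in (ii), if $\mathcal{A}^\dag$ is not injective we fall back to part (i) (an affine-along-a-line function is not coercive); and if $b\notin\{\mathcal{A}(x):x\in\interior\domain\varphi\}$, a separating-hyperplane argument produces a direction $w$ with $\inp{b}{w}\geq\sup\{\inp{\mathcal{A}(x)}{w} : x\in\domain\varphi\}$, along which $-g(sw;y)$ fails to grow, breaking coercivity.

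The main obstacle is the forward direction of part (ii): making precise the claim that $b$ being in the interior of the attainable set forces the conjugate term to outgrow the linear term uniformly over all unit directions. I expect the cleanest route is to identify $-g(\wc;y) + \text{const}$ with the objective of a Fenchel dual problem whose primal has a Slater-type point (namely $\bar x\in\interior\domain\varphi$ with $\mathcal{A}(\bar x)=b$), and then invoke the standard fact (e.g., from Rockafellar) that a closed proper convex function is coercive iff $0\in\interior\domain$ of its conjugate — here translating to $b\in\interior\{\mathcal{A}(x):x\in\domain\varphi\} = \mathcal{A}(\interior\domain\varphi)$, where the last equality uses that $\mathcal{A}$ is an open map onto its range modulo the kernel issue handled by injectivity of $\mathcal{A}^\dag$. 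I would state this conjugate-coercivity lemma explicitly and reduce part (ii) to it.
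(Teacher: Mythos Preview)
Your proposal is correct and follows essentially the same route as the paper. Part (i) is identical: strict convexity of $\varphi^*$ (from Fact~\ref{fact:legendre-i}) is preserved under precomposition with the affine map $\ell$ if and only if $\mathcal{A}^\dag$ is injective.

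For part (ii), both you and the paper ultimately rest on the same conjugate--coercivity fact, but the paper packages it more directly. Rather than computing the conjugate of $-g(\wc;y)$ or running a ray-by-ray recession argument, the paper absorbs the linear term $\inp{b}{\nu}$ back into the argument of $\varphi^*$: writing $b=\mathcal{A}(s)$ gives $\inp{b}{\nu}=\inp{s}{\mathcal{A}^\dag(\nu)}=-\tfrac{1}{t}\inp{s}{\ell(\nu)}+\text{const}$, so that up to an additive constant
\[
-g(\nu;y)=\tfrac{1}{t}\bigl(\varphi^*(\ell(\nu))-\inp{s}{\ell(\nu)}\bigr).
\]
Now one only needs two ingredients: the known fact (cited from Bauschke--Borwein) that $u\mapsto\varphi^*(u)-\inp{s}{u}$ is coercive if and only if $s\in\interior\domain\varphi$, and the elementary lemma that coercivity is preserved under precomposition with an affine map if and only if the linear part is injective. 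This handles both directions of the ``if and only if'' at once, so no separate separating-hyperplane argument for the converse is needed. Your sketch would certainly work, but this repackaging is what makes the paper's proof short.
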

\begin{proof}
    See Appendix~\ref{appdx:proof-dual}.
\end{proof}
Using this result, it is well known that existence of a solution to the dual problem follows from coercivity, and uniqueness of the solution follows from strict convexity~\cite[Proposition 3.1.1 and 3.2.1]{bertsekas2009convex}.

\section{Problem definition} \label{sec:rd}

Consider an $n$-dimensional input system $A$, an $m$-dimensional output system $B$, and a quantum state $\rho_A\in\mathcal{D}_n$ with purification $\rho_{AR}\in\mathcal{D}_{n^2}$. Let $\Delta \in \mathbb{H}^{mn}_+$ be a distortion matrix and $D\geq0$ be a maximum allowable distortion. The \emph{entanglement-assisted quantum rate-distortion function}~\cite{datta2012quantum,wilde2013quantum} is
\begin{subequations}\label{eqn:pre-qrd}
    \begin{align}
        R_q(D; \rho_A, \Delta) \quad \coloneqq \quad \minimize_{\mathcal{N} \in \Phi_{m,n}} \quad & I_q(\mathcal{N}; \rho_A) \\
        \subjto \quad & \inp{\Delta}{(\mathcal{N}\otimes\mathbb{I})(\rho_{AR})} \leq D, 
    \end{align}
\end{subequations}
where
\begin{equation}
    I_q(\mathcal{N}; \rho_A) \coloneqq S\divx{ (\mathcal{N}\otimes\mathbb{I})(\rho_{AR}) }{ \mathcal{N}(\rho_A) \otimes \rho_A }.
\end{equation}
is the \emph{quantum mutual information}. It follows from joint convexity of quantum relative entropy~\cite[Corollary 11.9.2]{wilde2017quantum} that~\eqref{eqn:pre-qrd} is a convex optimization problem.

A standard distortion measure used for the quantum rate-distortion function is to use the entanglement fidelity (see, e.g.,~\cite{barnum2000quantum,datta2012quantum,wilde2013quantum,khanian2021rate,khanian2023rate}). This corresponds to the channel where $A$ and $B$ are isomorphic, and the distortion matrix is $\Delta = \mathbb{I} - \rho_{AR}$. This distortion measure possesses symmetries which we will can take advantage of in Section~\ref{sec:symm}. We also briefly discuss the symmetries of other common distortion measures in Section~\ref{sec:other-dist}.


\begin{rem}\label{rem:degenerate}
    Without loss of generality, we will always assume that the input state of the quantum channel is full-rank, i.e., $\rho_A \succ 0$. This is because it is possible to construct equivalent optimization problems of a lower input dimension to avoid these degeneracies by discarding parts of the channel and distortion measures which correspond to degenerate inputs which have no effect on the problem. See Appendix~\ref{appdx:rd-equiv-degenerate} for more details.
\end{rem}


\subsection{Problem reformulation}

In this section, we summarize standard reformulations of the quantum rate-distortion problems that are more convenient for our purposes.

\paragraph{Change of variables}
We will make a change of variables where we optimize over the bipartite state $\sigma_{BR}\in \mathcal{D}_{mn}$, where $\sigma_{BR}=(\mathcal{N}\otimes\mathbb{I})(\rho_{AR})$. For distortion $D\geq0$, input state $\rho_A\in\mathcal{D}_n$ and distortion matrix $\Delta \in \mathbb{H}^{mn}_+$, this gives us
\begin{subequations}\label{eqn:new-qrd}
    \begin{align}
        R_q(D; \rho_A, \Delta) \quad = \quad \minimize_{\sigma_{BR} \in \mathcal{D}_{mn}} \quad & I_q(\sigma_{BR}; \rho_A) \\
        \subjto \hspace{0.5em} \quad & \tr_B(\sigma_{BR}) = \rho_A, \label{eqn:qrd-b}\\
                            \quad & \inp{\Delta}{\sigma_{BR}} \leq D, \label{eqn:qrd-c}
    \end{align}
\end{subequations}
where with some slight abuse of notation we redefine quantum mutual information as
\begin{equation}
    I_q(\sigma_{BR}; \rho_A) \coloneqq S\divx{ \sigma_{BR} }{ \tr_R(\sigma_{BR}) \otimes \rho_A }.
\end{equation}
Note that the apparent inconsistency in the quantum systems implied by the subscripts of~\eqref{eqn:qrd-b} is due to the fact that, as discussed in Section~\ref{sec:qit}, $A$ and $R$ are isomorphic systems, and that we have defined the purification in a way such that $\rho_R\coloneqq\tr_A(\rho_{AR}) = \rho_A$.

\begin{prop}\label{prop:rd-equiv}
    The quantum rate-distortion problems~\eqref{eqn:pre-qrd} and~\eqref{eqn:new-qrd} are equivalent, in the sense that the optimal values of the two problems are equal.
\end{prop}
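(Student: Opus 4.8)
The plan is to prove the two optimization problems~\eqref{eqn:pre-qrd} and~\eqref{eqn:new-qrd} have the same optimal value by exhibiting a correspondence between their feasible points that preserves the objective. The key observation is that the map $\mathcal{N}\mapsto (\mathcal{N}\otimes\mathbb{I})(\rho_{AR})=\sigma_{BR}$ is the bridge: I want to show (i) for every feasible $\mathcal{N}$ in~\eqref{eqn:pre-qrd}, the state $\sigma_{BR}=(\mathcal{N}\otimes\mathbb{I})(\rho_{AR})$ is feasible for~\eqref{eqn:new-qrd} with equal objective value, and (ii) conversely, for every feasible $\sigma_{BR}$ in~\eqref{eqn:new-qrd}, there is a channel $\mathcal{N}\in\Phi_{m,n}$ with $(\mathcal{N}\otimes\mathbb{I})(\rho_{AR})=\sigma_{BR}$ that is feasible for~\eqref{eqn:pre-qrd} with equal objective value. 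Together these give inequalities in both directions between the optimal values, hence equality.

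Direction (i) is essentially bookkeeping. If $\mathcal{N}\in\Phi_{m,n}$ is CPTP, then $\sigma_{BR}=(\mathcal{N}\otimes\mathbb{I})(\rho_{AR})\in\mathcal{D}_{mn}$ since tensoring with the identity channel preserves CPTP and $\rho_{AR}$ is a state. The objective values agree by the very definition: $I_q(\mathcal{N};\rho_A)=S\divx{(\mathcal{N}\otimes\mathbb{I})(\rho_{AR})}{\mathcal{N}(\rho_A)\otimes\rho_A}$, and one checks $\tr_R(\sigma_{BR})=\mathcal{N}(\tr_R(\rho_{AR}))=\mathcal{N}(\rho_A)$ using that $\tr_R$ commutes with $\mathcal{N}\otimes\mathbb{I}$ and that $\tr_R(\rho_{AR})=\rho_A$, so $I_q(\sigma_{BR};\rho_A)=S\divx{\sigma_{BR}}{\tr_R(\sigma_{BR})\otimes\rho_A}=I_q(\mathcal{N};\rho_A)$. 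The constraint~\eqref{eqn:qrd-b}, $\tr_B(\sigma_{BR})=\rho_A$, holds because $\tr_B(\mathcal{N}\otimes\mathbb{I})(\rho_{AR})=\tr_B(\rho_{AR})=\tr_R(\rho_{AR})=\rho_A$ (here the partial trace over $B$ is taken on the output of a trace-preserving channel, leaving the $R$ marginal of $\rho_{AR}$, which equals $\rho_A$). The distortion constraint~\eqref{eqn:qrd-c} is literally the same expression under the substitution.

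Direction (ii) is the main obstacle and requires an argument that any $\sigma_{BR}$ with $\tr_B(\sigma_{BR})=\rho_A$ arises from a channel acting on the $R$-half of the purification. The plan is to invoke the standard fact (a consequence of the Choi--Jamio{\l}kowski isomorphism, or equivalently the characterization in Definition~\ref{defn:choi} referenced in the text) that since $\rho_A\succ 0$ by Remark~\ref{rem:degenerate}, the purification $\rho_{AR}$ is a maximally-entangled-like state with full Schmidt rank, and the map sending a channel $\mathcal{N}\in\Phi_{m,n}$ to $(\mathcal{N}\otimes\mathbb{I})(\rho_{AR})$ is a bijection onto $\{\sigma_{BR}\in\mathcal{D}_{mn}:\tr_B(\sigma_{BR})=\rho_R=\rho_A\}$. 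Concretely, given feasible $\sigma_{BR}$, define $\mathcal{N}$ via its Choi matrix $C_{\mathcal{N}}=(\mathbb{I}_B\otimes\rho_A^{-1/2})\,\sigma_{BR}\,(\mathbb{I}_B\otimes\rho_A^{-1/2})$ (up to the conjugation conventions fixing~\eqref{eqn:purificaiton}); positivity of $\sigma_{BR}$ gives complete positivity of $\mathcal{N}$, and $\tr_B(\sigma_{BR})=\rho_A$ translates into the trace-preserving condition. One then verifies $(\mathcal{N}\otimes\mathbb{I})(\rho_{AR})=\sigma_{BR}$, after which the objective and the remaining constraint carry over exactly as in direction (i). I would present the Choi-matrix correspondence carefully (or cite where Definition~\ref{defn:choi} is established), since getting the normalization by $\rho_A^{-1/2}$ and the complex-conjugation convention in~\eqref{eqn:purificaiton} right is the only delicate point; everything else is substitution. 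Since the two feasible sets are in objective-preserving bijection, the minima coincide.
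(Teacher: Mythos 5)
Your proof is correct and takes essentially the same approach as the paper: the paper also exhibits an objective-preserving bijection between feasible points, given in block form as $\sigma_{BR}=\sum_{ij}\sqrt{\lambda_i\lambda_j}\,Y_{ij}\otimes v_iv_j^\dag$ versus $C_\mathcal{N}=\sum_{ij}(\lambda_i\lambda_j)^{-1/2}X_{ij}\otimes v_iv_j^\dag$, which is exactly your $C_\mathcal{N}=(\mathbb{I}_B\otimes\rho_A^{-1/2})\,\sigma_{BR}\,(\mathbb{I}_B\otimes\rho_A^{-1/2})$ written in coordinates (and no complex conjugation arises here, since the purification~\eqref{eqn:purificaiton} uses $v_i\otimes v_i$ rather than $v_i\otimes\overline{v_i}$). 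One small notational slip: the chain $\tr_B(\mathcal{N}\otimes\mathbb{I})(\rho_{AR})=\tr_B(\rho_{AR})$ is ill-typed since $\rho_{AR}$ has no $B$ factor — what you mean, and what is true by the TP property, is $\tr_B[(\mathcal{N}\otimes\mathbb{I})(\rho_{AR})]=\tr_A(\rho_{AR})=\rho_R=\rho_A$.
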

\begin{proof}
    See Appendix~\ref{appdx:rd-equiv-state}.
\end{proof}

As we made a linear change of variables, these reformulated problem is still convex. We refer the reader to~\cite[Section 6.2]{he2023mirror} for the gradient of the reformulated quantum mutual information.

\paragraph{Distortion constraint} 
For rate-distortion algorithms, we are usually interested in characterizing the entire rate-distortion curve rather than computing the rate for a single distortion. Therefore it is standard to parameterize the distortion constraint with its corresponding dual variable $\kappa\geq0$ instead of $D$ (see e.g.,~\cite{blahut1972computation}). In this case, the quantum rate-distortion problem for distortion dual variable $\kappa\geq0$, input state $\rho_A\in\mathcal{D}_n$, and distortion matrix $\Delta \in \mathbb{H}^{mn}_+$ becomes
\begin{subequations}
    \makeatletter
    \def\@currentlabel{QRD}\label{eqn:qrd}
    \makeatother
    \renewcommand{\theequation}{QRD.\alph{equation}}
    \begin{align}
        \tilde{R}_q(\kappa; \rho_A, \Delta) \quad = \quad \minimize_{\sigma_{BR} \in \mathcal{D}_{mn}} \quad & I_q(\sigma_{BR}; \rho_A) + \kappa \inp{\Delta}{\sigma_{BR}} \label{eqn:qrd-obj} \\
        \subjto \hspace{0.5em} \quad & \tr_B(\sigma_{BR}) = \rho_A, \label{eqn:qrd-pt}
    \end{align}
\end{subequations}
Note that as the distortion constraints are linear, the convexity properties of the objective function does not change with this reformulation. Additionally, we have omitted some additive constant terms in the objective to simplify the expression. Once a solution $\sigma_{BR}^*$ to~\eqref{eqn:qrd} is found for a given $\kappa\geq0$, it is straightforward to recover the corresponding optimal rate $R_q(D; \rho_A, \Delta)=I_q(\sigma_{BR}^*; \rho_A)$ for distortion $D=\inp{\Delta}{\sigma_{BR}^*}$.

\section{Symmetry reduction} \label{sec:symm}

A practical issue with solving the quantum rate-distortion problem~\eqref{eqn:qrd} is that the problem dimension scales quartically with the quantum system dimension. We will show that we can restrict this problem to a significantly smaller subspace by taking advantage of symmetry properties of the rate-distortion problem for certain distortion matrices. We will also show that for certain highly symmetric problems, we can obtain explicit expressions for the rate-distortion function. 

\subsection{Background}

We begin by introducing some preliminaries on group and representation theory. We refer the reader to~\cite{brocker2013representations,fulton2013representation} as references for these basic facts. Throughout this section we will denote the general linear group as $GL(\mathbb{V})$, which is the group of invertible linear transforms from $\mathbb{V}$ to $\mathbb{V}$ with the group operation being composition.
\begin{defn}[Representation]
    A \emph{representation} of a group $\mathcal{G}$ is a pair $(\mathbb{V}, \pi)$ where $\mathbb{V}$ is a vector space and $\pi:\mathcal{G}\rightarrow GL(\mathbb{V})$ is a group homomorphism.
\end{defn}
\begin{defn}[Isomorphism]
    Let $(\mathbb{V}, \pi)$ and $(\mathbb{V}', \tau)$ be representations of a group $\mathcal{G}$. If there exists an invertible linear map $T:\mathbb{V}\rightarrow\mathbb{V}'$ such that
    \begin{equation*}
        \tau(g) = T \circ \pi(g) \circ T^{-1},
    \end{equation*}
    for all $g\in\mathcal{G}$, then $\pi$ and $\tau$ are isomorphic, denoted $\pi \cong \tau$.
\end{defn}
\begin{defn}[Fixed-point subspace]
    The \emph{fixed point subspace} $\mathcal{V}_\pi$ associated with a representation $(\mathbb{V}, \pi)$ of a group $\mathcal{G}$ is
    \begin{equation*}
       \mathcal{V}_{\pi|\mathbb{V}} = \{x \in \mathbb{V} : \pi(g)(x)=x,~\forall g\in\mathcal{G}\}.
    \end{equation*}
    When the vector space $\mathbb{V}$ is clear from context, we use the abbreviated notation $\mathcal{V}_\pi \coloneqq \mathcal{V}_{\pi|\mathbb{V}}$.
\end{defn}
\begin{defn}[Group average]
    For a representation $(\mathbb{V}, \pi)$ of a compact group $\mathcal{G}$, the \emph{group average} is
    \begin{equation*}
        P_\pi = \int_{g\in \mathcal{G}} \pi(g)\,d\theta(g),
    \end{equation*}
    where $\theta$ is the normalized Haar measure on $\mathcal{G}$ (see e.g.,~\cite{bredon1972introduction}). The group average is the projector onto the fixed point subspace $\mathcal{V}_\pi$.
\end{defn}
\begin{lem}\label{lem:fix-pnt}
    Consider a representation $(\mathbb{V}, \pi)$ of a group $\mathcal{G}$. If a convex optimization problem of the form~\eqref{eqn:constr-min} is invariant under $\pi$, meaning
    \begin{subequations}
        \begin{align}
            &f(\pi(g)(x)) = f(x) \quad && \forall g\in\mathcal{G},~\forall x\in\mathcal{C} \label{eqn:invar-obj} \\
            \textrm{and} \quad &\pi(g)(x)\in\mathcal{C} && \forall g\in\mathcal{G},~\forall x\in\mathcal{C}, \label{eqn:invar-constr}
        \end{align}
    \end{subequations}  
    then there is an optimal point for~\eqref{eqn:constr-min} in $\mathcal{V}_\pi$.
\end{lem}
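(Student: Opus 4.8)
## Proof proposal

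The plan is to exploit convexity together with the averaging operator $P_\pi$, which is the projection onto the fixed-point subspace $\mathcal{V}_\pi$. Let $x^\star\in\mathcal{C}$ be an optimal point for~\eqref{eqn:constr-min}; such a point exists because $\mathcal{C}$ is compact and non-empty (Assumption~\ref{assump:general-iv}) and $f$ is continuous on $\interior\domain f$. The candidate replacement is $\bar{x}\coloneqq P_\pi(x^\star)=\int_{g\in\mathcal{G}}\pi(g)(x^\star)\,d\theta(g)$, which lies in $\mathcal{V}_\pi$ by definition of the group average. I would then argue in two steps: first that $\bar x$ is feasible, i.e.\ $\bar x\in\mathcal{C}$, and second that $f(\bar x)\le f(x^\star)$, so that $\bar x$ is also optimal.

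For feasibility: by~\eqref{eqn:invar-constr}, $\pi(g)(x^\star)\in\mathcal{C}$ for every $g\in\mathcal{G}$, and $\mathcal{C}=\{x\in\domain f:\mathcal{A}(x)=b\}$ is convex (it is the intersection of the convex set $\domain f$ with an affine subspace). Since $\bar x$ is a Haar-average — a limit of convex combinations — of the points $\pi(g)(x^\star)\in\mathcal{C}$, and $\mathcal{C}$ is closed and convex (here one should note $\domain f$ may be taken closed, or more carefully invoke that $\mathcal{C}$ is compact by Assumption~\ref{assump:general-iv}), we get $\bar x\in\mathcal{C}$. For optimality: $f$ is convex, so by Jensen's inequality applied to the probability measure $\theta$,
\begin{equation*}
    f(\bar x) = f\mleft(\int_{g\in\mathcal{G}} \pi(g)(x^\star)\,d\theta(g)\mright) \le \int_{g\in\mathcal{G}} f(\pi(g)(x^\star))\,d\theta(g) = \int_{g\in\mathcal{G}} f(x^\star)\,d\theta(g) = f(x^\star),
\end{equation*}
where the second-to-last equality is the invariance~\eqref{eqn:invar-obj} and the last uses that $\theta$ is normalized. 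Since $x^\star$ is optimal, equality holds throughout, so $\bar x\in\mathcal{V}_\pi$ is an optimal point for~\eqref{eqn:constr-min}.

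The main technical obstacle is making the measure-theoretic steps rigorous: justifying Jensen's inequality for the Haar integral of a possibly extended-real-valued convex function $f$ (one needs $f$ to be closed/lower semicontinuous and the integrand $g\mapsto \pi(g)(x^\star)$ to be continuous with compact image, so the integral stays inside $\domain f$), and justifying that the closed convex set $\mathcal{C}$ contains the barycenter $P_\pi(x^\star)$ of the orbit. Both are standard — the orbit $\{\pi(g)(x^\star):g\in\mathcal{G}\}$ is a continuous image of the compact group $\mathcal{G}$, hence compact, its closed convex hull is compact and contained in $\mathcal{C}$, and $P_\pi(x^\star)$ lies in that hull — but they are the points where care is needed. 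A slightly cleaner alternative that sidesteps measure theory, valid because $\mathbb{V}$ is finite-dimensional: the closed convex hull of the compact orbit is contained in $\mathcal{C}$ by~\eqref{eqn:invar-constr} and convexity, $P_\pi$ maps $\mathbb{V}$ onto $\mathcal{V}_\pi$ and fixes $\mathcal{V}_\pi$ pointwise, and one checks $P_\pi(x^\star)$ lies in this hull (e.g.\ by Carathéodory after approximating the Haar integral by finite averages $\frac1N\sum_i \pi(g_i)(x^\star)$ and passing to the limit using compactness), with convexity of $f$ then giving $f(P_\pi(x^\star))\le f(x^\star)$ via the finite-average bound and lower semicontinuity.
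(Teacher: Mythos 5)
Your proof is correct and follows essentially the same approach as the paper: average the optimal point over the group using $P_\pi$, observe the average is feasible by~\eqref{eqn:invar-constr} and convexity of $\mathcal{C}$, then apply Jensen's inequality and~\eqref{eqn:invar-obj} to conclude $f(P_\pi(x^\star)) \le f(x^\star)$. The paper states the argument more tersely as the chain $f^* \le f(P_\pi(x^*)) \le \int f(\pi(g)(x^*))\,d\theta(g) = f^*$ without dwelling on the measure-theoretic caveats you raise, but the underlying idea is the same.
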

\begin{proof}
    Let $x^*$ be a solution to the convex optimization problem and $f^*$ be the optimal value. We can show that  
    \begin{align*}
        f^* \leq f \biggl( \int_{g\in \mathcal{G}} \pi(g)(x^*)\,d\theta(g) \biggr) \leq \int_{g\in \mathcal{G}} f(\pi(g)(x^*)))\,d\theta(g) = f^*.
    \end{align*}
    The first inequality holds as $P_\pi(x^*)\in\mathcal{C}$ from~\eqref{eqn:invar-constr}, and that by definition $f^*\leq f(x)$ for all $x\in\mathcal{C}$. The second inequality uses convexity of $f$, and the equality uses~\eqref{eqn:invar-obj}. Therefore, equality must hold throughout. This implies the projection of the solution onto the fixed-point subspace, $P_\pi(x^*)$, must also be a solution, as desired.
\end{proof}

We will also introduce characters of compact groups and well known results associated with them. For these facts, we will assume that $\mathbb{V}$ is a complex vector space, as some of these results do not hold in general for real vector spaces.

\begin{defn}[Character]
    Let $(\mathbb{V}, \pi)$ be a representation of a compact group $\mathcal{G}$. The \emph{character} $\chi_\pi:\mathcal{G}\rightarrow \mathbb{C}$ of this representation is defined as
    \begin{equation*}
        \chi_\pi(g) \coloneqq \tr[ \pi(g) ]. 
    \end{equation*}
    Given characters $\chi,\phi$ and normalised Haar measure $\theta$ of $\mathcal{G}$, we define the inner product
    \begin{equation*}
        \inp{\chi}{\phi} \coloneqq \int_{g\in\mathcal{G}} \chi(g) \overline{\phi(g)}\,d\theta(g). \label{eqn:char-inp}
    \end{equation*}
\end{defn}
\begin{lem}[{{\cite[Proposition 4.10(v)]{brocker2013representations}}}]\label{lem:char-add}
    Let $(\mathbb{V}, \pi)$ and $(\mathbb{V}', \tau)$ be representations of a compact group $\mathcal{G}$. Then $\chi_{\pi\oplus\tau}=\chi_\pi + \chi_\tau$, where $\oplus$ denotes the direct sum between representations.
\end{lem}
\begin{lem}[{{\cite[Theorem 4.11(iii)]{brocker2013representations}}}]\label{lem:char-ortho}
    Characters of irreducible representations of a compact group $\mathcal{G}$ are orthonormal. That is for any irreducible representations $(\mathbb{V}, \pi)$ and $(\mathbb{V}', \tau)$ of $\mathcal{G}$
    \begin{equation*}
        \inp{\chi_\pi}{\chi_\tau} = \begin{cases}
                                1, \quad &\textrm{if } \pi\cong\tau,\\
                                0, &\textrm{if } \pi\ncong\tau.\\
                            \end{cases}
    \end{equation*}
\end{lem}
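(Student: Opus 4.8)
The plan is to realize $\inp{\chi_\pi}{\chi_\tau}$ as the dimension of a space of intertwining operators between $\tau$ and $\pi$, and then to appeal to Schur's lemma. Given the irreducible representations $(\mathbb{V},\pi)$ and $(\mathbb{V}',\tau)$, I would form $\mathbb{W}\coloneqq\mathrm{Hom}(\mathbb{V}',\mathbb{V})$ together with the representation $\sigma:\mathcal{G}\to GL(\mathbb{W})$ defined by $\sigma(g)(F)\coloneqq\pi(g)\circ F\circ\tau(g)^{-1}$. One checks directly that $(\mathbb{W},\sigma)$ is a representation and that its fixed-point subspace $\mathcal{V}_\sigma$ is precisely the intertwiner space $\mathrm{Hom}_{\mathcal{G}}(\mathbb{V}',\mathbb{V})=\{F\in\mathbb{W}:\pi(g)F=F\tau(g)\ \forall g\in\mathcal{G}\}$.

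Next I would compute $\chi_\sigma$. A short computation (e.g.\ writing $\sigma(g)$ in matrix form acting on vectorized matrices, where it becomes a Kronecker product) gives $\chi_\sigma(g)=\tr[\pi(g)]\,\tr[\tau(g)^{-1}]=\chi_\pi(g)\,\chi_\tau(g^{-1})$; since $\mathcal{G}$ is compact the eigenvalues of $\tau(g)$ are unimodular, so $\chi_\tau(g^{-1})=\overline{\chi_\tau(g)}$ and hence $\chi_\sigma(g)=\chi_\pi(g)\,\overline{\chi_\tau(g)}$. Integrating against the normalized Haar measure, and using that the group average $P_\sigma$ is the projector onto $\mathcal{V}_\sigma$ (so that $\tr[P_\sigma]=\dim_{\mathbb{C}}\mathcal{V}_\sigma$),
\begin{equation*}
    \inp{\chi_\pi}{\chi_\tau}=\int_{\mathcal{G}}\chi_\sigma(g)\,d\theta(g)=\tr\Bigl[\int_{\mathcal{G}}\sigma(g)\,d\theta(g)\Bigr]=\tr[P_\sigma]=\dim_{\mathbb{C}}\mathrm{Hom}_{\mathcal{G}}(\mathbb{V}',\mathbb{V}).
\end{equation*}

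It remains to evaluate this dimension via Schur's lemma, which is not among the facts quoted in the excerpt, so I would prove it in passing. If $F\in\mathrm{Hom}_{\mathcal{G}}(\mathbb{V}',\mathbb{V})$ is nonzero, then $\ker F\subseteq\mathbb{V}'$ and $\image F\subseteq\mathbb{V}$ are subrepresentations, so irreducibility forces $\ker F=\{0\}$ and $\image F=\mathbb{V}$, i.e.\ $F$ is an isomorphism. Hence if $\pi\ncong\tau$ there are no nonzero intertwiners and $\inp{\chi_\pi}{\chi_\tau}=0$. If $\pi\cong\tau$, composing with a fixed isomorphism identifies $\mathrm{Hom}_{\mathcal{G}}(\mathbb{V}',\mathbb{V})$ with $\mathrm{End}_{\mathcal{G}}(\mathbb{V})$; for $F\in\mathrm{End}_{\mathcal{G}}(\mathbb{V})$ an eigenvalue $\lambda\in\mathbb{C}$ exists because $\mathbb{V}$ is a nonzero complex vector space, and $F-\lambda\mathbb{I}$ is an intertwiner with nontrivial kernel, hence $0$; so $\mathrm{End}_{\mathcal{G}}(\mathbb{V})=\mathbb{C}\mathbb{I}$ is one-dimensional and $\inp{\chi_\pi}{\chi_\tau}=1$.

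The main obstacle is the combination of the character computation with Schur's lemma: one must correctly pin down $\chi_\sigma(g)=\chi_\pi(g)\overline{\chi_\tau(g)}$, which rests on the eigenvalues of $\tau(g)$ being unimodular for compact $\mathcal{G}$ (equivalently, on unitarizability of the representation), and then supply the short Schur argument since it is not quoted in the excerpt. The remaining pieces — that $\sigma$ is a representation, that $\mathcal{V}_\sigma$ equals the intertwiner space, and the interchange of trace and integral — are routine bookkeeping.
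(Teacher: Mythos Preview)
Your argument is correct and is essentially the standard textbook proof of the orthogonality relations via Schur's lemma. The paper itself does not prove this lemma; it simply cites it as \cite[Theorem 4.11(iii)]{brocker2013representations}, so there is no paper-proof to compare against beyond noting that your approach is the same one found in the cited reference.
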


\begin{lem}\label{lem:char-dim}
    Given a complex representation $(\mathbb{C}^n, \tau)$ of a compact group $\mathcal{G}$, define a representation $(\mathbb{C}^{n\times n}, \pi)$ of $\mathcal{G}$ by
    \begin{equation}\label{eqn:rep-conj}
        \pi(g)(X) = \tau(g) X \tau(g)^\dag. 
    \end{equation}
    The dimension (over $\mathbb{C}$) of the fixed-point subspace $\mathcal{V}_\pi$ is
    \begin{align}\label{eqn:fps-dim}
        \dim_\mathbb{C} \mathcal{V}_\pi = \int_{g\in \mathcal{G}} \tr[\tau(g)]\,\overline{\tr[\tau(g)]} \,d\theta(g) = \inp{\chi_\tau}{\chi_\tau}.
    \end{align}
\end{lem}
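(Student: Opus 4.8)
The plan is to compute \(\dim_\mathbb{C}\mathcal{V}_\pi\) as the trace of the group average \(P_\pi = \int_{g\in\mathcal{G}}\pi(g)\,d\theta(g)\). Since \(P_\pi\) is a projector onto \(\mathcal{V}_\pi\), it is idempotent, and the trace of an idempotent linear operator on a finite-dimensional space equals the dimension of its range; hence \(\dim_\mathbb{C}\mathcal{V}_\pi = \tr[P_\pi]\). Because the trace is linear and commutes with the (entrywise-defined) Haar integral over the compact group \(\mathcal{G}\), we get \(\tr[P_\pi] = \int_{g\in\mathcal{G}}\tr[\pi(g)]\,d\theta(g) = \int_{g\in\mathcal{G}}\chi_\pi(g)\,d\theta(g)\). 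It therefore remains only to evaluate the character \(\chi_\pi\) of the conjugation representation \(\pi\) in terms of \(\chi_\tau\).

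To do this, I would work in the basis \(\{e_i e_j^\dag\}_{i,j=1}^n\) of \(\mathbb{C}^{n\times n}\), which is orthonormal with respect to \(\inp{X}{Y} = \tr[XY^\dag]\), so that \(\chi_\pi(g) = \tr[\pi(g)] = \sum_{i,j}\inp{\pi(g)(e_ie_j^\dag)}{e_ie_j^\dag}\). For fixed \(g\), writing \(U = \tau(g)\), a direct computation using cyclicity of the trace gives
\[
\chi_\pi(g) = \sum_{i,j=1}^n \tr\bigl[U e_i e_j^\dag U^\dag e_j e_i^\dag\bigr] = \sum_{i,j=1}^n (e_i^\dag U e_i)\,(e_j^\dag U^\dag e_j) = \sum_{i,j=1}^n U_{ii}\,\overline{U_{jj}} = \tr[U]\,\overline{\tr[U]}.
\]
Equivalently, under the vectorization identification \(\mathbb{C}^{n\times n}\cong\mathbb{C}^n\otimes\mathbb{C}^n\) the operator \(\pi(g)\) corresponds to \(\tau(g)\otimes\overline{\tau(g)}\), whose character is \(\chi_\tau(g)\chi_{\overline\tau}(g) = \chi_\tau(g)\overline{\chi_\tau(g)}\); either way, \(\chi_\pi(g) = \chi_\tau(g)\overline{\chi_\tau(g)} = \tr[\tau(g)]\,\overline{\tr[\tau(g)]}\).

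Substituting this into the expression for \(\dim_\mathbb{C}\mathcal{V}_\pi\) gives \(\int_{g\in\mathcal{G}}\tr[\tau(g)]\,\overline{\tr[\tau(g)]}\,d\theta(g)\), which is exactly \(\inp{\chi_\tau}{\chi_\tau}\) by the definition of the character inner product, completing the proof. The only genuinely non-routine point is the opening step — that \(\dim_\mathbb{C}\mathcal{V}_\pi = \tr[P_\pi]\) — which rests on \(P_\pi\) being an idempotent whose range is \(\mathcal{V}_\pi\) (as recorded in the definition of the group average); the rest is the elementary character computation. As an alternative to the projector argument, one could decompose \(\pi\) into irreducibles and invoke Lemmas~\ref{lem:char-add} and~\ref{lem:char-ortho}, noting that \(\dim_\mathbb{C}\mathcal{V}_\pi\) equals the multiplicity of the trivial representation in \(\pi\) and that \(\inp{\chi_\pi}{\chi_{\mathrm{triv}}} = \int_{g\in\mathcal{G}}\chi_\pi(g)\,d\theta(g)\); but the trace-of-the-average route is the most self-contained given the definitions already in place.
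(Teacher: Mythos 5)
Your proof is correct and follows essentially the same approach as the paper's: the paper's (very terse) proof invokes~\cite[Theorem 4.11(i)]{brocker2013representations} for the identity $\dim_\mathbb{C}\mathcal{V}_\pi = \int_\mathcal{G}\chi_\pi\,d\theta$ and then uses the vectorization and tensor-product trace identities to get $\chi_\pi = \chi_\tau\overline{\chi_\tau}$, which is exactly the second route you sketch (and your first, more elementary basis computation is just that same fact unwound). Your opening projector argument is a self-contained proof of the cited theorem, so the only real difference is that you prove both ingredients rather than citing them.
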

\begin{proof}
    This uses~\cite[Theorem 4.11(i)]{brocker2013representations} combined with the properties $\vecc(AXB^\dag)=(A \otimes \overline{B})\vecc(X)$ and $\tr[A\otimes B]=\tr[A]\tr[B]$.
\end{proof}

We next consider the analogue of Lemma~\ref{lem:char-dim} for real representations.
\begin{cor}\label{cor:reals-dim}
    Let $\mathcal{G}$ be a compact group, let $(\mathbb{C}^n,\tau)$ be a complex representation of $\mathcal{G}$, and let $(\mathbb{H}^n,\pi)$ be the real representation of $\mathcal{G}$ defined by $\pi(g)(X) = \tau(g)X\tau(g)^\dag$. Then the dimension (over $\mathbb{R}$) of the fixed point subspace is given by $\dim_{\mathbb{R}}(\mathcal{V}_{\pi}) = \langle \chi_\tau,\chi_\tau\rangle$.
\end{cor}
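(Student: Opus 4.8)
The plan is to reduce the real case to the complex case already handled in Lemma~\ref{lem:char-dim}. The key observation is that the complex representation $(\mathbb{C}^{n\times n}, \pi_{\mathbb{C}})$ defined by $\pi_{\mathbb{C}}(g)(X) = \tau(g) X \tau(g)^\dag$ restricts to a real representation on the real subspace $\mathbb{H}^n \subseteq \mathbb{C}^{n\times n}$ of Hermitian matrices, since $(\tau(g) X \tau(g)^\dag)^\dag = \tau(g) X^\dag \tau(g)^\dag$, so Hermiticity is preserved. Moreover $\mathbb{C}^{n\times n}$ decomposes as a real vector space into the direct sum $\mathbb{H}^n \oplus i\,\mathbb{H}^n$ (Hermitian plus skew-Hermitian parts), and this decomposition is $\pi_{\mathbb{C}}$-invariant. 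The map $X \mapsto iX$ is a real-linear isomorphism from $\mathbb{H}^n$ to $i\,\mathbb{H}^n$ that intertwines the two restricted representations, so they are isomorphic as real representations. Hence, viewing $\mathbb{C}^{n\times n}$ as a real vector space of dimension $2n^2$, we get $\mathcal{V}_{\pi_{\mathbb{C}}|\mathbb{C}^{n\times n}} = \mathcal{V}_{\pi|\mathbb{H}^n} \oplus i\,\mathcal{V}_{\pi|\mathbb{H}^n}$ (a fixed point of $\pi_{\mathbb{C}}$ on $\mathbb{C}^{n\times n}$ has Hermitian and skew-Hermitian parts each fixed), whence
\begin{equation*}
    \dim_{\mathbb{R}}\bigl(\mathcal{V}_{\pi_{\mathbb{C}}|\mathbb{C}^{n\times n}}\bigr) = 2\,\dim_{\mathbb{R}}(\mathcal{V}_\pi).
\end{equation*}

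The second ingredient is to relate $\dim_{\mathbb{R}}$ and $\dim_{\mathbb{C}}$ of the complex fixed-point subspace. Since $\mathcal{V}_{\pi_{\mathbb{C}}|\mathbb{C}^{n\times n}}$ is a $\mathbb{C}$-subspace of $\mathbb{C}^{n\times n}$ (the fixed-point condition $\pi_{\mathbb{C}}(g)(X) = X$ is $\mathbb{C}$-linear in $X$), we have the elementary relation $\dim_{\mathbb{R}}(\mathcal{V}_{\pi_{\mathbb{C}}}) = 2\,\dim_{\mathbb{C}}(\mathcal{V}_{\pi_{\mathbb{C}}})$. Combining this with the previous display gives $\dim_{\mathbb{R}}(\mathcal{V}_\pi) = \dim_{\mathbb{C}}(\mathcal{V}_{\pi_{\mathbb{C}}})$. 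Then Lemma~\ref{lem:char-dim}, applied to the complex representation $(\mathbb{C}^{n\times n}, \pi_{\mathbb{C}})$ built from $(\mathbb{C}^n, \tau)$, yields $\dim_{\mathbb{C}}(\mathcal{V}_{\pi_{\mathbb{C}}}) = \langle \chi_\tau, \chi_\tau \rangle$, and the corollary follows.

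I expect the main (admittedly mild) obstacle to be making precise the bookkeeping between the real and complex pictures: one must be careful that the fixed-point subspace in $\mathbb{C}^{n\times n}$ really does split along the Hermitian/skew-Hermitian decomposition, which rests on the fact that $\pi_{\mathbb{C}}(g)$ commutes with the conjugate-transpose involution, and that the intertwiner $X \mapsto iX$ genuinely is a morphism of \emph{real} (not complex) representations. An alternative, slightly slicker route avoiding the direct-sum argument would be to note that $\pi$ on $\mathbb{H}^n$ is exactly the realification of a complex representation whose complexification is $\pi_{\mathbb{C}}$ on $\mathbb{C}^{n\times n}$, but the hands-on Hermitian/skew-Hermitian splitting is the most transparent and self-contained, so that is what I would write up.
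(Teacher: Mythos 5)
Your proof is correct and follows essentially the same route as the paper: treat $\mathbb{C}^{n\times n}$ as a real representation, split it into the two isomorphic real subrepresentations on Hermitian and skew-Hermitian matrices, use $\dim_{\mathbb{R}} = 2\dim_{\mathbb{C}}$ to conclude $\dim_{\mathbb{R}}(\mathcal{V}_{\pi|\mathbb{H}^n}) = \dim_{\mathbb{C}}(\mathcal{V}_{\pi|\mathbb{C}^{n\times n}})$, and finish with Lemma~\ref{lem:char-dim}. The one small improvement over the paper's wording is that you directly exhibit the splitting $\mathcal{V}_{\pi|\mathbb{C}^{n\times n}} = \mathcal{V}_{\pi|\mathbb{H}^n}\oplus i\,\mathcal{V}_{\pi|\mathbb{H}^n}$ via the intertwiner $X\mapsto iX$, rather than appealing to identical decompositions into irreducibles, but the underlying argument is the same.
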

\begin{proof}
    We first recognize that by treating $\mathbb{C}^{n\times n}$ as a real vector space, $(\mathbb{C}^{n\times n}, \pi)$ decomposes into two isomorphic real subrepresentations given by the restriction to Hermitian and skew-Hermitian matrices, respectively. The decompositions of these isomorphic representations into irreducibles are, therefore, identical. It follows that the fixed point subspaces in each have the same real dimension. Therefore given that the fixed point subspace $\mathcal{V}_{\pi|\mathbb{C}^{n\times n}}$ has a dimension over the reals of $\dim_{\mathbb{R}}(\mathcal{V}_{\pi|\mathbb{C}^{n\times n}}) = 2 \dim_{\mathbb{C}}(\mathcal{V}_{\pi|\mathbb{C}^{n\times n}})$, restricting to the Hermitian space and using Lemma~\ref{lem:char-add} leaves us with the desired real dimension for $\mathcal{V}_{\pi|\mathbb{H}^{n}}$.
\end{proof}

\subsection{Entanglement fidelity distortion}\label{subsec:general-dist}

We now restrict our attention to the quantum rate-distortion problem with the entanglement fidelity distortion matrix $\Delta=\mathbb{I} - \rho_{AR}$, where $\rho_{AR}$ is the purification of the input state $\rho_A$. We will show that for this choice of distortion matrix, there are significant symmetries in the quantum rate-distortion problem~\eqref{eqn:qrd} which can be used to drastically reduce the problem dimension.

We will first introduce the following lemma that will be useful to prove the main result.

\begin{lem}\label{lem:p-trace-unitary}
    Consider an $m$-dimensional system $A$, an $n$-dimensional $B$, and a density matrix $\rho_{AB} \in \mathcal{D}_{mn}$ defined on the joint system $AB$. Let $T_A\in\mathbb{C}^{p\times m}$ be an arbitrary linear operator on $A$, and $U_B\in\mathbb{C}^{n\times n}$ be an arbitrary unitary operator on $B$. Then
    \begin{equation*}
        \tr_B((T_A \otimes U_B)\rho_{AB}(T_A \otimes U_B)^\dag) = T_A\tr_B(\rho_{AB})T_A^\dag.
    \end{equation*}
\end{lem}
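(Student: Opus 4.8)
The plan is to verify the identity by expanding both sides in terms of the Kraus-type representation of the partial trace. Since $U_B$ is unitary, the map $\rho_{AB}\mapsto (T_A\otimes U_B)\rho_{AB}(T_A\otimes U_B)^\dag$ factors as first conjugating by $\mathbb{I}_A\otimes U_B$ and then conjugating by $T_A\otimes\mathbb{I}_B$, so I would handle these two conjugations separately. The key observation is that the $B$-unitary is ``invisible'' to $\tr_B$, while the $A$-operator $T_A$ simply passes through $\tr_B$ on the left and right.

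First I would recall the concrete block-matrix description of $\tr_B$ from Section~\ref{sec:qit}: writing $\rho_{AB}$ as an $m\times m$ array of $n\times n$ blocks $X_{ij}$, we have $\tr_B(\rho_{AB})=\sum_{i}X_{ii}$ — wait, more carefully, here $A$ is $m$-dimensional and $B$ is $n$-dimensional, so $\tr_B(\rho_{AB})_{ij}=\tr[X_{ij}]$, an $m\times m$ matrix. Equivalently, using the standard basis $\{e_k\}$ of $\mathbb{C}^n$ for system $B$, one has $\tr_B(\rho_{AB})=\sum_{k=1}^n (\mathbb{I}_A\otimes e_k^\dag)\,\rho_{AB}\,(\mathbb{I}_A\otimes e_k)$. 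I would take this latter formula as the working definition.

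The main computation is then
\begin{align*}
    \tr_B\bigl((T_A\otimes U_B)\rho_{AB}(T_A\otimes U_B)^\dag\bigr)
    &= \sum_{k=1}^n (\mathbb{I}_A\otimes e_k^\dag)(T_A\otimes U_B)\,\rho_{AB}\,(T_A\otimes U_B)^\dag(\mathbb{I}_A\otimes e_k) \\
    &= \sum_{k=1}^n (T_A\otimes e_k^\dag U_B)\,\rho_{AB}\,(T_A^\dag\otimes U_B^\dag e_k) \\
    &= T_A\Bigl(\sum_{k=1}^n (\mathbb{I}_A\otimes e_k^\dag U_B)\,\rho_{AB}\,(\mathbb{I}_A\otimes U_B^\dag e_k)\Bigr)T_A^\dag,
\end{align*}
using the mixed-product property $(X\otimes Y)(Z\otimes W)=(XZ)\otimes(YW)$ and pulling the $T_A$ factors out of the sum (they do not depend on $k$). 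It remains to show the inner sum equals $\tr_B(\rho_{AB})$. Here I would use that $\{U_B^\dag e_k\}_{k=1}^n$ is again an orthonormal basis of $\mathbb{C}^n$ because $U_B$ is unitary, and that $\tr_B(\rho_{AB})=\sum_k (\mathbb{I}_A\otimes f_k^\dag)\rho_{AB}(\mathbb{I}_A\otimes f_k)$ for \emph{any} orthonormal basis $\{f_k\}$ — this basis-independence is a standard fact, provable by writing a change of basis as a unitary $W$ with $\sum_k W e_k e_k^\dag W^\dag = \mathbb{I}$ and absorbing it. Applying this with $f_k = U_B^\dag e_k$ gives the inner sum $=\tr_B(\rho_{AB})$, completing the proof.

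The only mildly delicate point is the basis-independence of the partial trace (equivalently, that $\sum_k (\mathbb{I}\otimes e_k^\dag U)\rho(\mathbb{I}\otimes U^\dag e_k)=\tr_B(\rho)$), so I would either cite it as a standard fact about the partial trace or give the one-line verification: for a product operator $\rho_{AB}=P\otimes Q$ it reduces to $\sum_k e_k^\dag U Q U^\dag e_k = \tr[UQU^\dag]=\tr[Q]$ by cyclicity of the trace, and the general case follows by linearity since every $\rho_{AB}$ is a linear combination of product operators. No other step presents any obstacle — the rest is just the mixed-product property of the Kronecker product and linearity.
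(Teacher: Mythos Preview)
Your argument is correct: the Kraus-type formula $\tr_B(\rho)=\sum_k(\mathbb{I}\otimes e_k^\dag)\rho(\mathbb{I}\otimes e_k)$, the mixed-product property, and basis-independence of the partial trace combine exactly as you describe, and the dimension bookkeeping (the $\mathbb{I}_A$ in the outer sum being $p\times p$ while in the inner sum it is $m\times m$) works out.

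However, the paper takes a different and somewhat cleaner route. Recall that in Section~\ref{sec:qit} the partial trace is \emph{defined} as the adjoint of $\sigma_A\mapsto\sigma_A\otimes\mathbb{I}_B$. The paper's proof simply tests both sides of the claimed identity against an arbitrary $\sigma_A$ in the inner product: using the adjoint relation twice and the identity $(T_A\otimes U_B)^\dag(\sigma_A\otimes\mathbb{I}_B)(T_A\otimes U_B)=(T_A^\dag\sigma_A T_A)\otimes(U_B^\dag U_B)=(T_A^\dag\sigma_A T_A)\otimes\mathbb{I}_B$, one gets $\inp{\sigma_A}{\text{LHS}}=\inp{\sigma_A}{\text{RHS}}$ for all $\sigma_A$, which finishes the proof in three lines. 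This avoids any basis choice and the separate appeal to basis-independence; the unitarity of $U_B$ enters in a single step as $U_B^\dag U_B=\mathbb{I}_B$. Your approach has the advantage of being fully explicit and coordinate-based, which some readers may find more concrete, but the duality argument is shorter and aligns directly with the paper's chosen definition of $\tr_B$.
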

\begin{proof}
    Recall that the partial trace is defined as the adjoint of the Kronecker product with the identity matrix. Therefore, for all $\sigma_A \in \mathcal{D}_n$ the following equalities must hold
    \begin{align*}
        \inp{\sigma_A}{\tr_B((T_A \otimes U_B)\rho_{AB}(T_A \otimes U_B)^\dag))} &= \inp{\sigma_A\otimes\mathbb{I}_B}{(T_A \otimes U_B)\rho_{AB}(T_A \otimes U_B)^\dag)}\\
        &= \inp{(T_A^\dag\sigma_AT_A)\otimes\mathbb{I}_B}{\rho_{AB}}\\
        &= \inp{\sigma_A}{T_A \tr_B(\rho_{AB}) T_A^\dag},
    \end{align*}
    where the second equality uses $U_B^\dag U_B=\mathbb{I}_B$. This produces the desired result.
\end{proof}

We now find a group action that leaves the quantum rate-distortion problem invariant, which we can use to characterize the corresponding fixed-point subspace in which a solution must lie. For simplicity of notation we assume the input state $\rho_A$ is diagonal for the following lemma.

\begin{lem}\label{lem:qrd-general}
    Consider a diagonal input state $\rho_A\in\mathcal{D}_n$. Let $\mathcal{G}$ be a subgroup of the group of $n\times n$ unitary matrices, such that $g \rho_A g^\dag = \rho_A$ for all $g\in\mathcal{G}$, and $(\mathbb{H}^{n^2}, \pi)$ be a representation of $\mathcal{G}$ such that
    \begin{align}\label{eqn:qrd-representation}
        \pi(g)(X) = U X U^\dag, \quad \forall X\in\mathbb{H}^{n^2}, \quad  \textrm{where}\quad U = g \otimes \overline{g}.
    \end{align}    
    Then the quantum rate-distortion problem~\eqref{eqn:qrd} parameterized by input state $\rho_A$ and distortion matrix $\Delta=\mathbb{I} - \rho_{AR}$, where $\rho_{AR}$ is the purification of $\rho_A$, is invariant under $\pi$.
\end{lem}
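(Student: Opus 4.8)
The plan is to verify the two conditions of Lemma~\ref{lem:fix-pnt} directly for the representation $\pi$: invariance of the objective~\eqref{eqn:qrd-obj} and invariance of the feasible set defined by~\eqref{eqn:qrd-pt} together with the implicit constraint $\sigma_{BR}\in\mathcal{D}_{mn}$. Write $U=g\otimes\overline{g}$ and note that since $g$ is unitary, so is $\overline{g}$, hence $U$ is unitary; conjugation by $U$ therefore preserves positive semidefiniteness and trace, so $\pi(g)(\sigma_{BR})\in\mathcal{D}_{mn}$ whenever $\sigma_{BR}\in\mathcal{D}_{mn}$. The key structural facts to assemble are: (i) the purification $\rho_{AR}=\psi\psi^\dag$ with $\psi=\sum_i\sqrt{\lambda_i}\,e_i\otimes e_i$ (using that $\rho_A$ is diagonal with entries $\lambda_i$) is a fixed point of $\pi$, i.e. $U\rho_{AR}U^\dag=\rho_{AR}$; and (ii) $U\,\mathbb{I}\,U^\dag=\mathbb{I}$. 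Both give that the distortion matrix $\Delta=\mathbb{I}-\rho_{AR}$ satisfies $U\Delta U^\dag=\Delta$, hence $\inp{\Delta}{\pi(g)(\sigma_{BR})}=\inp{U^\dag\Delta U}{\sigma_{BR}}=\inp{\Delta}{\sigma_{BR}}$, so the $\kappa\inp{\Delta}{\sigma_{BR}}$ term in the objective is $\pi$-invariant.

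For claim (i), compute $U\psi = (g\otimes\overline g)\sum_i\sqrt{\lambda_i}(e_i\otimes e_i) = \sum_i\sqrt{\lambda_i}\,(ge_i)\otimes(\overline g\,e_i)$; using $\overline g\,e_i=\overline{g\,e_i}$ and the identity $\sum_i v_i\otimes\overline{v_i}=\sum_i e_i\otimes e_i$ for any orthonormal basis $\{v_i\}$ (a standard property of the maximally entangled vector, invariant under $g\otimes\overline g$ up to the weights), one checks that $U\psi$ equals $\psi$ — but here one must be careful, since the $\lambda_i$ are generally distinct. The correct statement is that $g$ commutes with $\rho_A$ (as $g\rho_A g^\dag=\rho_A$), so $g$ is block-diagonal with respect to the eigenspaces of $\rho_A$, and on each eigenspace $\sqrt{\lambda_i}$ is constant; a short computation using $\sum_{i}(ge_i)\otimes(\overline g\,e_i)=\sum_i e_i\otimes e_i$ restricted to each eigenblock then gives $U\psi=\psi$, hence $U\rho_{AR}U^\dag=\rho_{AR}$. (Equivalently: $\rho_{AR}=(\sqrt{\rho_A}\otimes\mathbb{I})\,\Omega\,(\sqrt{\rho_A}\otimes\mathbb{I})$ where $\Omega=\psi_0\psi_0^\dag$ with $\psi_0=\sum_i e_i\otimes e_i$, and $(g\otimes\overline g)\Omega(g\otimes\overline g)^\dag=\Omega$, and $g\otimes\overline g$ commutes with $\sqrt{\rho_A}\otimes\mathbb{I}$ because $g$ commutes with $\rho_A$.)

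It remains to show that $I_q(\pi(g)(\sigma_{BR});\rho_A)=I_q(\sigma_{BR};\rho_A)$ and that the partial-trace constraint~\eqref{eqn:qrd-pt} is preserved. Recall $I_q(\sigma_{BR};\rho_A)=S\divx{\sigma_{BR}}{\tr_R(\sigma_{BR})\otimes\rho_A}$. The relative entropy $S\divx{\wc}{\wc}$ is unitarily invariant: $S\divx{W X W^\dag}{W Y W^\dag}=S\divx{X}{Y}$ for any unitary $W$. Apply this with $W=U=g\otimes\overline g$. The only thing to check is how the second argument transforms: we need $\tr_R\bigl(U\sigma_{BR}U^\dag\bigr)\otimes\rho_A = U\bigl(\tr_R(\sigma_{BR})\otimes\rho_A\bigr)U^\dag$. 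By Lemma~\ref{lem:p-trace-unitary} applied with the roles set so that $R$ is the traced-out system (with $U_R=\overline g$ unitary) and $B$ the surviving system (with $T_B=g$), we get $\tr_R(U\sigma_{BR}U^\dag)=g\,\tr_R(\sigma_{BR})\,g^\dag$; then $g\,\tr_R(\sigma_{BR})\,g^\dag\otimes\rho_A = (g\otimes\overline g)\bigl(\tr_R(\sigma_{BR})\otimes\rho_A\bigr)(g\otimes\overline g)^\dag$ since $\overline g\rho_A\overline g^\dag=\rho_A$ (as $g\rho_Ag^\dag=\rho_A$ and $\rho_A$ is real). This gives objective invariance. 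For the constraint, Lemma~\ref{lem:p-trace-unitary} again (now tracing out $B$, with $U_B=g$ unitary, $T_R=\overline g$) gives $\tr_B(U\sigma_{BR}U^\dag)=\overline g\,\tr_B(\sigma_{BR})\,\overline g^\dag=\overline g\,\rho_A\,\overline g^\dag=\rho_A$, so feasibility is preserved.

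The main obstacle I anticipate is the bookkeeping in claim (i) — pinning down exactly why $U\rho_{AR}U^\dag=\rho_{AR}$ despite the eigenvalue weights $\sqrt{\lambda_i}$, i.e. correctly using that $g$ commutes with $\rho_A$ so that the weights are constant on each block where $g$ can ``mix'' basis vectors; everything else reduces to the unitary invariance of relative entropy and two bookkeeping applications of Lemma~\ref{lem:p-trace-unitary} (one tracing out $R$, one tracing out $B$), plus the harmless observation that $\overline g$ is unitary and fixes the real diagonal matrix $\rho_A$.
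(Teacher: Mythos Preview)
Your proposal is correct and follows essentially the same route as the paper: verify invariance of the objective by combining unitary invariance of quantum relative entropy with Lemma~\ref{lem:p-trace-unitary} (once for $\tr_R$, once for $\tr_B$), and handle the distortion term by showing $\pi(g)(\rho_{AR})=\rho_{AR}$. The one place you differ is in that last step---the paper argues via eigenspace projectors $P_I$ being fixed by $g$, whereas your parenthetical factorization $\rho_{AR}=(\sqrt{\rho_A}\otimes\mathbb{I})\Omega(\sqrt{\rho_A}\otimes\mathbb{I})$ together with $(g\otimes\overline g)\Omega(g\otimes\overline g)^\dag=\Omega$ and $[g,\sqrt{\rho_A}]=0$ is a slightly cleaner and more conceptual alternative that avoids the index bookkeeping.
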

\begin{proof}
    We will show that the objective function and each constraint of problem~\eqref{eqn:qrd} is invariant under $\pi$. First, we recognize that since $\rho_A$ is Hermitian and diagonal, $\rho_A=\overline{\rho_A}$, and therefore $g \rho_A g^\dag = \rho_A$ implies $\overline{g} \rho_A \overline{g}^\dag = \rho_A$. Invariance of the objective function~\eqref{eqn:qrd-obj} can be shown by first showing quantum mutual information is invariant under $\pi$ as follows
    \begin{align*}
        I_q(\pi(g)(\sigma_{BR}); \rho_A) &= S\divx{ \pi(g)(\sigma_{BR}) }{ \tr_R(\pi(g)(\sigma_{BR})) \otimes \rho_A }\\
        &= S\divx{ U \sigma_{BR} U^\dag }{ g\tr_R(\sigma_{BR})g^\dag \otimes \overline{g}\rho_A\overline{g}^\dag }\\
        &= S\divx{ U \sigma_{BR} U^\dag }{ U (\tr_R(\sigma_{BR}) \otimes \rho_A) U^\dag }\\
        &= I_q(\sigma_{BR}; \rho_A),
    \end{align*}
    where the second equality uses Lemma~\ref{lem:p-trace-unitary} and $\overline{g} \rho_A \overline{g}^\dag = \rho_A$, and the third equality uses invariance of quantum relative entropy under unitary transforms. For the linear component of the objective arising from the distortion constraint, let us first assume $\rho_A$ can be expressed as
    \begin{equation*}
        \rho_A = \sum_{I=1}^r \sum_{i\in\mathcal{I}_I} \lambda_I e_ie_i^\top,
    \end{equation*}
    i.e., $\rho_A$ has $r$ distinct eigenvalues, and $\mathcal{I}_I$ is the set of indices corresponding to eigenvalue $\lambda_I$ for all $I=1,\ldots,r$. For any $I=1,\ldots,r$ and $i\in\mathcal{I}_I$, we have
    \begin{equation*}
        \rho_A e_i = g \rho_A g^\dag e_i = \lambda_i e_i \quad \Longleftrightarrow \quad \rho_A\,g^\dag e_i = \lambda_i\,g^\dag e_i.
    \end{equation*}
    This implies that $g^\dag e_i$ must be some vector in the subspace spanned by $\{ e_i \}_{i\in\mathcal{I}_I}$, and because $g^\dag$ is unitary, that $\{ g^\dag e_i \}_{i\in\mathcal{I}_I}$ forms an orthonormal basis of this subspace. Therefore
    \begin{equation*}
        \sum_{i\in\mathcal{I}_I} g^\dag e_ie_i^\top g = \sum_{i\in\mathcal{I}_I} e_ie_i^\top,
    \end{equation*}
    i.e., $g^\dag P_I g = P_I$ where $P_I$ is the orthogonal projector onto the eigenspace corresponding to eigenvalue $\lambda_I$. Using this, we can show
    \begin{align*}
        \pi(g^\dag)(\rho_{AR}) &= \sum_{ij}^n \sqrt{\lambda_i\lambda_j} (g^\dag e_ie_i^\top g) \otimes \overline{(g^\dag e_je_j^\top g)} \\
        &= \sum_{I=1}^r\sum_{J=1}^r \sqrt{\lambda_I\lambda_J} \biggl(\sum_{i\in\mathcal{I}_I} g^\dag e_ie_i^\top g \biggr) \otimes \overline{\biggl(\sum_{j\in\mathcal{I}_J} g^\dag e_je_j^\top g \biggr)}\\
        &= \sum_{I=1}^r\sum_{J=1}^r \sqrt{\lambda_I\lambda_J} \biggl(\sum_{i\in\mathcal{I}_I} e_ie_i^\top \biggr) \otimes \biggl(\sum_{j\in\mathcal{I}_J} e_je_j^\top \biggr)\\
        &= \rho_{AR}
    \end{align*}
    Combining this with the fact that $\inp{\Delta}{\pi(g)(\sigma_{BR})} = \inp{\pi(g^\dag)(\Delta)}{\sigma_{BR}}$ and that $\pi(g^\dag)(\mathbb{I})=\mathbb{I}$ shows that the linear component of the objective is invariant under $\pi$. 

    Next, we will show invariance of the feasible set. First, note that the set of density matrices is invariant under congruence by unitary transforms. Second, consider an arbitrary $\sigma_{BR}$ which satisfies~\eqref{eqn:qrd-pt}. Then we can show that
    \begin{equation*}
        \tr_B(\pi(g)(\sigma_{BR})) = \overline{g} \rho_A \overline{g}^\dag = \rho_A,
    \end{equation*}
    where the first equality uses Lemma~\ref{lem:p-trace-unitary}. This shows that the constraint set is invariant under $\pi$, and concludes the proof.
\end{proof}

The general idea motivating the following theorem is that we want to find an appropriate subgroup $\mathcal{G}$ that satisfies the condition $g \rho_A g^\dag = \rho_A$, which is sufficiently large so that we can restrict to as small of a fixed-point subspace as possible. In the following theorem, we characterize such a subgroup and its corresponding fixed-point subspace for an arbitrary purification.

\begin{thm}\label{cor:sym-space}
    Consider the quantum rate-distortion problem~\eqref{eqn:qrd} parameterized by input state $\rho_A\in\mathcal{D}_n$ with spectral decomposition $\rho_A=\sum_{i=1}^n\lambda_iv_iv_i^\dag$ and distortion matrix $\Delta=\mathbb{I} - \rho_{AR}$, where $\rho_{AR}$ is the purification of $\rho_A$. A solution to this problem exists in the subspace
    \begin{align}
       \mathcal{V}_\pi \coloneqq \biggl\{ \sum_{i \neq j}^n \alpha_{ij} v_iv_i^\dag \otimes v_jv_j^\dag + \sum_{ij}^n \beta_{ij} v_iv_j^\dag \otimes v_iv_j^\dag : \alpha_{ij} \in \mathbb{R}~\forall i\neq j,~\beta \in \mathbb{H}^n \biggr\}.\label{eqn:qrd-fps}%
    \end{align}
\end{thm}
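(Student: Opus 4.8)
The plan is to apply Lemma~\ref{lem:qrd-general} and Lemma~\ref{lem:fix-pnt} with a well-chosen subgroup $\mathcal{G}$, and then to explicitly compute the resulting fixed-point subspace. First I would reduce to the case where $\rho_A$ is diagonal: conjugating the whole problem by the unitary $W=[v_1\mid\cdots\mid v_n]$ that diagonalizes $\rho_A$ gives an equivalent rate-distortion instance (the objective is unitarily invariant and the constraint $\tr_B(\sigma_{BR})=\rho_A$ transforms covariantly), so it suffices to find the fixed-point subspace for $\rho_A=\diag(\lambda_1,\dots,\lambda_n)$ and then conjugate back by $W\otimes\overline{W}$, which converts the standard basis vectors $e_i$ into the eigenvectors $v_i$ as they appear in~\eqref{eqn:qrd-fps}.

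Next, working in the diagonal basis, I would choose $\mathcal{G}$ to be the group of diagonal unitaries commuting with $\rho_A$, i.e.\ block-diagonal with respect to the eigenspace decomposition. Concretely, write $\rho_A=\sum_{I=1}^r\lambda_I P_I$ with distinct eigenvalues $\lambda_I$ and eigenprojectors $P_I$; then let $\mathcal{G}=\{g=\bigoplus_{I=1}^r u_I : u_I\in U(|\mathcal{I}_I|)\}$, the product of unitary groups acting within each eigenspace. Every such $g$ satisfies $g\rho_A g^\dag=\rho_A$, so Lemma~\ref{lem:qrd-general} applies: the problem is invariant under the representation $\pi(g)(X)=(g\otimes\overline{g})X(g\otimes\overline{g})^\dag$. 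By Lemma~\ref{lem:fix-pnt}, a solution lies in $\mathcal{V}_\pi$.

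The main work is then to identify $\mathcal{V}_\pi$ explicitly. The representation $\tau=g\mapsto g\otimes\overline{g}$ on $\mathbb{C}^n\otimes\mathbb{C}^n$ decomposes according to the pairs $(I,J)$ of eigenspace labels: on the block $\mathcal{I}_I\times\mathcal{I}_J$ it acts as $u_I\otimes\overline{u_J}$. For $I\neq J$ this is (a multiple of) the ``standard $\otimes$ conjugate of standard'' representation of $U(|\mathcal{I}_I|)\times U(|\mathcal{I}_J|)$, which is irreducible and non-trivial, so the fixed-point subspace within each such block of $\mathbb{H}^{n^2}$ is spanned only by the ``diagonal'' part — this gives the terms $\sum_{i\neq j}\alpha_{ij}\,e_ie_i^\top\otimes e_je_j^\top$ (one real parameter $\alpha_{ij}$ per pair of distinct indices, since the invariants of $u\otimes\bar u$ acting on $p\times p$ matrices by $X\mapsto uX\bar u^\dag$... more carefully, the relevant invariant structure forces $e_ie_i^\top\otimes e_je_j^\top$ with a scalar). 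For $I=J$ the block $u_I\otimes\overline{u_I}$ acting by conjugation on $|\mathcal{I}_I|^2\times|\mathcal{I}_I|^2$ Hermitian matrices decomposes into the trivial representation (trace part) plus the adjoint; using Corollary~\ref{cor:reals-dim} and Schur orthogonality as in Lemma~\ref{lem:char-dim}, the fixed-point dimension within each $(I,J)$ block can be counted via $\langle\chi_\tau,\chi_\tau\rangle$, and the invariant matrices turn out to be precisely spanned by $\sum_{i,j}\beta_{ij}\,e_ie_j^\top\otimes e_ie_j^\top$ with $\beta$ Hermitian (the ``swap''-type invariant $\sum_{ij}e_ie_j^\top\otimes e_ie_j^\top = \mathrm{SWAP}$-like operator restricted to the block, the only invariant besides the identity-type term which, after restricting attention appropriately, is absorbed). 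Assembling the blocks and conjugating back by $W\otimes\overline W$ yields exactly~\eqref{eqn:qrd-fps}.

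The hard part will be the bookkeeping in this last step: correctly separating the $I=J$ and $I\neq J$ cases, verifying via character computations (Lemma~\ref{lem:char-dim}, Corollary~\ref{cor:reals-dim}) that the dimension count matches the claimed parametrization, and checking that the candidate spanning matrices in~\eqref{eqn:qrd-fps} are genuinely $\pi$-invariant and span the whole fixed-point space rather than a proper subspace. I would verify invariance directly — $\pi(g)$ sends $v_iv_i^\dag\otimes v_jv_j^\dag$ to itself and sends $\sum_{ij}\beta_{ij}v_iv_j^\dag\otimes v_iv_j^\dag$ to $\sum_{ij}\beta_{ij}(gv_i)(gv_j)^\dag\otimes\overline{(gv_i)(gv_j)^\dag}$, which, within each eigenspace-block where $g$ acts as a unitary $u$, reduces to replacing $\beta$ by $u\beta u^\dag$, still Hermitian, hence still in the set — and then confirm by the dimension count that nothing else is fixed. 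One subtlety worth flagging: when $\rho_A$ has repeated eigenvalues the group $\mathcal{G}$ is larger (genuine non-abelian unitary factors), which is exactly what kills the off-diagonal $\alpha$ terms $v_iv_j^\dag\otimes v_jv_i^\dag$ type contributions across a degenerate block and forces the clean form above; the case of all-distinct eigenvalues (abelian $\mathcal{G}$) recovers the same $\mathcal{V}_\pi$ since those extra terms are already absent there.
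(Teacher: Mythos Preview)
Your approach and the paper's share the same skeleton---reduce to diagonal $\rho_A$, invoke Lemma~\ref{lem:qrd-general} and Lemma~\ref{lem:fix-pnt}, then match the candidate subspace by a dimension count via Corollary~\ref{cor:reals-dim}---but your choice of group is problematic. The paper does \emph{not} take the full commutant $\prod_I U(|\mathcal{I}_I|)$; it uses the finite abelian group of diagonal matrices $\diag(z_1,\dots,z_n)$ with $z_i\in\{\pm1,\pm\sqrt{-1}\}$, which commutes with \emph{every} diagonal $\rho_A$ regardless of eigenvalue multiplicity. With this group the representation $\tau(g)=g\otimes\bar g$ on $\mathbb{C}^{n^2}$ decomposes into one-dimensional subrepresentations $\tau_{ij}$; the $\tau_{ii}$ are all trivial (hence mutually isomorphic) and the $\tau_{ij}$ for $i\neq j$ are pairwise non-isomorphic, which yields $\dim_{\mathbb{R}}\mathcal{V}_\pi=2n^2-n$ uniformly in $\rho_A$.

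Your larger group gives a strictly smaller fixed-point subspace whenever $\rho_A$ has a repeated eigenvalue. Concretely, for $\rho_A=\mathbb{I}/n$ your group is all of $U(n)$, and the fixed-point subspace collapses to the two-dimensional $\{a\mathbb{I}+b\rho_{AR}:a,b\in\mathbb{R}\}$ of Corollary~\ref{cor:qrd-maxmix-fps}---not the $(2n^2-n)$-dimensional $\mathcal{V}_\pi$ you claim to recover. Your direct-verification step is wrong in this regime as well: $\pi(g)(v_iv_i^\dag\otimes v_jv_j^\dag)=(gv_i)(gv_i)^\dag\otimes(\bar gv_j)(\bar gv_j)^\dag$ is not $v_iv_i^\dag\otimes v_jv_j^\dag$ once $g$ acts non-diagonally within a degenerate eigenspace, so the individual $\alpha_{ij}$ terms in~\eqref{eqn:qrd-fps} are not $\pi$-invariant under your group. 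While establishing that a solution lies in some smaller subspace would still imply the theorem, your bookkeeping never identifies that smaller subspace; it erroneously asserts the answer is $\mathcal{V}_\pi$. The fix is simple and makes the argument cleaner: use the diagonal group (either the paper's finite version or the full diagonal torus), which already contains enough symmetry to cut down to $\mathcal{V}_\pi$ but not so much as to shrink it further.
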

\begin{proof}
    We first consider the special case when $\rho_A$ is diagonal, i.e., $v_i=e_i$ for all $i=1,\ldots,n$. For any of these diagonal inputs, we can show that the unitary subgroup $\mathcal{G}$ consisting of elements
    \begin{equation}\label{eqn:proof-sym-space-a}
        \mathcal{G} = \biggl\{ \sum_{i=1}^n z_i e_ie_i^\top : z_i\in\{\pm1, \pm\sqrt{-1}\}~\forall i  \biggr\},
    \end{equation}
    satisfies $g \rho_A g^\dag=\rho_A$ for all $g\in\mathcal{G}$, and therefore Lemma~\ref{lem:qrd-general} tells us that the quantum rate-distortion problem is invariant under representation~\eqref{eqn:qrd-representation} of this subgroup. It follows from Lemma~\ref{lem:fix-pnt} that a solution must live in the fixed-point subspace of this representation. To characterise this subspace, let us first consider representation $(\mathbb{C}^{n^2}, \tau)$ of the group $\mathcal{G}$ such that $\tau(g)(x) = (g \otimes \overline{g})x$. We can show that $\tau$ can be decomposed into the following subrepresentations $(V_{ij}, \tau_{ij})$ such that
    \begin{align*}
        &V_{ij} = \{ x (e_i \otimes e_j) : x\in\mathbb{C} \} \\
        \textrm{and} \quad &\tau_{ij}(g)(x) = z_i \overline{z_j} (e_ie_i^\top \otimes e_je_j^\top), \quad \forall i,j=1,\ldots,n.
    \end{align*}
    where we follow the notation of~\eqref{eqn:proof-sym-space-a} to define $z_i$. We recognize that the subrepresentations $\tau_{ii}$ for all $i=1,\ldots,n$ are just the identity map on a one-dimensional subspace, and are therefore all isomorphic to each other. If $i\neq j$, then the $\tau_{ij}$ are mutually non-isomorphic irreducible representations. Therefore, using Corollary~\ref{cor:reals-dim} we can compute the dimension over the reals of the fixed point subspace $\mathcal{V}_\pi$ as
    \begin{equation*}
        \dim_{\mathbb{R}}\mathcal{V}_\pi = \inp{\chi_\tau}{\chi_\tau} = \sum_{ijkl}^n \inp{\chi_{ij}}{\chi_{kl}} = \sum_{ij}^n \inp{\chi_{ii}}{\chi_{jj}} + \sum_{i\neq j}^n \inp{\chi_{ij}}{\chi_{ij}} = 2n^2 - n.
    \end{equation*}
    where $\chi_{ij}$ is the character of subrepresentation $\tau_{ij}$, and the first and second equalities use Lemmas~\ref{lem:char-add} and~\ref{lem:char-ortho} respectively. Verifying that the proposed subspace~\eqref{eqn:qrd-fps} is invariant under representation~\eqref{eqn:qrd-representation} of group $\mathcal{G}$ and is of the correct dimension gives the desired result for any diagonal input $\rho_A$. This result generalizes to non-diagonal inputs by using the fact that we can use a change-of-basis to show that the quantum rate-distortion problem for any non-diagonal input state is equivalent to a quantum rate-distortion problem with a diagonal input state. See Appendix~\ref{appdx:rd-equiv-basis} for more details.
\end{proof}

We also introduce the following property of this subspace which can be shown from a straightforward calculation, and will be useful in some following proofs.

\begin{cor}\label{cor:subspace-ptrace}
    Let $\sigma_{BR} \in \mathcal{V}_\pi$ as defined in~\eqref{eqn:qrd-fps}. The partial traces of $\sigma_{BR}$ with respect to $R$ and $B$ satisfy
    \begin{equation*}
        \tr_R(\sigma_{BR}) = \sum_{i=1}^n x_i v_iv_i^\dag \quad \textrm{and} \quad \tr_B(\sigma_{BR}) = \sum_{i=1}^n y_i v_iv_i^\dag,
    \end{equation*}
    for some $x,y\in\mathbb{R}^n$ (i.e., they are simultaneously diagonalizable with the input state corresponding to the fixed point subspace).
\end{cor}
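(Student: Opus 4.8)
The plan is to prove the corollary by a direct term-by-term expansion, so the proof is genuinely a ``straightforward calculation'' as advertised. I would start from a generic element $\sigma_{BR}\in\mathcal{V}_\pi$ written in the form~\eqref{eqn:qrd-fps}, namely $\sigma_{BR}=\sum_{i\neq j}\alpha_{ij}\,v_iv_i^\dag\otimes v_jv_j^\dag+\sum_{ij}\beta_{ij}\,v_iv_j^\dag\otimes v_iv_j^\dag$, and compute $\tr_R(\sigma_{BR})$ and $\tr_B(\sigma_{BR})$ using linearity of the partial trace. The only ingredient is that $\{v_k\}_{k=1}^n$ is orthonormal: evaluating the partial traces in this basis (rather than the standard one), any rank-one tensor term satisfies $\tr_R\bigl((v_av_b^\dag)\otimes(v_cv_d^\dag)\bigr)=\tr[v_cv_d^\dag]\,v_av_b^\dag=\delta_{cd}\,v_av_b^\dag$ and $\tr_B\bigl((v_av_b^\dag)\otimes(v_cv_d^\dag)\bigr)=\tr[v_av_b^\dag]\,v_cv_d^\dag=\delta_{ab}\,v_cv_d^\dag$.

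Applying this to the two sums: each term $v_iv_i^\dag\otimes v_jv_j^\dag$ of the first sum contributes $\alpha_{ij}\,v_iv_i^\dag$ to $\tr_R(\sigma_{BR})$ and $\alpha_{ij}\,v_jv_j^\dag$ to $\tr_B(\sigma_{BR})$, while each term $v_iv_j^\dag\otimes v_iv_j^\dag$ of the second sum contributes $\beta_{ij}\delta_{ij}\,v_iv_j^\dag$ to both, so only the diagonal entries $\beta_{ii}$ survive. Collecting, I would obtain $\tr_R(\sigma_{BR})=\sum_i x_i\,v_iv_i^\dag$ with $x_i=\beta_{ii}+\sum_{j\neq i}\alpha_{ij}$ and $\tr_B(\sigma_{BR})=\sum_i y_i\,v_iv_i^\dag$ with $y_i=\beta_{ii}+\sum_{j\neq i}\alpha_{ji}$. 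Both marginals are therefore diagonal in the eigenbasis $\{v_i\}$ of $\rho_A$, which is exactly the asserted simultaneous diagonalizability, and $x,y\in\mathbb{R}^n$ because $\alpha_{ij}\in\mathbb{R}$ by definition of $\mathcal{V}_\pi$ and $\beta\in\mathbb{H}^n$ forces $\beta_{ii}\in\mathbb{R}$.

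There is no real obstacle here. The two points that need a moment of care are: choosing the eigenbasis of $\rho_A$ in which to carry out the partial trace, so that the two tensor factors are themselves basis elements and the traces $\tr[v_av_b^\dag]=\delta_{ab}$ collapse cleanly; and observing that the off-diagonal part of $\beta$ is annihilated (it is paired with off-diagonal rank-one factors $v_iv_j^\dag$ whose trace vanishes), so that together with the first sum, whose tensor factors are already of the form $v_iv_i^\dag$, both marginals come out diagonal.
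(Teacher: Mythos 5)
Your calculation is correct and is precisely the "straightforward calculation" the paper alludes to without spelling out: expand $\sigma_{BR}$ in the eigenbasis $\{v_i\}$ of $\rho_A$, apply $\tr_R\bigl((v_av_b^\dag)\otimes(v_cv_d^\dag)\bigr)=\delta_{cd}\,v_av_b^\dag$ and $\tr_B\bigl((v_av_b^\dag)\otimes(v_cv_d^\dag)\bigr)=\delta_{ab}\,v_cv_d^\dag$ term by term, and observe that the off-diagonal $\beta_{ij}$ are killed while the $\alpha_{ij}$ terms already have diagonal tensor factors. The resulting expressions $x_i=\beta_{ii}+\sum_{j\neq i}\alpha_{ij}$ and $y_i=\beta_{ii}+\sum_{j\neq i}\alpha_{ji}$ are real because $\alpha_{ij}\in\mathbb{R}$ and $\beta\in\mathbb{H}^n$, exactly as you say; nothing is missing.
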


The implications of this symmetry reduction are as follows. Originally, the primal variable $\sigma_{BR}$ was defined on the vector space $\mathbb{H}^{n^2}$, which has a dimension over the reals of $n^4$. By restricting to the fixed-point subspace $\mathcal{V}_\pi$ as defined in~\eqref{eqn:qrd-fps}, this dimension over the reals reduces to $2n^2-n$. Moreover, following from Corollary~\ref{cor:subspace-ptrace}, the partial trace constraint~\eqref{eqn:qrd-pt} is reduced from representing $n^2$ equations to just $n$ equations. This will be particularly useful for the dualized algorithm we propose in Section~\ref{sec:alg}.

We also recognize that the subspace~\eqref{eqn:qrd-fps} is isomorphic to a block diagonal matrix with $n^2-n$ blocks of size $1\times 1$ and one block of size $n\times n$. Without loss of generality, we can change the basis of the data of the problem instance to be in the standard basis, i.e., $\rho_A$ is diagonal, meaning that in practice we can implement the subspace using sparse matrix representations. The block diagonal structure also allows us to perform eigendecompositions, and therefore matrix logarithms and exponentials, very efficiently.

\subsection{Uniform input state} \label{sec:uniform}

Let us continue to consider the distortion matrix $\Delta=\mathbb{I} - \rho_{AR}$. An interesting special case is when the input state is the maximally mixed state $\rho_A=\mathbb{I}/n$. It turns out that for this input state, there exist additional symmetries which further simplify the problem, and which ultimately allow us to obtain explicit expressions for the rate-distortion function.

\begin{cor}\label{cor:qrd-maxmix-fps}
    A solution to~\eqref{eqn:qrd} parameterized by the maximally mixed input state $\rho_A=\mathbb{I}/n$ and distortion matrix $\Delta=\mathbb{I} - \rho_{AR}$, where $\rho_{AR}$ is the purification of $\rho_A$, exists in the subspace
    \begin{equation}\label{eqn:maxmix-fps}
        \mathcal{V}_\pi = \{ a\mathbb{I} + b \rho_{AR} : a,b\in\mathbb{R} \}.
    \end{equation}
\end{cor}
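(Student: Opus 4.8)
The plan is to obtain this as a direct consequence of Lemma~\ref{lem:qrd-general}, Lemma~\ref{lem:fix-pnt} and the dimension count in Corollary~\ref{cor:reals-dim}, applied with the \emph{largest} available symmetry group. Since $\rho_A=\mathbb{I}/n$ commutes with every unitary, the condition $g\rho_Ag^\dag=\rho_A$ holds for all $g$ in the full unitary group $\mathcal{G}=U(n)$. Lemma~\ref{lem:qrd-general} then applies and shows that \eqref{eqn:qrd} with input $\rho_A=\mathbb{I}/n$ and distortion matrix $\Delta=\mathbb{I}-\rho_{AR}$ is invariant under the representation $\pi(g)(X)=(g\otimes\overline{g})X(g\otimes\overline{g})^\dag$ of $U(n)$ on $\mathbb{H}^{n^2}$; by Lemma~\ref{lem:fix-pnt} a minimizer then lies in the fixed-point subspace $\mathcal{V}_\pi$. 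It remains to identify $\mathcal{V}_\pi$ with $\{a\mathbb{I}+b\rho_{AR}:a,b\in\mathbb{R}\}$.

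First I would compute $\dim_{\mathbb{R}}\mathcal{V}_\pi$ via Corollary~\ref{cor:reals-dim} with the complex representation $\tau(g)=g\otimes\overline{g}$ on $\mathbb{C}^{n^2}$, which gives $\dim_{\mathbb{R}}\mathcal{V}_\pi=\langle\chi_\tau,\chi_\tau\rangle$. Since $\chi_\tau(g)=\tr[g\otimes\overline g]=|\tr g|^2$, this equals $\int_{U(n)}|\tr g|^4\,d\theta(g)=2$ for $n\ge2$. Equivalently, under the identification of $\tau$ with the conjugation action $X\mapsto gXg^\dag$ on $\mathbb{C}^{n\times n}$ one decomposes $\tau\cong\mathbf{1}\oplus\mathrm{ad}$ into the line of scalar matrices and the adjoint representation on traceless matrices, which for $n\ge2$ is irreducible and non-isomorphic to $\mathbf 1$; then Lemmas~\ref{lem:char-add} and~\ref{lem:char-ortho} give $\langle\chi_\tau,\chi_\tau\rangle=1+1=2$. (The case $n=1$ is degenerate, with $\rho_{AR}=\mathbb{I}$, and is treated separately.)

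Next I would check the inclusion $\{a\mathbb{I}+b\rho_{AR}:a,b\in\mathbb{R}\}\subseteq\mathcal{V}_\pi$. The identity is fixed because $g\otimes\overline g$ is unitary, so $\pi(g)(\mathbb{I})=\mathbb{I}$. For $\rho_{AR}=\psi\psi^\dag$ with $\psi=\tfrac{1}{\sqrt n}\sum_i e_i\otimes e_i$, the vectorization identity $(g\otimes\overline g)\vecc(\mathbb{I})=\vecc(g\mathbb{I}g^\dag)=\vecc(\mathbb{I})$ shows $\psi$ is a fixed vector of $\tau$, hence $\pi(g)(\rho_{AR})=(g\otimes\overline g)\psi\psi^\dag(g\otimes\overline g)^\dag=\rho_{AR}$. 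Since $\mathbb{I}$ and $\rho_{AR}$ have different ranks for $n\ge2$, they are linearly independent over $\mathbb{R}$, so $\{a\mathbb{I}+b\rho_{AR}\}$ is a $2$-dimensional subspace contained in the $2$-dimensional space $\mathcal{V}_\pi$, and the two coincide. Combined with the first paragraph, this shows a solution to \eqref{eqn:qrd} lies in this subspace.

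I expect the dimension computation to be the crux: the fact that $g\otimes\overline g$ decomposes as trivial $\oplus$ adjoint with the adjoint part irreducible (equivalently, $\int_{U(n)}|\tr g|^4\,d\theta(g)=2$) is where the real content lies, whereas invariance is inherited verbatim from Lemma~\ref{lem:qrd-general} and the two explicit generators are checked by one-line computations. The only other care needed is to exclude or handle separately the trivial $n=1$ instance where $\mathbb{I}=\rho_{AR}$.
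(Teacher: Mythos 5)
Your proposal follows the paper's proof exactly: invoke Lemma~\ref{lem:qrd-general} with $\mathcal{G}=U(n)$ (since $\mathbb{I}/n$ commutes with all unitaries), apply Lemma~\ref{lem:fix-pnt}, compute $\dim_{\mathbb{R}}\mathcal{V}_\pi=\langle\chi_\tau,\chi_\tau\rangle=\int_{U(n)}\lvert\tr g\rvert^4\,d\theta(g)=2$ via Corollary~\ref{cor:reals-dim}, and check that the proposed two-dimensional subspace is invariant. The only difference is that you additionally offer a self-contained derivation of the dimension count by decomposing the conjugation action into the trivial and adjoint representations and using Lemmas~\ref{lem:char-add} and~\ref{lem:char-ortho}, whereas the paper simply cites~\cite{diaconis2001linear}; both are valid, and your explicit verification of the containment and the $n=1$ caveat are sound additions the paper leaves as "straightforward to verify."
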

\begin{proof}
    As $\rho_A$ is a multiple of the identity matrix, $g\rho_A g^\dag=\rho_A$ is satisfied for all $g\in U(n)$, where $U(n)$ is the entire group of $n\times n$ unitary matrices. Therefore Lemma~\ref{lem:qrd-general} tells us that the quantum rate-distortion problem is invariant under the representation~\eqref{eqn:qrd-representation} of $U(n)$. It follows from Lemma~\ref{lem:fix-pnt} that a solution must live in the fixed-point subspace of this representation. Using Corollary~\ref{cor:reals-dim}, the dimension over the reals of the fixed-point subspace $\mathcal{V}_\pi$ is given by
    \begin{align*}
        \dim_{\mathbb{R}} \mathcal{V}_\pi = \int_{U(n)} \tr[g \otimes \overline{g}]\,\overline{\tr[g \otimes \overline{g}]}  = \int_{U(n)} \tr[g]^2 \, \overline{\tr[g]^2} = 2,
    \end{align*}
    where the last equality comes from~\cite[Theorem 2.1(a)]{diaconis2001linear}. It is straightforward to verify that the proposed two-dimensional subspace~\eqref{eqn:maxmix-fps} is invariant under $\pi$, which concludes the proof.
\end{proof}

Given this result, we are able to obtain an explicit expression for the quantum rate-distortion function. In particular, we recover and generalize the result from~\cite[Theorem 10]{wilde2013quantum} to maximally mixed inputs of any dimension.
\begin{thm}\label{thm:qrd-maxmix-solution}
    The quantum rate-distortion function for the maximally mixed input state $\rho_A=\mathbb{I}/n$ and distortion matrix $\Delta=\mathbb{I} - \rho_{AR}$, where $\rho_{AR}$ is the purification of $\rho_A$, is
    \begin{equation}
        R_q(D; \rho_A, \Delta) = \begin{cases} 
            \log(n^2) - H \biggl( \biggl[1 - D, \underbrace{\frac{D}{n^2 - 1}, \ldots, \frac{D}{n^2 - 1}}_{(n^2-1) \textrm{ copies}} \biggr] \biggr), \quad &\textnormal{if } 0\leq D \leq \displaystyle 1-\frac{1}{n^2}\\
            0, &\textnormal{if } \displaystyle 1-\frac{1}{n^2} \leq D \leq 1.
        \end{cases}
    \end{equation}
    A channel that attains this rate for $0\leq D \leq 1-1/n^2$ is the depolarizing channel
    \begin{equation}
        \mathcal{N}(\rho) = \frac{D}{n^2-1}\mathbb{I} + \mleft(1 - \frac{n^2D}{n^2-1} \mright)\rho.
    \end{equation}

\end{thm}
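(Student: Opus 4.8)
The plan is to use Corollary~\ref{cor:qrd-maxmix-fps} to reduce the infinite-dimensional optimization~\eqref{eqn:qrd} to a two-parameter problem over $\mathcal{V}_\pi = \{a\mathbb{I} + b\rho_{AR} : a,b\in\mathbb{R}\}$, and then solve that low-dimensional problem explicitly. First I would parameterize a feasible $\sigma_{BR} = a\mathbb{I} + b\rho_{AR}$: since $\rho_{AR}$ is a rank-one projector (the purification of $\mathbb{I}/n$ is $\psi\psi^\dag$ with $\psi = \tfrac1{\sqrt n}\sum_i e_i\otimes e_i$), the matrix $\sigma_{BR}$ has eigenvalue $a+b$ with multiplicity $1$ and eigenvalue $a$ with multiplicity $n^2-1$. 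The constraints $\sigma_{BR}\succeq 0$ and $\tr[\sigma_{BR}] = 1$ force $a\geq 0$, $a+b\geq 0$, and $n^2 a + b = 1$; the partial-trace constraint $\tr_B(\sigma_{BR}) = \mathbb{I}/n$ is automatically satisfied for any such $a,b$ because $\tr_B(\mathbb{I}) = n\mathbb{I}$ and $\tr_B(\rho_{AR}) = \mathbb{I}/n$ (using that $\rho_R = \mathbb{I}/n$), which combined with $n^2a+b=1$ gives exactly $\mathbb{I}/n$. So the feasible set collapses to a single scalar: writing $D = \inp{\Delta}{\sigma_{BR}} = \tr[\sigma_{BR}] - \tr[\rho_{AR}\sigma_{BR}] = 1 - (a+b)$, i.e. $a+b = 1-D$, and then $a = D/(n^2-1)$. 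Thus the unique feasible state at distortion $D$ is $\sigma_{BR}(D)$ with spectrum $[\,1-D,\ \tfrac{D}{n^2-1},\ \ldots,\ \tfrac{D}{n^2-1}\,]$.

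Next I would evaluate the objective on this one-parameter family. The quantum mutual information is $I_q(\sigma_{BR};\rho_A) = S\divx{\sigma_{BR}}{\tr_R(\sigma_{BR})\otimes\rho_A}$. By Corollary~\ref{cor:subspace-ptrace}, or directly, $\tr_R(\sigma_{BR}(D)) = \mathbb{I}/n$ as well, so $\tr_R(\sigma_{BR})\otimes\rho_A = \mathbb{I}/n^2$. Hence $I_q = \tr[\sigma_{BR}\log\sigma_{BR}] - \tr[\sigma_{BR}\log(\mathbb{I}/n^2)] = -S(\sigma_{BR}) + \log(n^2)$, and $S(\sigma_{BR}(D)) = H\bigl([1-D, \tfrac{D}{n^2-1},\ldots,\tfrac{D}{n^2-1}]\bigr)$ is exactly the Shannon entropy of that distribution. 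This gives the claimed formula $R_q(D) = \log(n^2) - H(\cdots)$ on the range where the family is nonempty. The feasibility range: we need $a\geq 0$ (i.e. $D\geq 0$) and $a+b\geq 0$ (i.e. $D\leq 1$); but $R_q$ must also be monotone nonincreasing in $D$, and the formula's derivative vanishes precisely when the distribution is uniform, i.e. $1-D = \tfrac{D}{n^2-1}$, giving $D = 1-1/n^2$. For $D$ beyond this point one can take the same state as at $D = 1-1/n^2$ (the maximally mixed $\mathbb{I}/n^2$), which is feasible and gives $I_q = 0$; so $R_q(D) = 0$ for $1-1/n^2 \leq D \leq 1$. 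Since $R_q$ is a minimum it cannot be negative, and $0$ is attained, so this is tight.

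Finally I would identify the optimal channel. The optimal $\sigma_{BR}(D) = \tfrac{D}{n^2-1}\mathbb{I} + (1-D)\rho_{AR}$ should arise as $(\mathcal{N}\otimes\mathbb{I})(\rho_{AR})$ for the depolarizing channel $\mathcal{N}(\rho) = \tfrac{D}{n^2-1}\mathbb{I} + (1 - \tfrac{n^2 D}{n^2-1})\rho$; I would verify this by a direct computation: $\mathcal{N}$ is a convex combination of the completely depolarizing channel $\rho\mapsto \tr[\rho]\mathbb{I}/n$ (scaled) and the identity channel, it is CPTP for the stated parameter range (the mixing weight $n^2D/(n^2-1)$ lies in $[0,1]$ exactly when $0\leq D\leq 1-1/n^2$), and applying $\mathcal{N}\otimes\mathbb{I}$ to $\rho_{AR}$ and using $\tr_R(\rho_{AR}) = \mathbb{I}/n$ reproduces $\sigma_{BR}(D)$. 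The main obstacle is mostly bookkeeping rather than conceptual: carefully checking that all three constraints of~\eqref{eqn:qrd} (positive semidefiniteness, partial trace, and the distortion value) are consistently satisfied by the two-parameter family and that nothing is lost in the reduction — in particular confirming that the partial-trace constraint is genuinely free on $\mathcal{V}_\pi$ so that $D$ alone parameterizes feasibility — and then handling the boundary case $D = 1-1/n^2$ where the two pieces of the formula must agree and the channel degenerates to the fully depolarizing one.
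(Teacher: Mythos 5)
Your proof is correct and arrives at the same result, but it takes a genuinely different route from the paper. After reducing to $\mathcal{V}_\pi$, the paper works in the \emph{Lagrangian} formulation~\eqref{eqn:qrd}: using the unit-trace relation $b = 1-n^2a$ it finds the stationary point of the resulting univariate objective for a fixed dual parameter $\kappa$, obtains a closed-form $\sigma_{BR}^*(\kappa)$, computes the corresponding rate and distortion as explicit functions of $\kappa$, and then eliminates $\kappa$ to express the rate in terms of $D$. You instead work directly with the \emph{constrained} formulation~\eqref{eqn:new-qrd}: within $\mathcal{V}_\pi$ the unit-trace and partial-trace constraints collapse to a single one-dimensional condition, leaving a one-parameter family, and you pin the parameter down by activating the distortion constraint and reading the mutual information straight off the spectrum $[\,1-D,\,\tfrac{D}{n^2-1},\ldots\,]$. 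This avoids the Lagrangian/reparametrization detour and makes the appearance of the Shannon entropy in the formula transparent, at the price of having to justify separately that the distortion inequality is active at the optimum for $D \le 1-1/n^2$ (you do note this; the monotonicity argument is easiest phrased as: $H$ of the one-parameter spectrum increases with $D$ up to the uniform point, so the objective is decreasing there, and the Lagrangian approach gets this for free). One small formal gap: Corollary~\ref{cor:qrd-maxmix-fps} is stated for~\eqref{eqn:qrd}, whereas you apply it to~\eqref{eqn:new-qrd}; to be precise you should remark that the set $\{\sigma_{BR} : \inp{\Delta}{\sigma_{BR}} \le D\}$ is also invariant under the representation~\eqref{eqn:qrd-representation} (this is exactly the computation showing $\pi(g^\dag)(\rho_{AR}) = \rho_{AR}$ in the proof of Lemma~\ref{lem:qrd-general}), so Lemma~\ref{lem:fix-pnt} still applies and a solution of the constrained problem lies in $\mathcal{V}_\pi$.
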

\begin{proof}
    We will first find the solution to the modified quantum rate-distortion problem~\eqref{eqn:qrd} for the specified problem parameters, then use this solution to recover solutions to the original problem~\eqref{eqn:pre-qrd}. From Corollary~\ref{cor:qrd-maxmix-fps}, we can restrict our attention to consider quantum states of the form $\sigma_{BR}=a\mathbb{I} + b\rho_{AR}$ where $a,b\in\mathbb{R}$. Moreover, we have the relationship $b = 1 - n^2a$ as $\sigma_{BR}$ must have unit trace, and so the rate-distortion problem reduces to a one-dimensional problem. Therefore, the solution to~\eqref{eqn:qrd} can be found by finding when the gradient of a univariate objective is zero. Doing this we find that the optimal point of~\eqref{eqn:qrd} for a given $\kappa\geq0$ is
    \begin{align*}
        \sigma_{BR}^* &= \frac{1}{e^\kappa + n^2 - 1} (\mathbb{I} + (e^\kappa - 1) \rho_{AR}),
    \end{align*}
    from which we can easily compute the corresponding rate and distortion as
    \begin{align*}
        R_q &= 2\log(n) + \frac{\kappa e^\kappa}{e^\kappa + n^2 - 1} - \log(e^\kappa + n^2 - 1)\\
        D &= \frac{n^2 - 1}{e^\kappa + n^2 - 1}.
    \end{align*}
    It is straightforward to verify that $\sigma_{BR}^*(\kappa)$ is a valid solution as it is a feasible point of~\eqref{eqn:qrd} for all $\kappa\geq0$. Reparameterizing $R_q$ to be in terms of $D$ instead of $\kappa$ gives the desired rate-distortion function. Similarly, expressing $\sigma_{BR}^*$ in terms of $D$ instead of $\kappa$, we find
    \begin{equation}
        \sigma_{BR}^* = \frac{D}{n^2-1}\mathbb{I} + \mleft(1 - \frac{n^2D}{n^2-1} \mright)\rho_{AR}.
    \end{equation}
    As seen in Appendix~\ref{appdx:rd-equiv}, we can recover the Choi matrix from this solution state as $C_\mathcal{N} = n\sigma_{BR}^*$. Noting that $C_\mathcal{N}=\mathbb{I}/n$ corresponds to the quantum channel that maps all states to the maximally mixed state, and $C_\mathcal{N}=n\rho_{AR}$ corresponds to the identity channel, identifies the desired quantum channel. 
\end{proof}

\subsection{Other quantum distortion matrices} \label{sec:other-dist}

As noted in~\cite{wilde2013quantum}, there are several other common choices for the quantum distortion matrix $\Delta$ which have been studied in the literature. As with the entanglement fidelity distortion, we can use symmetry reduction to gain some additional insight for these other common distortion matrices. Throughout this section, we will assume that $\rho_A = \sum_{i=1}^n \lambda_i v_iv_i^\dag$, and denote $\{ b_i \}$ as an orthonormal basis for system $B$.

We will omit proofs for these results, as they can be shown using an almost identical method as used in Section~\ref{subsec:general-dist}. However, we note that for the quantum rate-distortion problem with input state $\rho_A\in\mathcal{D}_n$ and distortion matrix $\Delta\in\mathbb{H}^{mn}_+$ to be invariant under a representation $(\mathbb{H}^{mn}, \pi)$ of a group $\mathcal{G}$, it suffices that
\begin{align}\label{eqn:general-group}
    &\pi(g)(X) = g X g^\dag, \quad \forall X\in\mathbb{H}^{mn}, \\ 
    \quad  \textrm{where}\quad &g \in \mathcal{G} \leq \{ U_B \otimes U_R : U_B\in U(m), U_R\in U(n) \}, \nonumber
\end{align}
such that $\tr_B(g)\rho_A=\rho_A\tr_B(g)$ and $g\Delta g^\dag = \Delta$ for all $g\in\mathcal{G}$, and where we use $\mathcal{G}\leq\mathcal{G}'$ to denote that $\mathcal{G}$ is a subgroup of $\mathcal{G}'$. Note that we have one additional requirement that must be satisfied as compared to Lemma~\ref{lem:qrd-general}, as the distortion matrix $\Delta$ is no longer necessarily related to the input states.

\begin{exmp} \label{exmp:cc}
    Consider the classical-classical distortion matrix
    \begin{equation}
        \Delta_{cc} = \sum_{i=1}^m \sum_{j=1}^n \delta_{ij} b_ib_i^\dag \otimes v_jv_j^\dag,
    \end{equation}
    where $\delta\in\mathbb{R}^{m\times n}_+$. Consider the representation $(\mathbb{H}^{mn}, \pi_{cc})$ with group homomorphism~\eqref{eqn:general-group} of the group
    \begin{align}
        \mathcal{G}_{cc} = \biggl\{ \biggl( \sum_{i=1}^m w_i b_ib_i^\dag \biggr) \otimes \biggl( \sum_{j=1}^n z_j v_jv_j^\dag \biggr) : w_i, z_j \in \{ \pm1, \pm\sqrt{-1} \}, \forall i,j \biggr\}.
    \end{align}
    It can be shown that~\eqref{eqn:qrd} with the classical-classical distortion matrix is invariant under this representation. Therefore, a solution must lie in the corresponding fixed-point subspace
    \begin{equation}\label{eqn:cc-subspace}
        \mathcal{V}_{cc} = \biggl\{ \sum_{i=1}^m \sum_{j=1}^n x_{ij} b_ib_i^\dag \otimes v_jv_j^\dag : x\in\mathbb{R}^{m\times n} \biggr\}.
    \end{equation}
    As expected (and commented on in~\cite{wilde2013quantum}), the problem completely reduces down to diagonal matrices, and is therefore isomorphic to the classical rate-distortion problem (see Appendix~\ref{appdx:crd}).
\end{exmp}

\begin{exmp} \label{exmp:qc}
    Consider the quantum-classical distortion matrix studied in~\cite{datta2013quantum}
    \begin{equation}
        \Delta_{qc} = \sum_{i=1}^m b_ib_i^\dag \otimes \Delta_{i},
    \end{equation}
    where $\Delta_{i} \in \mathbb{H}^n_+$ for all $i=1,\ldots,m$. Consider the representation $(\mathbb{H}^{mn}, \pi_{qc})$ with group homomorphism~\eqref{eqn:general-group} of the group
    \begin{align}
        \mathcal{G}_{qc} = \biggl\{ \biggl( \sum_{i=1}^m w_i b_ib_i^\dag \biggr) \otimes \mathbb{I}_R : w_i \in \{ \pm1, \pm\sqrt{-1} \}, \forall i \biggr\}.
    \end{align}
    It can be shown that~\eqref{eqn:qrd} with the quantum-classical distortion matrix is invariant under this representation. Therefore, a solution must lie in the corresponding fixed-point subspace
    \begin{equation}
        \mathcal{V}_{qc} = \biggl\{ \sum_{i=1}^m b_ib_i^\dag \otimes \rho_i : \rho_i\in\mathbb{H}^{n}, \forall i=1,\ldots,m \biggr\},
    \end{equation}
    which can be interpreted as a block-diagonal structure by changing the basis of $\{b_i\}$ to be in the standard basis.
\end{exmp}

\begin{exmp} \label{exmp:cq}
    Consider the classical-quantum distortion matrix for a system similar to that studied in~\cite{hayashi2023efficient}
    \begin{equation}\label{eqn:cq-distortion}
        \Delta_{cq} = \sum_{j=1}^n \Delta_{j} \otimes v_jv_j^\dag,
    \end{equation}
    where $\Delta_{j} \in \mathbb{H}_+^m$ for all $j=1,\ldots,n$. Consider the representation $(\mathbb{H}^{mn}, \pi_{cq})$ with group homomorphism~\eqref{eqn:general-group} of the group
    \begin{align}
        \mathcal{G}_{cq} = \biggl\{ \mathbb{I}_B \otimes \biggl( \sum_{j=1}^n z_j v_jv_j^\dag \biggr) : z_j \in \{ \pm1, \pm\sqrt{-1} \}, \forall j \biggr\}.
    \end{align}    
    It can be shown that~\eqref{eqn:qrd} with the classical-quantum distortion matrix is invariant under this representation. Moreover, the corresponding fixed-point subspace is
    \begin{equation}
        \mathcal{V}_{cq} = \biggl\{ \sum_{j=1}^n \rho_j \otimes v_jv_j^\dag : \rho_j\in\mathbb{H}^{m}, \forall j=1,\ldots,n \biggr\},
    \end{equation}
    which can be interpreted as a block-diagonal structure by changing the basis of $\{v_i\}$ to be in the standard basis.
\end{exmp}


\section{Efficient algorithm} \label{sec:alg}

We propose mirror descent to compute the quantum rate-distortion function. This is justified by the following relative smoothness properties of mutual information.
\begin{lem}[{{\cite[Theorem 6.4]{he2023mirror}}}]\label{lem:qrd-smooth}
    Consider any input state $\rho_A\in\mathcal{D}_n$. Quantum mutual information $I_q(\wc; \rho_A)$ is $1$-smooth relative to $-S$ on $\mathcal{D}_{mn}$.  
\end{lem}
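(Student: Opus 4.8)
The plan is to verify the definition of relative smoothness directly: by Definition~\ref{def:rel}(i) with $L=1$, $\varphi=-S$, and $f=I_q(\wc;\rho_A)$, it suffices to show that the map $\sigma_{BR}\mapsto(-S)(\sigma_{BR})-I_q(\sigma_{BR};\rho_A)$ is convex on $\relinterior\mathcal{D}_{mn}$. The key observation is that the ``$\sigma\log\sigma$'' contributions cancel, reducing the claim to convexity of a von Neumann entropy composed with a partial trace.

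First I would expand the reparametrized mutual information $I_q(\sigma_{BR};\rho_A)=S\divx{\sigma_{BR}}{\tr_R(\sigma_{BR})\otimes\rho_A}$. Using $S\divx{\rho}{\sigma}=\tr[\rho(\log(\rho)-\log(\sigma))]$, the identity $\log(X\otimes Y)=\log(X)\otimes\mathbb{I}+\mathbb{I}\otimes\log(Y)$, and the defining adjoint properties of the partial traces (so that $\tr[\sigma_{BR}(M\otimes\mathbb{I})]=\tr[\tr_R(\sigma_{BR})M]$ and $\tr[\sigma_{BR}(\mathbb{I}\otimes M)]=\tr[\tr_B(\sigma_{BR})M]$ for Hermitian $M$), one obtains, for $\sigma_{BR}\in\relinterior\mathcal{D}_{mn}$,
\begin{equation*}
    I_q(\sigma_{BR};\rho_A) = -S(\sigma_{BR}) + S(\tr_R(\sigma_{BR})) - \tr[\tr_B(\sigma_{BR})\log(\rho_A)].
\end{equation*}
All matrix logarithms here are well defined: $\rho_A\succ0$ by Remark~\ref{rem:degenerate}, and $\tr_R(\sigma_{BR})\succ0$ on the relative interior of $\mathcal{D}_{mn}$. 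Substituting $-S(\sigma_{BR})=\tr[\sigma_{BR}\log(\sigma_{BR})]$ and cancelling gives
\begin{equation*}
    (-S)(\sigma_{BR}) - I_q(\sigma_{BR};\rho_A) = -S(\tr_R(\sigma_{BR})) + \tr[\tr_B(\sigma_{BR})\log(\rho_A)].
\end{equation*}

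It then remains to observe that the right-hand side is convex in $\sigma_{BR}$. The second term is linear, being $\inp{\sigma_{BR}}{\mathbb{I}\otimes\log(\rho_A)}$, hence convex. The first term is the composition of the linear map $\tr_R$ with $-S$, and $-S$ is convex on $\mathbb{H}^{mn}_+$ since the von Neumann entropy is concave (a standard fact; e.g.\ it follows from joint convexity of quantum relative entropy applied to $S\divx{X}{\mathbb{I}}=\tr[X\log(X)]$), and the composition of a convex function with a linear map is convex. A sum of convex functions is convex, so $(-S)-I_q(\wc;\rho_A)$ is convex on $\relinterior\mathcal{D}_{mn}$, which is exactly the assertion. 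The only steps requiring care are the tensor-logarithm and partial-trace bookkeeping in the expansion, and invoking concavity of the von Neumann entropy as a known fact; there is no genuine obstacle, and the value $L=1$ is natural precisely because the $\sigma_{BR}\log(\sigma_{BR})$ terms cancel exactly.
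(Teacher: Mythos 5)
Your proof is correct and takes the natural, direct route: expand $I_q$ using $\log(X\otimes Y)=\log(X)\otimes\mathbb{I}+\mathbb{I}\otimes\log(Y)$ and the adjoint relations of the partial traces, observe that the $\sigma_{BR}\log(\sigma_{BR})$ terms cancel exactly (which is what forces $L=1$), and conclude convexity of what remains from concavity of von Neumann entropy composed with a linear map plus a linear functional. This is essentially the same verification that underlies the cited result in~\cite[Theorem 6.4]{he2023mirror}, so no gap to report.
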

Therefore, following from Fact~\ref{fact:conv}, the sequence generated by mirror descent with unit step size $t_k=1$ and kernel function $-S$ to compute~\eqref{eqn:qrd} will converge sublinearly to the optimal value. In the special case where we consider the classical-quantum distortion matrix~\eqref{eqn:cq-distortion} considered in Example~\ref{exmp:cq}, and if additionally, $\Delta_j=\Delta'$ for all $j=1,\ldots,m$, then mirror descent updates with unit step size are explicitly given by
\begin{align}
    \sigma_{BR}^{k+1} = \frac{\exp(\log(\tr_R(\sigma_{BR}^k)) + \Delta') \otimes \rho_A}{\tr[\exp(\log(\tr_R(\sigma_{BR}^k)) + \Delta')]}.
\end{align}
Unfortunately, for a general distortion matrix, it is not obvious how a similar efficiently computable update rule can be obtained. Rather than solve the mirror descent subproblem~\eqref{eqn:mirror-descent} directly, it is more convenient to solve the dual problem, which using~\eqref{eqn:md-dual} we can show is
\begin{equation}\label{eqn:dual-prob}
    \maximize_{\nu} \quad g_q^k(\nu),
\end{equation}
where
\begin{equation}\label{eqn:dual}
    g_q^k(\nu) \coloneqq -\tr[\exp(\log(\tr_R(\sigma_{BR}^k)) \otimes \mathbb{I}_R - \mathbb{I}_B \otimes \nu - \kappa \Delta)] - \inp{\rho_A}{\nu},
\end{equation}
and $\nu \in \mathbb{H}^n$ is the dual variable corresponding to the partial trace constraint~\eqref{eqn:qrd-pt}. Note that we have ignored the unit trace constraint $\tr[\sigma_{BR}]=1$ as this is already implied by the partial trace constraint. This dual problem~\eqref{eqn:dual-prob} is simpler to solve compared to the original primal subproblem as it is both unconstrained and has a significantly smaller problem dimension of $n^2$ compared to the original dimension of $n^2m^2$. Once a solution $\nu^{k+1}$ to~\eqref{eqn:dual-prob} is found, the primal solution can be recovered using~\eqref{eqn:md-p-var} as
\begin{equation}
    \sigma_{BR}^{k+1} = \exp(\log(\tr_R(\sigma_{BR}^k)) \otimes \mathbb{I}_R - \mathbb{I}_B \otimes \nu^{k+1} - \kappa \Delta).
    \label{eqn:qrd-iter}
\end{equation}
The overall algorithm is summarised in Algorithm~\ref{alg:qrd}. 

\begin{algorithm}
    \caption{Quantum rate-distortion algorithm}
    \begin{algorithmic}
        \Input{Input state $\rho_A\in\mathcal{D}_n$, distortion matrix $\Delta\in \mathbb{H}^{mn}_+$ and dual variable $\kappa\geq0$}
        \State \textbf{Initialize:} Initial point $\sigma_{BR}^0\in\mathcal{D}_{mn}$
        \For{$k=0,1,\ldots$}{}
        \begin{align*}
            \nu^{k+1} &= \argmax_{\nu} \, -\tr[\exp(\log(\tr_R(\sigma_{BR}^k)) \otimes \mathbb{I}_R - \mathbb{I}_B \otimes \nu - \kappa \Delta)] - \inp{\rho_A}{\nu}\\
            \sigma_{BR}^{k+1} &= \exp(\log(\tr_R(\sigma_{BR}^k)) \otimes \mathbb{I}_R - \mathbb{I}_B \otimes \nu^{k+1} - \kappa \Delta)
        \end{align*}
        \EndFor
        \Output{Approximate solution $\sigma_{BR}^{k+1}$.}
    \end{algorithmic}
    \label{alg:qrd}
\end{algorithm}

There are a number of standard algorithms available which can be used to solve the unconstrained convex optimization problem~\eqref{eqn:dual-prob}. We will consider methods of the form
\begin{equation}\label{eqn:dual-ascent-step}
    \nu_{i+1}^{k+1} = \nu_i^{k+1} + t_i G_i \nabla g_q^k(\nu_i^{k+1}), \quad \forall i\in\mathbb{N},
\end{equation}
where $G_i=\mathbb{I}$ corresponds to gradient ascent, and $G_i=-[\nabla^2g_q^k(\nu_i^{k+1})]^{-1}$ corresponds to Newton's method. We also take $t_i>0$ to be a step size obtained from a backtracking procedure~\cite[Algorithm 9.2]{boyd2004convex} which guarantees that the objective converges monotonically.

\begin{prop}\label{prop:dual}
    The dual problem~\eqref{eqn:dual-prob} satisfies the following properties:
    \begin{enumerate}[label=(\roman*), ref=\ref{prop:dual}(\roman*)]
        \item There exists a unique solution $\nu_*\in\mathbb{H}^n$ to the dual problem, and;
        \item Solving the problem using gradient descent or damped Newton's method with backtracking will produce a sequence of iterates $\{\nu_i\}$ which converges to $\nu_*$. \label{prop:dual-ii}
    \end{enumerate}
\end{prop}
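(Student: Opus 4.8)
The plan is to identify the dual objective $g_q^k$ from~\eqref{eqn:dual} with a concrete instance of the generic mirror-descent dual~\eqref{eqn:md-dual}, use Proposition~\ref{prop:dual-prop} for part~(i), and derive part~(ii) from classical convergence theory for line-search methods. The identification is the one already used to pass from~\eqref{eqn:md-dual} to~\eqref{eqn:dual}: take $\varphi = -S$, step size $t = 1$, $\mathcal{A} = \tr_B$, $b = \rho_A$, and $f = I_q(\wc;\rho_A) + \kappa\inp{\Delta}{\wc}$, so that $g_q^k(\wc)$ equals $g(\wc;\sigma_{BR}^k)$ up to an additive constant and an invertible affine reparametrization of $\nu$, neither of which affects strict convexity, coercivity, or existence and uniqueness of maximizers. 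I would record the relevant facts about this data: $-S$ is Legendre on $\mathbb{H}^{mn}$ with $\domain(-S) = \mathbb{H}^{mn}_+$ and conjugate $(-S)^*(Y) = \tr[\exp(Y - \mathbb{I})]$, finite on all of $\mathbb{H}^{mn}$, so Assumption~\ref{assump:general-v} holds and $\domain g_q^k = \mathbb{H}^n$; in fact $g_q^k$ is real-analytic on $\mathbb{H}^n$, being the entire map $X\mapsto -\tr[\exp(X)]$ composed with an affine map, plus a linear term; and $\sigma_{BR}^k\in\interior\domain(-S) = \mathbb{H}^{mn}_{++}$, since mirror-descent iterates always lie in the interior of the domain (Fact~\ref{fact:interior}), which is the hypothesis Proposition~\ref{prop:dual-prop} requires on the base point.

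For part~(i), I would verify the two conditions of Proposition~\ref{prop:dual-prop}. First, $\mathcal{A}^\dag = (\tr_B)^\dag : \nu\mapsto\mathbb{I}_B\otimes\nu$ is injective since $\norm{\mathbb{I}_B\otimes\nu}_F = \sqrt{m}\,\norm{\nu}_F$. Second, $b = \rho_A$ belongs to $\{\tr_B(\sigma) : \sigma\in\interior\domain(-S)\} = \{\tr_B(\sigma) : \sigma\succ 0\}$: take $\sigma = (\mathbb{I}_B/m)\otimes\rho_A$, which is positive definite --- this is where the full-rank normalization $\rho_A\succ 0$ from Remark~\ref{rem:degenerate} is used --- and satisfies $\tr_B(\sigma) = \rho_A$. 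Proposition~\ref{prop:dual-prop} then tells us $-g_q^k$ is strictly convex and coercive on $\mathbb{H}^n$. Coercivity together with continuity makes the sublevel sets of $-g_q^k$ compact, hence a minimizer exists; strict convexity makes it unique; see~\cite[Propositions 3.1.1 and 3.2.1]{bertsekas2009convex}. Call this unique dual solution $\nu_*$.

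For part~(ii), fix the initial superlevel set $\mathcal{S}_0 = \{\nu\in\mathbb{H}^n : g_q^k(\nu)\geq g_q^k(\nu_0)\}$, which is nonempty, convex (as $g_q^k$ is concave), closed, and bounded (by coercivity), hence compact. A backtracking line search guarantees a monotone increase $g_q^k(\nu_{i+1})\geq g_q^k(\nu_i)$, so the iterates of either method remain in $\mathcal{S}_0$. On the compact set $\mathcal{S}_0$ the continuous Hessian $\nabla^2(-g_q^k)$ is positive definite at every point (strict convexity), so $m\mathbb{I}\preceq\nabla^2(-g_q^k)\preceq M\mathbb{I}$ on $\mathcal{S}_0$ for some $0 < m\leq M$; hence $-g_q^k$ is $m$-strongly convex with $M$-Lipschitz gradient on $\mathcal{S}_0$, and, being $C^3$, has Lipschitz Hessian on $\mathcal{S}_0$ as well, while positive definiteness makes every Newton step well-defined. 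These are precisely the hypotheses of the convergence analyses of gradient descent with backtracking and of damped Newton's method with backtracking in~\cite[Sections 9.3 and 9.5]{boyd2004convex}: gradient descent on $-g_q^k$ converges linearly to its minimizer, and damped Newton converges (eventually quadratically). As that minimizer is the unique $\nu_*$ of part~(i) and $\norm{\nu_i - \nu_*}^2\leq\tfrac{2}{m}\bigl(g_q^k(\nu_*) - g_q^k(\nu_i)\bigr)\to 0$, we conclude $\nu_i\to\nu_*$.

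The step I expect to be the main obstacle is establishing coercivity of $-g_q^k$, i.e.\ the membership condition $b\in\{\mathcal{A}(x):x\in\interior\domain\varphi\}$ in Proposition~\ref{prop:dual-prop}: this is what compactifies the superlevel sets of $g_q^k$, which is in turn what promotes the merely pointwise positive definiteness of the Hessian to the uniform strong-convexity and Lipschitz-Hessian bounds the textbook convergence theorems require, and it is exactly here that Remark~\ref{rem:degenerate} is invoked. A minor point worth stating carefully is that the results of~\cite{boyd2004convex} are phrased for functions on $\mathbb{R}^n$ and are applied here after the usual identification of $\mathbb{H}^n$ with a real Euclidean space of dimension $n^2$.
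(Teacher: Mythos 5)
Your overall strategy is the same as the paper's: use Proposition~\ref{prop:dual-prop} for part~(i) after verifying injectivity of $\nu\mapsto\mathbb{I}_B\otimes\nu$ and the range condition on $\rho_A$ (which needs Remark~\ref{rem:degenerate}), then restrict to a compact superlevel set for part~(ii), establish strong concavity and Lipschitz gradient/Hessian there, and invoke the textbook convergence theory of \cite[Sections 9.3, 9.5]{boyd2004convex}. Your part~(i), including the explicit witness $\sigma = (\mathbb{I}_B/m)\otimes\rho_A$, is fine and matches the paper.

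There is, however, a genuine gap in part~(ii). You assert that the Hessian $\nabla^2(-g_q^k)$ is positive definite at every point \emph{because of strict convexity}. That inference is not valid: strict convexity of a twice-differentiable function only forces the Hessian to be positive \emph{semi}definite; it does not rule out directions of degeneracy at individual points. The standard one-dimensional counterexample is $x\mapsto x^4$, which is strictly convex yet has vanishing second derivative at the origin. Since your subsequent reasoning (extracting the lower bound $m\mathbb{I}\preceq\nabla^2(-g_q^k)$ on the compact set $\mathcal{S}_0$ via continuity, hence strong convexity on $\mathcal{S}_0$, hence the convergence theorems) hinges on pointwise positive definiteness rather than on strict convexity, this step does not go through as written. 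The paper avoids this by computing the Hessian of $g_q^k$ explicitly via the divided-differences formula~\eqref{eqn:hess-dual} and proving negative definiteness directly (Proposition~\ref{prop:dual-neg-def}): the entries of the first divided-differences matrix of $e^x$ are strictly positive, and injectivity of $V\mapsto\mathbb{I}_B\otimes V$ ensures the quadratic form cannot vanish on a nonzero direction. You would need to insert an argument of this kind --- either the divided-differences computation, or a citation that $Y\mapsto\tr[\exp Y]$ has globally positive definite Hessian composed with the injective affine map --- in place of the parenthetical appeal to strict convexity.Your overall strategy matches the paper's: use Proposition~\ref{prop:dual-prop} for part~(i) after verifying injectivity of $\nu\mapsto\mathbb{I}_B\otimes\nu$ and the range condition on $\rho_A$ (which needs Remark~\ref{rem:degenerate}), then for part~(ii) restrict to a compact superlevel set, establish strong concavity and Lipschitz gradient/Hessian on it, and invoke the convergence theory of \cite[Sections 9.3, 9.5]{boyd2004convex}. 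Your part~(i), including the explicit witness $\sigma = (\mathbb{I}_B/m)\otimes\rho_A$, is correct and matches the paper.

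There is, however, a genuine gap in part~(ii). You assert that $\nabla^2(-g_q^k)$ is positive definite at every point and cite ``strict convexity'' as the reason. That inference is not valid: strict convexity of a twice-differentiable function forces only positive \emph{semi}definiteness of the Hessian, not positive definiteness. The standard counterexample is $x\mapsto x^4$, which is strictly convex yet has vanishing second derivative at the origin. Everything downstream in your argument --- extracting uniform bounds $m\mathbb{I}\preceq\nabla^2(-g_q^k)\preceq M\mathbb{I}$ on the compact set $\mathcal{S}_0$ by continuity, hence strong convexity and Lipschitz gradient on $\mathcal{S}_0$, hence the textbook convergence guarantees --- hinges on pointwise positive definiteness, so this step does not go through as written. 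The paper closes this gap by a separate direct argument (Proposition~\ref{prop:dual-neg-def}): it writes the Hessian of $g_q^k$ in the divided-differences form~\eqref{eqn:hess-dual} and observes that all entries of the first divided-differences matrix of $e^x$ are strictly positive, while injectivity of $V\mapsto\mathbb{I}_B\otimes V$ ensures the quadratic form cannot vanish on a nonzero direction. You would need to substitute an argument of this kind --- either the explicit divided-differences computation, or an appeal to the fact that $Y\mapsto\tr[\exp Y]$ has globally positive definite Hessian composed with an injective affine map --- in place of the parenthetical ``(strict convexity).''
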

\begin{proof}
    To show part (i), we first show the dual function~\eqref{eqn:dual} is coercive and strictly convex by using Proposition~\ref{prop:dual-prop}, recognizing that the tensor product $X\mapsto\mathbb{I}\otimes X$ is injective, and using Remark~\ref{rem:degenerate} to assume $\rho_A\succ0$ is full-rank. Existence and uniqueness of a solution then follows from~\cite[Proposition 3.1.1 and 3.2.1]{bertsekas2009convex}. To prove part (ii), we use the fact that the superlevel sets of the dual function are compact, and that the dual function is strongly concave and has Lipschitz gradient and Hessian over compacts sets. These properties allow us to use standard convergence results of the two algorithms. We leave the details of the proof for part (ii) in Appendix~\ref{appdx:dual}.
\end{proof}

Note that to use either of these descent methods, we need the gradient and Hessian of $g_q^k$. The gradient can easily be found using~\cite[Corollary B.5]{he2023mirror} as
\begin{equation}
    \nabla g_q^k(\nu) = \tr_B[ \exp(\log(\tr_R(\sigma_{BR}^k)) \otimes \mathbb{I}_R - \mathbb{I}_B \otimes \nu - \kappa \Delta) ] - \rho_A.
\end{equation}
To find the Hessian, let us define the diagonalization 
\begin{equation}\label{eqn:diag-dual-hess}
    \log(\tr_R(\sigma_{BR}^k)) \otimes \mathbb{I}_R - \mathbb{I}_B \otimes \nu - \kappa \Delta = U\Lambda U^\dag.
\end{equation}
We use~\cite[Theorem 6.6.30]{horn1994topics} to obtain the directional derivative of $\nabla g_q^k$ in direction $V\in\mathbb{H}^n$ as
\begin{equation}\label{eqn:hess-dual}
    \mathsf{D}\nabla g_q^k(\nu)[V] = -\tr_B[ U (  f^{[1]}(\Lambda) \odot ( U^\dag (\mathbb{I} \otimes V) U )  ) U^\dag ],
\end{equation}
where $\odot$ represents the Hadamard or element-wise product, and $f^{[1]}(\Lambda)$ is the first divided differences matrix of $\Lambda$ corresponding to $f(x)=e^x$, i.e., the
matrix whose $(i, j)$-th entry is given by $f^{[1]}(\lambda_{i}, \lambda_{j})$ where
\begin{subequations}\label{eqn:first-divided-diff}
    \begin{align}
        f^{[1]}(\lambda, \mu) &= \frac{f(\lambda) - f(\mu)}{\lambda - \mu}, \quad \textrm{if }\lambda\neq\mu \label{eqn:div-diff-a}\\
        f^{[1]}(\lambda, \lambda) &= f'(\lambda). \label{eqn:div-diff-b}
    \end{align}
\end{subequations}
Equation~\eqref{eqn:hess-dual} defines the Hessian as a linear map, which we can use to construct the Hessian matrix by expressing it in coordinates with respect to a choice of basis for Hermitian matrices.

\begin{rem}\label{rem:hayashi-ours}
    We note that Algorithm~\ref{alg:qrd} is very similar to~\cite[Algorithm 14]{hayashi2022bregman}, which was derived using an expectation-maximization approach. In fact, if we also dualize the redundant unit trace constraint then solve for the corresponding dual variable using the KKT conditions, we obtain the alternate dual function
    \begin{equation}\label{eqn:dual-hayashi}
        \tilde{g}_q^k(\tilde{\nu}) \coloneqq -\log(\tr[\exp(\log(\tr_R(\sigma_{BR}^k)) \otimes \mathbb{I}_R - \mathbb{I}_B \otimes \tilde{\nu} - \kappa \Delta)]) - \inp{\rho_A}{\tilde{\nu}}.
    \end{equation}
    By recognizing the identity $\log(A\otimes B)=\log(A)\otimes\mathbb{I}+\mathbb{I}\otimes\log(B)$, we see that~\eqref{eqn:dual-hayashi} is identical to the function being minimized in the subproblems of~\cite[Algorithm 14]{hayashi2022bregman} up to a simple translation. The advantage of the dual function~\eqref{eqn:dual} which we minimize in Algorithm~\ref{alg:qrd} is that the trace expression in $\tilde{g}_q^k$ is composed with a logarithm, and is therefore no longer strictly convex nor coercive. We can confirm this by showing that $\tilde{g}_q^k(t\mathbb{I})$ is a constant for all $t\in\mathbb{R}$. Therefore, we can no longer use the same arguments as Proposition~\ref{prop:dual} to prove that gradient descent or damped Newton's with backtracking will converge to the solution, as without coercivity the superlevel sets of $\tilde{g}_q^k$ are no longer compact. Additionally, our experimental results in Section~\ref{subsec:exp-alg} suggest that~\eqref{eqn:dual-hayashi} is a more difficult function to minimize.
\end{rem}

\begin{rem}\label{rem:linear}
    Up to this point, we have established theoretical guarantees that mirror descent applied to the quantum rate-distortion problem will result in sublinear convergence to the solution. However, for rate-distortion problems using the entanglement fidelity distortion, empirical convergence results show that mirror descent converges to the solutions at a linear rate (see Figure~\ref{fig:rate_distortions} in Appendix~\ref{appdx:linear}). Moreover, this linear rate appears to become faster as the parameter $\kappa$ increases. One standard way to prove global linear convergence of mirror descent is by using a relative strong convexity argument (see Fact~\ref{fact:conv} and, e.g.,~\cite{he2023mirror}). However, we can show that quantum mutual information is \emph{not} relatively strongly convex on the set of density matrices (see Appendix~\ref{appdx:linear}). 
    
    Instead, it is possible to gain some insight into this linear rate by studying the relative strong convexity properties of the quantum mutual information around a neighborhood of the solution to the rate-distortion problems. In particular, using a priori knowledge about the solutions to particular instances of rate-distortions problem analyzed in Section~\ref{sec:uniform}, we can characterize the local relative strong convexity parameters of quantum mutual information around the solutions, and use them to better understand the linear convergence behavior of mirror descent as a function of $\kappa$. We refer the interested reader to Appendix~\ref{appdx:linear} for a more in depth discussion about these linear rates.
\end{rem}

\subsection{Inexact mirror descent}\label{sec:inexact}

As we use a numerical method to solve the mirror descent subproblem~\eqref{eqn:mirror-descent} for Algorithm~\ref{alg:qrd}, it is important to account for numerical errors that will inevitably arise. Existing works~\cite{solodov2000error,schmidt2011convergence,yang2022bregman} have studied the convergence properties of proximal gradient methods where the update steps are numerically computed to some finite tolerance. Of particular interest is that we do not always need to solve the subproblem to high accuracy to guarantee convergence to the optimal value, which can significantly improve computation speeds. We first present a generalized algorithm, then show how to apply it to the quantum rate-distortion problem. We remind the reader that Assumption~\ref{assump:general} is made throughout this section.

\paragraph{General algorithm} To solve equality constrained problems of the form~\eqref{eqn:constr-min} using an inexact method, we propose Algorithm~\ref{alg:inexact}. This algorithm is adapted from~\cite{yang2022bregman} which is an inexact Bregman proximal point method for arbitrary convex constraints, whereas our algorithm is an inexact mirror descent method for linear equality constraints.

The algorithm works as follows. We first warm-start the dual variable $\nu$ from the previous iteration (see Step $0$ of Algorithm~\ref{alg:inexact}) based on the assumption that consecutive subproblems will have similar solutions. A primal variable $x\in\domain \interior\varphi$ is recovered using this dual variable using~\eqref{eqn:md-p-var} (see~\eqref{eqn:inexact-primal-a} in Algorithm~\ref{alg:inexact}). However as the dual variable is not necessarily dual optimal, the primal variable is not necessarily primal feasible. Therefore, we define a pseudo-projection $\proj_\mathcal{C} : \interior\domain \varphi \rightarrow \relinterior\mathcal{C}$ which is continuous, surjective and idempotent, and compute a feasible primal variable as $\Tilde{x} = \proj_\mathcal{C}(x) \in \mathcal{C}$ (see~\eqref{eqn:inexact-primal-b} in Algorithm~\ref{alg:inexact}). Finally, an error criterion based on~\cite{yang2022bregman} is used to check if the subproblem has solved to a sufficient accuracy (see Step $2$ of Algorithm~\ref{alg:inexact}). If not, then $\nu$ is updated in an ascent direction for $g^k$ (using e.g., gradient ascent or Newton's method), and the process repeats until the desired accuracy has been reached.

\begin{algorithm*}
    \caption{Inexact mirror descent algorithm}
    \begin{algorithmic}
        \Input{Objective function $f$, Legendre reference function $\varphi$ where $\domain\varphi = \domain f$, linear equality constraint data $\mathcal{A}, b$, step size $\{ t_k \}$, error tolerance $\{ \varepsilon_k \}$.}
        \State \textbf{Initialize:} Initial primal $x^0 \in\mathcal{C}$ and dual $\nu^0\in\mathbb{V}'$ variables.
        \For{$k=0,1,\ldots$}{}
            \State \textbf{Step 0}: Initialize dual variable from previous iteration, i.e., $\nu^{k+1}_0 = \nu^k$.
            \For{$i=0,1,\ldots$}{}
                \State \textbf{Step 1}: Recover primal variable from dual variable, i.e.,
                \begin{align}
                    x^{k+1}_i &= \nabla\varphi^{-1}[\nabla\varphi(x^k) - t_k (\nabla f(x^k) + \mathcal{A}^\dag (\nu^{k+1}_i))] \label{eqn:inexact-primal-a}\\
                    \Tilde{x}^{k+1}_i &= \proj_{\mathcal{C}}( x^{k+1}_i ), \label{eqn:inexact-primal-b}
                \end{align}
                \State \hspace{3.5em} where $\proj_\mathcal{C} : \interior\domain \varphi \rightarrow \relinterior\mathcal{C}$ is continuous, surjective and idempo-
                \State \hspace{3.5em} tent.
                \State \textbf{Step 2}: \textbf{if} subproblem has solved to sufficient accuracy, i.e., satisfies
                \begin{equation}
                    D_\varphi\divx{\Tilde{x}^{k+1}_i}{x^{k+1}_i} \leq \varepsilon_k, \label{eqn:inexact-exit}
                \end{equation}
                \State \hspace{3.5em} \textbf{then} 
                \State \hspace{3.5em}\hspace{\algorithmicindent} $\nu^{k+1}\coloneqq\nu^{k+1}_i$, $x^{k+1}\coloneqq x^{k+1}_i$, and $\tilde{x}^{k+1}\coloneqq\tilde{x}^{k+1}_i$
                \State \hspace{3.5em}\hspace{\algorithmicindent} \textbf{break} 
                \State \hspace{3.5em} \textbf{else} 
                \State \hspace{3.5em}\hspace{\algorithmicindent} Update dual variable, i.e., $\nu^{k+1}_{i+1} = \nu_i^{k+1} + t^{k+1}_i \Delta\nu_i^{k+1}$, where $t^{k+1}_i$ is  
                \State \hspace{3.5em}\hspace{\algorithmicindent} an appropriate step size and $\Delta\nu_i^{k+1}$ is an ascent direction for $g^{k}$.
                \State \hspace{3.5em} \textbf{end if}
            \EndFor
        \EndFor
        \Output{Approximate solution $\Tilde{x}^{k+1}$.}
    \end{algorithmic}
    \label{alg:inexact}
\end{algorithm*}

We now present the convergence properties of Algorithm~\ref{alg:inexact}. We first show that the exit criterion will always eventually hold.

\begin{prop}
    Consider the problem~\eqref{eqn:mirror-pre} for some $y\in\relinterior\mathcal{C}$ and Legendre function $\varphi$. Consider the sequence $\{ \nu_i \}\in\mathbb{V}'$, and let $\{ x_i \}$ and $\{ \tilde{x}_i \}$ be the corresponding sequences generated by~\eqref{eqn:inexact-primal-a} and~\eqref{eqn:inexact-primal-b} respectively, where $x^k=y$. If $\{ \nu_i \}$ converges to the dual optimum $\nu_*$ of the dual subproblem~\eqref{eqn:md-dual-prob}, then $D_\varphi\divx{\Tilde{x}_i}{x_i}\rightarrow0$.
\end{prop}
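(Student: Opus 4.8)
The plan is to show that both the recovered primal iterates $\{x_i\}$ and their pseudo-projections $\{\tilde{x}_i\}$ converge to a common limit $x_\star$ lying in $\interior\domain\varphi$, and then to pass to the limit in the definition \eqref{eqn:breg-div} of $D_\varphi$. Set $x_\star \coloneqq \nabla\varphi^*[\nabla\varphi(y) - t(\nabla f(y) + \mathcal{A}^\dag(\nu_\star))]$. By Assumption~\ref{assump:general-v} and Fact~\ref{fact:legendre-ii}, $\nabla\varphi^* = (\nabla\varphi)^{-1}$ is continuous on all of $\mathbb{V}' = \interior\domain\varphi^*$, so the map $\nu \mapsto \nabla\varphi^*[\nabla\varphi(y) - t(\nabla f(y) + \mathcal{A}^\dag(\nu))]$ is continuous; evaluating it along $\nu_i \to \nu_\star$ and recalling that $x^k = y$ in \eqref{eqn:inexact-primal-a} gives $x_i \to x_\star$. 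The same argument via Fact~\ref{fact:interior} used for the mirror descent iterate \eqref{eqn:mirror-descent} (the objective of \eqref{eqn:mirror-pre} is an essentially smooth convex function plus an affine one, as a function of $x$) shows $x_i \in \interior\domain\varphi$ for all $i$, and likewise $x_\star \in \interior\domain\varphi$.

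The key step is to identify $x_\star$ with the unique minimizer of \eqref{eqn:mirror-pre}, and in particular to show it is feasible and lies in $\relinterior\mathcal{C}$. Under Assumption~\ref{assump:general-v} the dual function $g(\wc;y)$ of \eqref{eqn:md-dual} has full domain and is differentiable (a composition with the essentially smooth $\varphi^*$), and a short chain-rule computation gives $\nabla g(\nu;y) = \mathcal{A}(x(\nu)) - b$, where $x(\nu)$ is the point \eqref{eqn:md-p-var}. Since $\nu_\star$ maximizes $g(\wc;y)$ over all of $\mathbb{V}'$ in \eqref{eqn:md-dual-prob}, first-order optimality forces $\nabla g(\nu_\star;y) = 0$, i.e. $\mathcal{A}(x_\star) = b$, so $x_\star \in \mathcal{C}$. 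Because $x_\star$ minimizes the Lagrangian $\mathcal{L}(\wc,\nu_\star;y)$ over $\mathbb{V}$ by construction \eqref{eqn:md-p-var}, feasibility of $x_\star$ then gives $\mathcal{L}(x,\nu_\star;y) \geq \mathcal{L}(x_\star,\nu_\star;y)$ for every $x\in\mathcal{C}$ with both Lagrange terms vanishing, hence $x_\star$ minimizes \eqref{eqn:mirror-pre}. By Fact~\ref{fact:interior} this minimizer lies in $\interior\domain\varphi$, and since the affine set $\{x : \mathcal{A}(x) = b\}$ meets $\interior\domain\varphi$ (e.g. at $y\in\relinterior\mathcal{C}$), we get $x_\star \in \interior\domain\varphi \cap \{x : \mathcal{A}(x)=b\} = \relinterior\mathcal{C}$.

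Next I would deduce $\tilde{x}_i \to x_\star$. The pseudo-projection $\proj_\mathcal{C}$ is surjective onto $\relinterior\mathcal{C}$ and idempotent, and $\relinterior\mathcal{C}\subseteq\interior\domain\varphi$ is contained in its domain, so it fixes every point of $\relinterior\mathcal{C}$: if $z = \proj_\mathcal{C}(w)\in\relinterior\mathcal{C}$ then $\proj_\mathcal{C}(z) = \proj_\mathcal{C}(\proj_\mathcal{C}(w)) = \proj_\mathcal{C}(w) = z$. In particular $\proj_\mathcal{C}(x_\star) = x_\star$, so continuity of $\proj_\mathcal{C}$ together with $x_i\to x_\star$ yields $\tilde{x}_i = \proj_\mathcal{C}(x_i) \to x_\star$. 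Finally, since $x_\star\in\interior\domain\varphi$, the convex function $\varphi$ is continuous near $x_\star$ and $\nabla\varphi$ is continuous there by Fact~\ref{fact:legendre-ii}; as $\tilde{x}_i \to x_\star$ and $x_i \to x_\star$, passing to the limit in \eqref{eqn:breg-div} gives
\[
    D_\varphi\divx{\tilde{x}_i}{x_i} \longrightarrow \varphi(x_\star) - \varphi(x_\star) - \inp{\nabla\varphi(x_\star)}{0} = 0 .
\]

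I expect the main obstacle to be the middle step: showing that the limit $x_\star$ of the recovered primal iterates is genuinely feasible and, moreover, lies in $\relinterior\mathcal{C}$ — the latter being exactly what is needed for the pseudo-projection (which we assume only to be continuous, surjective and idempotent, with no nonexpansiveness) to fix $x_\star$. The remaining ingredients — continuity of $\nabla\varphi^*$, of $\proj_\mathcal{C}$, and of $D_\varphi$ on $\domain\varphi\times\interior\domain\varphi$ — are routine.
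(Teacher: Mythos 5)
Your proof is correct and follows essentially the same route as the paper's: pass $\nu_i\to\nu_\star$ through the continuous map $(\nabla\varphi)^{-1}$ to get $x_i\to x_\star$, identify $x_\star$ with the (feasible, interior) primal optimum of~\eqref{eqn:mirror-pre}, use continuity of $\proj_\mathcal{C}$ and the fact that it fixes $x_\star$ to get $\tilde{x}_i\to x_\star$, then use continuity of $D_\varphi$ on $\interior\domain\varphi\times\interior\domain\varphi$. You are slightly more explicit than the paper in two spots — you compute $\nabla g(\nu;y)=\mathcal{A}(x(\nu))-b$ and run the Lagrangian argument directly rather than citing a duality theorem, and you spell out why idempotence plus surjectivity force $\proj_\mathcal{C}(x_\star)=x_\star$ — but the structure of the argument is the same.
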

\begin{proof}
    Let us denote $x_*$ as the primal optimum of~\eqref{eqn:mirror-pre}. From Fact~\ref{fact:interior}, we have $\{ x_i \}\in\interior\domain\varphi$ and $x_* \in\interior\domain\varphi$. Using Fact~\ref{fact:legendre-ii}, we can show $(\nabla \varphi)^{-1}$ is continuous, and so $x_i$ is a continuous function of $\nu_i$ from~\eqref{eqn:inexact-primal-a}. Therefore, convergence of $\{ \nu_i \}$ to $\nu_*$ implies $\{ x_i \}$ converges to a solution of $\inf_x \mathcal{L}(x, \nu_*; y)$, which from strict convexity of $\varphi$ must be the unique primal optimum $x_*$~\cite[Theorem 12.13]{nocedal1999numerical}. From continuity of the pseudo-projection operator on $\interior\domain\varphi$ and primal feasibility of $x_*$, it follows that $\Tilde{x}_i=\proj_\mathcal{C}(x_i)$ must also converge to $x_*$. The desired result then follows from continuity of $D_\varphi$ on $\interior\domain\varphi \times \interior\domain\varphi$, and the property $D_\varphi\divx{x}{x}=0$ for all $x\in\interior\domain\varphi$.
\end{proof}
\begin{rem}
    For the quantum rate-distortion problem, convergence of the iterates $\{ \nu_i \}$ to the optimal solution of the subproblem is guaranteed by Proposition~\ref{prop:dual} if we use gradient ascent or Newton's method with a backtracking line search.
\end{rem}

We now move onto proving convergence rates of the inexact mirror descent iterates. We begin with a preliminary result which takes advantage of the fact that we restrict our attention to problems with only linear equality constraints.
\begin{lem}\label{prop:inexact-opt-condition}
    The pair of primal iterates $(x^{k+1}, \Tilde{x}^{k+1})$ produced by~\eqref{eqn:inexact-primal-a} and~\eqref{eqn:inexact-primal-b} always satisfy
    \begin{equation}
        0 \in \nabla f(x^k) + N_{\mathcal{A}}(\Tilde{x}^{k+1}) + \frac{1}{t_k}(\nabla\varphi(x^{k+1}) - \nabla\varphi(x^{k})), \label{eqn:inexact-opt-condition}
    \end{equation}
    where $N_\mathcal{A}$ denotes the normal cone of the affine set $\{ x\in\mathbb{V} : \mathcal{A}(x) = b \}$. 
\end{lem}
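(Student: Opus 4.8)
The plan is to simply unwind the definition of the inner-loop update \eqref{eqn:inexact-primal-a} at the iteration index $i$ at which the exit test \eqref{eqn:inexact-exit} fires, so that $x^{k+1}=x^{k+1}_i$, $\tilde x^{k+1}=\tilde x^{k+1}_i$ and $\nu^{k+1}=\nu^{k+1}_i$, and then to reinterpret the resulting identity as the claimed inclusion. The only non-mechanical ingredient is identifying the normal cone of an affine subspace.

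First I would apply $\nabla\varphi$ to both sides of \eqref{eqn:inexact-primal-a}. This is legitimate because $\nabla\varphi$ is a bijection from $\interior\domain\varphi$ onto $\interior\domain\varphi^*$ with inverse $\nabla\varphi^*$ by Fact~\ref{fact:legendre-ii}, and $x^{k+1}\in\interior\domain\varphi$ by Fact~\ref{fact:interior} (applied to the strictly convex, essentially smooth function being minimized in the subproblem). This gives
\[
\nabla\varphi(x^{k+1}) = \nabla\varphi(x^k) - t_k\bigl(\nabla f(x^k) + \mathcal{A}^\dag(\nu^{k+1})\bigr),
\]
which rearranges to
\[
0 = \nabla f(x^k) + \mathcal{A}^\dag(\nu^{k+1}) + \frac{1}{t_k}\bigl(\nabla\varphi(x^{k+1}) - \nabla\varphi(x^{k})\bigr).
\]

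Second I would show $\mathcal{A}^\dag(\nu^{k+1})\in N_{\mathcal{A}}(\tilde x^{k+1})$. Since the pseudo-projection maps into $\relinterior\mathcal{C}\subseteq\mathcal{C}$, we have $\tilde x^{k+1}=\proj_{\mathcal{C}}(x^{k+1})\in\mathcal{C}$, hence $\mathcal{A}(\tilde x^{k+1})=b$, i.e.\ $\tilde x^{k+1}$ lies in the affine set $S\coloneqq\{x\in\mathbb{V}:\mathcal{A}(x)=b\}$. For any point of $S$ the tangent cone is the subspace $\ker\mathcal{A}$, so the normal cone there is its orthogonal complement $(\ker\mathcal{A})^\perp=\image\mathcal{A}^\dag$; as $\mathcal{A}^\dag(\nu^{k+1})$ is trivially in $\image\mathcal{A}^\dag$, the inclusion follows. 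Substituting this into the displayed identity yields \eqref{eqn:inexact-opt-condition}.

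There is no real obstacle here; the statement is essentially bookkeeping built on the fact that the subproblem has only linear equality constraints. The one place to be slightly careful is the normal-cone computation: one must note that $N_S(x)=\image\mathcal{A}^\dag$ only for $x\in S$ (and is empty otherwise), which is precisely why feasibility of $\tilde x^{k+1}$ — guaranteed by the pseudo-projection landing in $\mathcal{C}$ — is needed, and also why the inclusion is stated for $\tilde x^{k+1}$ rather than for the (generally infeasible) $x^{k+1}$.
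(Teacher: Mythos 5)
Your proof is correct and matches the paper's argument: rearrange the primal recovery step to isolate $\mathcal{A}^\dag(\nu^{k+1})$, then note that feasibility of $\tilde{x}^{k+1}$ (guaranteed by the pseudo-projection) makes $N_\mathcal{A}(\tilde{x}^{k+1})=\image\mathcal{A}^\dag$, so the rearranged term lies in the normal cone. Your write-up is slightly more careful than the paper's (which contains a small notational slip, writing $\mathcal{A}^\dag(\nu^k)$ where $\nu^{k+1}$ is meant), but the route is the same.
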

\begin{proof}
    As $\Tilde{x}^{k+1}$ is primal feasible, it follows that $N_{\mathcal{A}}(\Tilde{x}^{k+1})$ is equal to the image of $\mathcal{A}^\dag$. Rearranging~\eqref{eqn:inexact-primal-a} then substituting this in gives
    \begin{equation*}
        \frac{1}{t_k}(\nabla\varphi(x^{k}) - \nabla\varphi(x^{k+1})) - \nabla f(x^k) = \mathcal{A}^\dag (\nu^k) \in N_{\mathcal{A}}(\Tilde{x}^{k+1}),
    \end{equation*}
    for all $\nu^k \in \mathbb{V}'$, from which~\eqref{eqn:inexact-opt-condition} follows.
\end{proof}

The following main convergence result can been seen as a modification of~\cite[Theorem 3.2]{yang2022bregman}.

\begin{thm}\label{prop:adapt-rate-sublin}
    Consider Algorithm~\ref{alg:inexact} to solve the convex optimization problem~\eqref{eqn:constr-min}. Let $\varphi$ be Legendre, $f^*$ represent the optimal value of this problem, and $x^*$ be any corresponding optimal point. 
    \begin{enumerate}[label=(\roman*), ref=\ref{prop:adapt-rate-sublin}(\roman*)]
        \item If $f$ is $L$-smooth relative to $\varphi$ and $t_k=1/L$, then the sequence $\{ (x^{k}, \Tilde{x}^{k}) \}$ satisfies 
        \begin{equation}\label{eqn:adapt-rate-sublin}
            f(x_{avg}^k) - f^* \leq \frac{L}{k} \biggl(D_\varphi\divx{x^*}{x^0} + \sum_{i=0}^{k-1} \varepsilon_k \biggr), \quad \forall k\in\mathbb{N},
        \end{equation}
        where $x_{avg}^k = \sum_{i=1}^k \Tilde{x}^{i}/k$.
        \item If $f$ is $L$-smooth relative to $\varphi$, $\mu$-strongly convex relative to $\varphi$, and $t_k=1/L$, then the sequence $\{ (x^{k}, \Tilde{x}^{k}) \}$ satisfies 
        \begin{equation}\label{eqn:inexact-lin}
        D_\varphi\divx{x^*}{x^{k}} \leq \mleft( 1 - \frac{\mu}{L} \mright)^{k} \biggl(D_\varphi\divx{x^*}{x^0} + A \biggr), \quad \forall k\in\mathbb{N},
        \end{equation}
        where
        \begin{equation}
            A = \sum_{i=0}^{k-1}\mleft( 1-\frac{\mu}{L} \mright)^{-1-i}\varepsilon_i.
        \end{equation}
    \end{enumerate}
\end{thm}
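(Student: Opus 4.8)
The plan is to adapt the standard Bregman three-point analysis of mirror descent (the one behind Fact~\ref{fact:conv}) to the inexact setting, feeding in the optimality characterization of Lemma~\ref{prop:inexact-opt-condition} and the exit criterion~\eqref{eqn:inexact-exit} so as to carry an additive error $\varepsilon_k$ through the estimates. Throughout, $x^k$ and $x^{k+1}$ lie in $\interior\domain\varphi$ by Fact~\ref{fact:interior}, while $\tilde{x}^{k+1}$ and $x^*$ lie in $\mathcal{C}\subseteq\domain\varphi$, so every Bregman divergence written below is well defined; I also use, as provided by Assumption~\ref{assump:general} and Definition~\ref{def:rel}, that $f$ is convex on $\domain\varphi$ and that the relative smoothness (and, for part (ii), relative strong convexity) inequalities hold at the base point $x^k$.

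The key is a one-step identity. Since $\tilde{x}^{k+1}$ and $x^*$ are both feasible, $\tilde{x}^{k+1}-x^*\in\ker\mathcal{A}$, which is orthogonal to the normal-cone term $\image\mathcal{A}^\dag$ appearing in Lemma~\ref{prop:inexact-opt-condition}; pairing that inclusion with $\tilde{x}^{k+1}-x^*$ therefore gives $\langle\nabla f(x^k),\tilde{x}^{k+1}-x^*\rangle=\tfrac{1}{t_k}\langle\nabla\varphi(x^k)-\nabla\varphi(x^{k+1}),\tilde{x}^{k+1}-x^*\rangle$. Splitting $\tilde{x}^{k+1}-x^*=(\tilde{x}^{k+1}-x^{k+1})+(x^{k+1}-x^*)$ and applying the three-point identity $D_\varphi\divx{a}{c}-D_\varphi\divx{a}{b}-D_\varphi\divx{b}{c}=\langle\nabla\varphi(b)-\nabla\varphi(c),a-b\rangle$ to the two pieces yields
\begin{equation*}
  t_k\langle\nabla f(x^k),\tilde{x}^{k+1}-x^*\rangle = D_\varphi\divx{x^*}{x^k}-D_\varphi\divx{x^*}{x^{k+1}}-D_\varphi\divx{\tilde{x}^{k+1}}{x^k}+D_\varphi\divx{\tilde{x}^{k+1}}{x^{k+1}}.
\end{equation*}

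For part (i), relative smoothness with base point $x^k$ gives $f(\tilde{x}^{k+1})\le f(x^k)+\langle\nabla f(x^k),\tilde{x}^{k+1}-x^k\rangle+L\,D_\varphi\divx{\tilde{x}^{k+1}}{x^k}$, and convexity of $f$ gives $f(x^k)-f^*\le\langle\nabla f(x^k),x^k-x^*\rangle$; adding these, substituting the one-step identity with $t_k=1/L$ so that the $D_\varphi\divx{\tilde{x}^{k+1}}{x^k}$ terms cancel, and bounding $D_\varphi\divx{\tilde{x}^{k+1}}{x^{k+1}}\le\varepsilon_k$ via~\eqref{eqn:inexact-exit}, one obtains $f(\tilde{x}^{k+1})-f^*\le L\bigl(D_\varphi\divx{x^*}{x^k}-D_\varphi\divx{x^*}{x^{k+1}}+\varepsilon_k\bigr)$. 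Summing over the first $k$ iterations (the divergences telescope), discarding the nonnegative final divergence, and applying Jensen's inequality to $x_{avg}^k$ produces~\eqref{eqn:adapt-rate-sublin}. For part (ii), I would replace the plain convexity bound by the relative strong convexity bound $f(x^k)-f^*\le\langle\nabla f(x^k),x^k-x^*\rangle-\mu\,D_\varphi\divx{x^*}{x^k}$; carrying the extra $-\mu\,D_\varphi\divx{x^*}{x^k}$ term through the same substitution and using $f(\tilde{x}^{k+1})-f^*\ge 0$ (feasibility of $\tilde{x}^{k+1}$) gives the one-step contraction $D_\varphi\divx{x^*}{x^{k+1}}\le(1-\mu/L)D_\varphi\divx{x^*}{x^k}+\varepsilon_k$, and unrolling this linear recursion while factoring out $(1-\mu/L)^k$ gives~\eqref{eqn:inexact-lin} with the stated $A$.

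The main obstacle is getting the one-step identity exactly right: one must carefully use that $\tilde{x}^{k+1}$ and $x^*$ both satisfy the linear equality so that the dual/normal-cone term vanishes on their difference, and one must be comfortable invoking relative smoothness and relative strong convexity at the generally \emph{infeasible} base point $x^k$ rather than at a feasible point — this is where the linear-constraint structure and Assumption~\ref{assump:general} (via Fact~\ref{fact:interior}, which keeps every iterate in $\interior\domain\varphi$) are really doing the work. Once the identity is established, both parts reduce to the exact mirror descent proof of Fact~\ref{fact:conv} with the $\varepsilon_k$'s carried along, so no further surprises arise.
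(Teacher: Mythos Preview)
Your proposal is correct and follows essentially the same route as the paper: both combine Lemma~\ref{prop:inexact-opt-condition} with the three-point identity to obtain the one-step estimate $f(\tilde{x}^{k+1})-f^*\le(L-\mu)D_\varphi\divx{x^*}{x^k}-L\,D_\varphi\divx{x^*}{x^{k+1}}+L\varepsilon_k$, then telescope for part~(i) and unroll the contraction for part~(ii). The only cosmetic difference is that the paper phrases the normal-cone step as an inequality and works with a generic feasible point $u$ before specializing to $x^*$, whereas you observe directly that the affine normal cone is a subspace and obtain the identity as an equality.
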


\begin{proof}
    From $L$-smoothness of $f$, we have
    \begin{align*}
        f(\Tilde{x}^{k+1}) \leq f(x^k) + \inp{\nabla f(x^k)}{\Tilde{x}^{k+1} - x^k} + LD_\varphi\divx{\Tilde{x}^{k+1}}{x^k}.
    \end{align*}
    By substracting $f(u)$ from both sides of the inequality for any $u\in\mathcal{C}$ and using $\mu$-strong convexity of $f$, we obtain
    \begin{align}\label{eqn:inexact-smooth-strong}
        f(\Tilde{x}^{k+1}) - f(u) &\leq  \inp{\nabla f(x^k)}{\Tilde{x}^{k+1} - u} + LD_\varphi\divx{\Tilde{x}^{k+1}}{x^k} - \mu D_\varphi\divx{u}{x^k}.
    \end{align}    
    Using Lemma~\ref{prop:inexact-opt-condition} and the definition of the normal cone $N_\mathcal{A}(\Tilde{x}^{k+1})$ of $\{ x\in\mathbb{V} : \mathcal{A}(x) = b \}$ at $\Tilde{x}^{k+1}$, we can show that
    \begin{align*}
        \inp{\nabla f(x^k)}{\Tilde{x}^{k+1} - u} &\leq L \inp{ \nabla\varphi(x^{k}) - \nabla\varphi(x^{k+1}) }{\Tilde{x}^{k+1} - u} \nonumber \\
        &= L ( D_\varphi\divx{\Tilde{x}^{k+1}}{x^{k+1}} + D_\varphi\divx{u}{x^k} - D_\varphi\divx{\Tilde{x}^{k+1}}{x^k} - D_\varphi\divx{u}{x^{k+1}} )\nonumber \\
        &\leq L ( D_\varphi\divx{u}{x^k} - D_\varphi\divx{\Tilde{x}^{k+1}}{x^{k}} - D_\varphi\divx{u}{x^{k+1}} + \varepsilon_k), \label{eqn:inexact-rate-proof-a}
    \end{align*}
    for any $u\in\mathcal{C}$, where the equality uses the definition of the Bregman divergence~\eqref{eqn:breg-div}, and the second inequality uses~\eqref{eqn:inexact-exit}. Combining this with~\eqref{eqn:inexact-smooth-strong} and letting $u=x^*$ gives
    \begin{equation}
        f(\Tilde{x}^{k+1}) - f^* \leq  (L-\mu)D_\varphi\divx{x^*}{x^k} - LD_\varphi\divx{x^*}{x^{k+1}} + L\varepsilon_k. \label{eqn:inexact-proof}
    \end{equation}
    To show the part (i), let $\mu=0$, i.e., we make no assumption of relative strong convexity. From convexity of $f$, we find that
    \begin{align*}
        f(x_{avg}^k) - f^* &\leq \frac{1}{k} \sum_{i=1}^k (f(\Tilde{x}^{i}) - f^*)\\
        &\leq \frac{L}{k} \biggl(D_\varphi\divx{x^*}{x^0} - D_\varphi\divx{x^*}{x^{k}} + \sum_{i=0}^{k-1} \varepsilon_k \biggr),
    \end{align*}
    as the sum telescopes, which recovers the desired result. To show part (ii), we recognize that $f^* \leq f(x)$ for all $x\in\mathcal{C}$, and therefore from~\eqref{eqn:inexact-proof} we obtain
    \begin{align*}
        D_\varphi\divx{x^*}{x^{k+1}} &\leq \mleft(1 - \frac{\mu}{L}\mright)D_\varphi\divx{x^*}{x^k} + \varepsilon_k.
    \end{align*}
    Recursively applying this inequality recovers the desired result.
\end{proof}
\begin{rem}\label{rem:inexact}
    If $\varepsilon_k=\varepsilon$ is a constant for all $k$, then~\eqref{eqn:adapt-rate-sublin} implies ergodic sublinear convergence to an $L\varepsilon$-optimal solution. If $f$ is $\mu$-strongly convex relative to $\varphi$, then~\eqref{eqn:inexact-lin} implies linear convergence to a $L\varepsilon/\mu$-Bregman ball of the solution. If $\{ \varepsilon_k \}$ is a summable (i.e., $\lim_{k\rightarrow\infty} \sum_{i=0}^k \varepsilon_k < \infty$), then~\eqref{eqn:adapt-rate-sublin} implies ergodic sublinear convergence to the optimal solution. If $f$ is $\mu$-strongly convex relative to $\varphi$ and $\{ \varepsilon_k \}$ is a sequence which converges to zero linearly, then~\eqref{eqn:inexact-lin} implies linear convergence to the solution. See~\cite[Proposition 3]{schmidt2011convergence} for a more in-depth discussion on how the linear rate varies depending on how quickly $\{ \varepsilon_k \}$ converges. Notably, the best linear convergence behavior occurs when $\{ \varepsilon_k \}$ convergences at a similar rate as that of the function values.
\end{rem}

\begin{rem}\label{lem:gap-dual}
    The criterion~\eqref{eqn:inexact-exit} is equivalent to 
    \begin{equation}
        \inp{\nabla f(x^k)}{\Tilde{x}^{k+1}} + \frac{1}{t_k} D_\varphi\divx{\Tilde{x}^{k+1}}{x^k} - g^k(\nu^{k+1}) \leq \frac{\varepsilon_k}{t_k},
    \end{equation}    
    i.e., a bound on the primal-dual gap of the mirror descent subproblem~\eqref{eqn:mirror-descent}.
\end{rem}

\paragraph{Application to quantum rate-distortion} 
To use Algorithm~\ref{alg:inexact} for the quantum rate-distortion problem, all that remains is to propose a suitable pseudo-projection~\eqref{eqn:inexact-primal-b}. We propose the following operator
\begin{subequations}
    \begin{align}
        &\proj_\mathcal{C}(\sigma_{BR}) = P \sigma_{BR} P^\dag \label{eqn:qrd-proj}\\
        \textrm{for}\quad & P = \mathbb{I}_B \otimes ( \rho_A^{1/2} \tr_B(\sigma_{BR})^{-1/2} ). \label{eqn:qrd-proj-b}
    \end{align}
\end{subequations}
It follows from Lemma~\ref{lem:p-trace-unitary}, where $X=R$, $Y=B$, $A_X=( \rho_A^{1/2} \tr_B(\sigma_{BR})^{-1/2} )$ and $U_Y=\mathbb{I}_B$, that the proposed pseudo-projection maps positive definite matrices to positive definite matrices satisfying~\eqref{eqn:qrd-pt}. Idempotence follows by recognizing that for any $\sigma_{BR}\in\mathbb{H}^{mn}_{++}$ satisfying~\eqref{eqn:qrd-pt}, i.e., $\tr_B(\sigma_{BR}) = \rho_A$, the pseudo-projection satisfies $\proj_\mathcal{C}(\sigma_{BR}) = \sigma_{BR}$ because the matrix $P$ from~\eqref{eqn:qrd-proj-b} satisfies
\begin{align*}
    \quad P &= \mathbb{I}_B \otimes ( \rho_A^{1/2} \tr_B(\sigma_{BR})^{-1/2} ) \\
    &= \mathbb{I}_B \otimes ( \rho_A^{1/2} \rho_A^{-1/2} )\\
    &= \mathbb{I}_B\otimes\mathbb{I}_A.
\end{align*}
Continuity of the pseudo-projection follows from the fact that the matrix inverse and square root are continuous on the positive definite cone.

Therefore, \eqref{eqn:qrd-proj} is a suitable pseudo-projection for the quantum rate-distortion problem. Note that the standard Euclidean projection is not suitable as it maps variables to the boundary of the domain, where the gradient of quantum mutual information is not well-defined. Finally, we summarize the inexact mirror descent algorithm applied to the quantum rate-distortion problem in Algorithm~\ref{alg:inexact-qrd} and its corresponding convergence guarantees.

\begin{algorithm*}
    \caption{Inexact quantum rate-distortion algorithm}
    \begin{algorithmic}
        \Input{Input state $\rho_A\in\mathcal{D}_n$, distortion matrix $\Delta\in \mathbb{H}^{mn}_+$, error tolerance $\{ \varepsilon_k \}$.}
        \State \textbf{Initialize:} Initial primal $\sigma_{BR}^0\in\mathcal{D}_{mn}$ and dual $\nu^0\in\mathbb{H}^n$ variables.
        \For{$k=0,1,\ldots$}{}
            \State \textbf{Step 0}: Initialize dual variable from previous iteration, i.e., $\nu^{k+1}_0 = \nu^k$.
            \For{$i=0,1,\ldots$}{}
                \State \textbf{Step 1}: Recover primal variable from dual variable, i.e.,
                \begin{align*}
                    \sigma_{BR, i}^{k+1} &= \exp(\log(\tr_R(\sigma_{BR}^k)) \otimes \mathbb{I}_R - \mathbb{I}_B \otimes \nu^{k+1}_i - \kappa \Delta)\\
                    \Tilde{\sigma}_{BR, i}^{k+1} &= P \sigma_{BR, i}^{k+1} P^\dag,
                \end{align*}
                \State \hspace{3.5em} where $P = \mathbb{I}_B \otimes ( \rho_A^{1/2} \tr_B(\sigma_{BR, i}^{k+1})^{-1/2} )$.
                \State \textbf{Step 2}: \textbf{if} subproblem has solved to sufficient accuracy, i.e., satisfies
                \begin{equation*}
                    S\divx{\Tilde{\sigma}_{BR, i}^{k+1}}{\sigma_{BR, i}^{k+1}} - \tr[\Tilde{\sigma}_{BR, i}^{k+1}] + \tr[\sigma_{BR, i}^{k+1}] \leq \varepsilon_k. 
                \end{equation*}
                \State \hspace{3.5em} \textbf{then} 
                \State \hspace{3.5em}\hspace{\algorithmicindent} $\nu^{k+1}\coloneqq\nu^{k+1}_i$, $\sigma_{BR}^{k+1}\coloneqq \sigma_{BR, i}^{k+1}$, and $\Tilde{\sigma}_{BR}^{k+1}\coloneqq\Tilde{\sigma}_{BR, i}^{k+1}$
                \State \hspace{3.5em}\hspace{\algorithmicindent} \textbf{break} 
                \State \hspace{3.5em} \textbf{else} 
                \State \hspace{3.5em}\hspace{\algorithmicindent} Update dual variable $\nu^{k+1}_{i+1}$ using~\eqref{eqn:dual-ascent-step}.
                \State \hspace{3.5em} \textbf{end if}
            \EndFor
        \EndFor
        \Output{Approximate solution $\Tilde{\sigma}_{BR}^{k+1}$.}
    \end{algorithmic}
    \label{alg:inexact-qrd}
\end{algorithm*}

\begin{cor}
    Consider the quantum rate-distortion problem~\eqref{eqn:qrd} for some input state $\rho_A\in\mathcal{D}_n$ and distortion matrix $\Delta\in\mathbb{H}^{mn}_+$. Let $\hat{I}^*$ denote its optimal value, and $\sigma_{BR}^*$ be any corresponding optimal point. The sequence $\{ (\sigma_{BR}^k, \Tilde{\sigma}_{BR}^k) \}$ generated by Algorithm~\ref{alg:inexact-qrd} satisfies
    \begin{equation}
        I_q(\sigma_{avg}^k; \rho_A) + \kappa\inp{\Delta}{\sigma_{avg}^k} - \hat{I}^* \leq \frac{1}{k} \biggl(S\divx{\sigma_{BR}^*}{\sigma_{BR}^0} + \sum_{i=0}^{k-1} \varepsilon_k \biggr), \quad \forall k\in\mathbb{N},
    \end{equation}
    where $\sigma_{avg}^k = \sum_{i=1}^k \Tilde{\sigma}_{BR}^i/k$.
\end{cor}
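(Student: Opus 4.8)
The plan is to observe that Algorithm~\ref{alg:inexact-qrd} is exactly Algorithm~\ref{alg:inexact} specialized to problem~\eqref{eqn:qrd} with objective $f(\sigma_{BR}) = I_q(\sigma_{BR};\rho_A) + \kappa\inp{\Delta}{\sigma_{BR}}$, Legendre reference function $\varphi = -S$ (the negative von Neumann entropy on $\mathbb{H}^{mn}$, with domain $\mathbb{H}^{mn}_+$), linear constraint data $\mathcal{A} = \tr_B$ and $b = \rho_A$, step size $t_k = 1$, and pseudo-projection $\proj_\mathcal{C}$ given by~\eqref{eqn:qrd-proj}; the stated bound is then read off directly from Theorem~\ref{prop:adapt-rate-sublin}(i). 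Two translations make this transparent. First, since $\nabla\varphi^{-1} = \nabla\varphi^*$ acts as $Y \mapsto \exp(Y - \mathbb{I})$ and $\mathcal{A}^\dag(\nu) = \mathbb{I}_B\otimes\nu$, substituting these (together with the gradient of $f$, cf.~\cite[Section 6.2]{he2023mirror}) into the primal-recovery step~\eqref{eqn:inexact-primal-a} with $t_k = 1$ reproduces the Step~1 update of Algorithm~\ref{alg:inexact-qrd} in exactly the way~\eqref{eqn:qrd-iter} was derived for the exact algorithm in Section~\ref{sec:alg}. Second, a one-line computation gives $D_{-S}\divx{X}{Y} = S\divx{X}{Y} - \tr[X] + \tr[Y]$, so the exit test $D_\varphi\divx{\Tilde{x}^{k+1}_i}{x^{k+1}_i}\leq\varepsilon_k$ of Algorithm~\ref{alg:inexact} becomes verbatim the criterion in Step~2 of Algorithm~\ref{alg:inexact-qrd}, while $D_\varphi\divx{\sigma_{BR}^*}{\sigma_{BR}^0} = S\divx{\sigma_{BR}^*}{\sigma_{BR}^0}$ since $\sigma_{BR}^*,\sigma_{BR}^0\in\mathcal{C}$ both have unit trace.

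Next I would verify that the hypotheses of Theorem~\ref{prop:adapt-rate-sublin}(i) hold. For relative smoothness, Lemma~\ref{lem:qrd-smooth} gives that $I_q(\wc;\rho_A)$ is $1$-smooth relative to $-S$ on $\mathcal{D}_{mn}$, and adding the affine term $\kappa\inp{\Delta}{\wc}$ changes $(-S) - f$ only by an affine function, hence preserves convexity; so $f$ is $1$-smooth relative to $-S$ and $t_k = 1/L = 1$ is the prescribed step size. For Assumption~\ref{assump:general}: $-S$ is Legendre with $\domain(-S)^* = \mathbb{H}^{mn}$ (its conjugate being $\mu\mapsto\tr[\exp(\mu - \mathbb{I})]$), giving~\ref{assump:general-ii} and~\ref{assump:general-v}; $f$ is convex (joint convexity of quantum relative entropy plus a linear term) and continuously differentiable on $\mathbb{H}^{mn}_{++}$, giving~\ref{assump:general-i}; $\domain f = \mathbb{H}^{mn}_+ = \domain(-S)$ using $\rho_A\succ0$ from Remark~\ref{rem:degenerate}, giving~\ref{assump:general-iii}; and $\mathcal{C} = \{\sigma_{BR}\in\mathbb{H}^{mn}_+ : \tr_B(\sigma_{BR}) = \rho_A\}$ is nonempty (it contains $\tfrac{1}{m}\mathbb{I}_B\otimes\rho_A$) and compact (closed, and bounded since $\tr[\sigma_{BR}] = \tr[\rho_A] = 1$ throughout it), giving~\ref{assump:general-iv}. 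I would also record that~\eqref{eqn:qrd-proj} meets the requirements on $\proj_\mathcal{C}$: it maps $\mathbb{H}^{mn}_{++}$ into $\relinterior\mathcal{C} = \{\sigma_{BR}\succ0 : \tr_B(\sigma_{BR}) = \rho_A\}$ and is idempotent and continuous, all established in the text preceding Algorithm~\ref{alg:inexact-qrd} via Lemma~\ref{lem:p-trace-unitary}, and it is surjective onto $\relinterior\mathcal{C}$ because idempotence forces $\proj_\mathcal{C}(\Tilde\sigma) = \Tilde\sigma$ for every $\Tilde\sigma\in\relinterior\mathcal{C}$.

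With these facts assembled, applying Theorem~\ref{prop:adapt-rate-sublin}(i) with $L = 1$, $\varphi = -S$, $f^* = \hat{I}^*$, and $x_{avg}^k = \sigma_{avg}^k = \tfrac{1}{k}\sum_{i=1}^k\Tilde\sigma_{BR}^i$, and substituting the identity for $D_{-S}$, yields exactly the claimed inequality. The main obstacle — indeed essentially the only non-mechanical point — is confirming $\domain f = \domain(-S)$, i.e., that $I_q(\sigma_{BR};\rho_A)$ is finite for every $\sigma_{BR}\in\mathbb{H}^{mn}_+$; this rests on $\rho_A\succ0$ together with the standard support inclusion $\ker(\tr_R(\sigma_{BR}))\otimes\mathbb{C}^n\subseteq\ker(\sigma_{BR})$ valid for any $\sigma_{BR}\succeq0$, which guarantees that the quantum relative entropy defining $I_q$ never takes the value $+\infty$. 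Everything else is bookkeeping.
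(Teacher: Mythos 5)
Your proof is correct and follows the same route as the paper, which simply cites Lemma~\ref{lem:qrd-smooth} and Theorem~\ref{prop:adapt-rate-sublin}; you have filled in the bookkeeping (identifying $\varphi=-S$, $t_k=1$, computing $D_{-S}$, verifying Assumption~\ref{assump:general} and the pseudo-projection properties) that the paper leaves implicit. All of these checks are accurate, including the domain argument via the support inclusion $\ker(\tr_R(\sigma_{BR}))\otimes\mathbb{C}^n\subseteq\ker(\sigma_{BR})$ and the observation that adding the affine term $\kappa\inp{\Delta}{\wc}$ preserves relative smoothness.
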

\begin{proof}
    This follows from Lemma~\ref{lem:qrd-smooth} and Theorem~\ref{prop:adapt-rate-sublin}.
\end{proof}


\section{Numerical experiments} \label{sec:exp}

We present numerical experiments to demonstrate the computational performance of Algorithm~\ref{alg:inexact-qrd} in solving the quantum rate-distortion problem~\eqref{eqn:qrd}. Input states $\rho_A$ are randomly generated using QETLAB~\cite{qetlab}, and the entanglement fidelity was used to measure distortion by letting $\Delta=\mathbb{I}-\rho_{AR}$, where $\rho_{AR}$ is the purification of $\rho_A$. All experiments were ran on MATLAB using an Intel i5-11700 CPU with 32GB of RAM.

For all implementations of Algorithm~\ref{alg:inexact-qrd}, unless otherwise stated, we use an error schedule of
\begin{equation}\label{eqn:err-sched}
    \varepsilon_k = \max \{\min\{ f(x^k) - f(x^{k-1}), \xi^k, \varepsilon_{k-1} \}, 10^{-15} \},
\end{equation}
where $\xi=0.9$ and $\varepsilon_{-1}=10^{-2}$. Based on the discussion from Remarks~\ref{rem:linear} and~\ref{rem:inexact}, the first term in the minimum aims to match the linear rate of the error sequence to that of the mirror descent iterations, the second term ensures that the error schedule decays linearly, the third term ensures the sequence is monotonic, and the last term accounts for machine precision. For brevity, we will use MD-N to refer to when Newton's method with backtracking is used to compute the inexact mirror descent iterates, and MD-GD to refer to when gradient descent with backtracking is used. We terminate MD-N once the change in objective value drops below $10^{-15}$, and MD-GD at $10^{-8}$ as the sublinear rate of gradient descent makes computing the iterates to higher accuracy impractical. All algorithms are also terminated if the computation time exceeds $\SI{3600}{\second}$. We use ``timeout'' to refer to situations when an algorithm is unable to perform a single iteration within this timeframe, and ``out of memory'' to refer to when there is insufficient RAM to run the algorithm. All backtracking line searches use the implementation described by~\cite[Algorithm 9.2]{boyd2004convex}, where $\alpha=0.1$ and $\beta=0.1$. For gradient descent methods, we start backtracking from $t_0=1000$, while for Newton's method we use $t_0=1$. We report upper bounds on absolute optimality gaps measured in bits, and unless otherwise stated are measured by computing a lower bound of the optimal value using MD-N and the method described in Appendix~\ref{appdx:lb}. Where relevent, methods are initialized using $\sigma_{BR}^0=\rho_A\otimes\rho_A$ and $\nu^0=-\log(\rho_A)$. 


\subsection{Comparison between exact and inexact methods}\label{subsec:exp-exact}

We first present some preliminary results comparing exact and inexact computation of the mirror descent iteration. We compare between variations of Algorithm~\ref{alg:inexact-qrd} when the iterates are computed using Newton's method with backtracking exactly, i.e., close to machine precision $\varepsilon_k=10^{-15}$ for all $k$, and inexactly using~\eqref{eqn:err-sched}. We also compare against when iterates are computed using gradient descent with backtracking inexactly using~\eqref{eqn:err-sched}, and to a constant error of $\varepsilon_k=10^{-8}$ for all $k$ as proposed in~\cite[Algorithm 4]{hayashi2022bregman}. All methods use symmetry reduction to reduce the problem dimension.

We summarize the results in Table~\ref{tab:exact-v-inexact}. We see that using Newton's method to compute the mirror descent iterates yields identical accuracies between exact and inexact implementations, despite the inexact method solving up to $6$ times faster. This difference also seems more pronounced for lower values of $\kappa$. Similarly, both inexact methods using gradient descent solve to similar accuracies, while the proposed decaying error solves approximately $3$ times faster compared to solving to a constant tolerance. 




%

\begin{table*}[t]
\footnotesize
\caption{Comparison between various strategies of choosing the tolerance to which mirror descent iterates are computed, when computing the quantum rate-distortion function with entanglement fidelity distortion at channel dimensions $n$ and distortion dual variables $\kappa$. We report the time elapsed (s) and upper bounds on the absolute optimality gap (bits) of each method. All methods use symmetry reduction to reduce the problem dimension. }
\label{tab:exact-v-inexact}
\centerline{
\begin{tabular*}{1\textwidth}{@{\extracolsep{\fill}}lrrcrcrcrc@{\extracolsep{\fill}}}
\toprule
\multicolumn{2}{l}{} & \multicolumn{2}{c}{MD-N ($\varepsilon_k=10^{-15}$)} & \multicolumn{2}{c}{MD-N ($\varepsilon_k=$ \eqref{eqn:err-sched})} & \multicolumn{2}{c}{MD-GD ($\varepsilon_k=10^{-8}$)}      & \multicolumn{2}{c}{MD-GD ($\varepsilon_k=$ \eqref{eqn:err-sched})}     \\ \cmidrule{3-4}\cmidrule{5-6}\cmidrule{7-8}\cmidrule{9-10}
$n$   &   $\kappa$               & Time & Gap & Time & Gap & Time & Gap & Time & Gap  \\ \midrule
$8$ &  $1.0$                & $.32$          & $\mathbf{\num{7e-9}}$   & $.17$          & $\mathbf{\num{7e-9}}$ & $.34$          & $\mathrm{\num{1e-7}}$ & $\mathbf{.10}$ & $\mathrm{\num{1e-7}}$    \\
  &   $3.0$                 & $.10$          & $\mathbf{\num{4e-9}}$   & $\mathrm{\mathbf{.07}}$          & $\mathbf{\num{4e-9}}$ & $.38$          & $\mathrm{\num{1e-8}}$ & $.14$ & $\mathrm{\num{8e-9}}$      \\
$32$  &   $3.0$               & $5.52$          & $\mathbf{\num{6e-8}}$   & $1.52$          & $\mathbf{\num{6e-8}}$ & $.90$          & $\mathrm{\num{2e-7}}$ & $\mathbf{.45}$ & $\mathrm{\num{2e-7}}$   \\
  &   $5.5$               & $.90$          & $\mathbf{\num{5e-9}}$   & $\mathbf{.32}$          & $\mathbf{\num{5e-9}}$ & $2.07$          & $\mathrm{\num{1e-8}}$ & $.81$ & $\mathrm{\num{6e-8}}$   \\
$128$  &   $7.0$              & $85.70$          & $\mathbf{\num{3e-8}}$   & $\mathbf{14.40}$          & $\mathbf{\num{3e-8}}$ & $34.75$          & $\mathrm{\num{6e-8}}$ & $14.97$ & $\mathrm{\num{6e-8}}$   \\
  & $8.5$                &  $31.83$          & $\mathbf{\num{6e-10}}$   & $\mathbf{8.91}$          & $\mathbf{\num{6e-10}}$ & $57.51$          & $\mathrm{\num{7e-9}}$ & $16.85$ & $\mathrm{\num{7e-8}}$   \\ \bottomrule
\end{tabular*}
}
\end{table*}


\subsection{Comparison with other algorithms}\label{subsec:exp-alg}



We compare Algorithm~\ref{alg:inexact-qrd} to existing algorithms which have been used to compute the quantum rate-distortion function. These include the backtracking primal-dual hybrid gradient (PDHG) algorithm from~\cite[Algorithm 3]{he2023mirror}, the expectation maximization (EM) algorithm from~\cite[Algorithm 14]{hayashi2022bregman}, and \textsc{CvxQuad}~\cite{fawzi2018efficient,fawzi2019semidefinite} with the SDPT3~\cite{toh1999sdpt3} solver. We show results with and without symmetry reduction for all of these methods. For PDHG, we use the same backtracking parameters as used in the experiments of~\cite{he2023mirror}. As noted in Remark~\ref{rem:hayashi-ours}, the EM algorithm is very similar to our mirror descent algorithm except that it solves a slight variation of the dual problem. Therefore to allow for a fair comparison, we use the same inexact method as Algorithm~\ref{alg:inexact} using gradient descent to compute these subproblems rather than solving to a constant inexact tolerance as was originally proposed in~\cite{hayashi2022bregman}, and which we showed was computationally slower in Section~\ref{subsec:exp-exact}. Both the PDHG and EM methods were terminated once the change in objective value dropped below $10^{-8}$. The default settings for \textsc{CvxQuad} were used, and we report the optimality gap returned by the solver.

\begin{rem}
    Just as our proposed algorithm shares similarities to EM, the inexact mirror descent method in Algorithm~\ref{alg:qrd} also shares similarities to PDHG. In particular, if we use gradient descent to compute the inexact mirror descent iterates, then, like PDHG, we are alternating between primal mirror descent iterates, and dual gradient ascent iterates. The main difference is that we may perform multiple dual steps for every primal step we take.
\end{rem}

We summarize the results without symmetry reduction in Table~\ref{tab:no-symm}, and results with symmetry reduction in Table~\ref{tab:symm}. We see that symmetry reduction results in a significant improvement in both computation time and memory requirements. We emphasize that without symmetry reduction, the quantum rate-distortion for a $9$-qubit channel ($n=512$) needs to optimize over a $70$ billion dimensional variable. Symmetry reduction reduces the optimization variable to a dimension of just $\num[group-separator={,}]{523776}$.

Overall, we see that our proposed method is able to obtain high-accuracy solutions relatively quickly. In particular, our method generally outperforms all existing benchmarks in both computation time and accuracy simultaneously. We note that it is not entirely clear whether Newton's method or gradient descent is better to solve our mirror descent subproblems. At high problem dimensions, gradient descent solves significantly faster to similar accuracy solutions. At low to medium problem dimensions however, Newton's method seems to solve faster for larger values of $\kappa$, while gradient descent solves faster for smaller $\kappa$. However, Newton's method allows us to solve the mirror descent iterates ``exactly'', which in turn allows us to efficiently compute an optimality gap using the method from Appendix~\ref{appdx:lb}.




\begin{table}[t]
    \caption{Comparison between various algorithms to compute for the quantum rate-distortion function with entanglement fidelity distortion, at channel dimensions $n$ and distortion dual variables $\kappa$. We report the time elapsed (seconds) and upper bounds on the absolute optimality gap (bits) of each method. We use ``$>3600.00$'' to refer when the algorithm has not solved to the desired tolerance within an hour, and report the optimality gap of the last completed iteration at this time. We use ``Timeout'' to refer to when the algorithm has not performed any iterations within an hour. }
    \centerline{
    \begin{subtable}{\textwidth}\centering
        \footnotesize
        \caption{Without symmetry reduction}
        \label{tab:no-symm}
        \resizebox{\columnwidth}{!}{%
        \begin{tabular}{@{}lrrrrcrcrcrc@{}}
        \toprule
        \multicolumn{2}{l}{} & \multicolumn{2}{c}{MD-N} & \multicolumn{2}{c}{MD-GD} & \multicolumn{2}{c}{PDHG~\cite{he2023mirror}}      & \multicolumn{2}{c}{EM~\cite{hayashi2022bregman}} & \multicolumn{2}{c}{\textsc{CvxQuad}~\cite{fawzi2018efficient,fawzi2019semidefinite}}     \\ \cmidrule{3-4}\cmidrule{5-6}\cmidrule{7-8}\cmidrule{9-10}\cmidrule{11-12}
        $n$   &   $\kappa$               & Time & Gap & Time & Gap & Time & Gap & Time & Gap & Time & Gap \\ \midrule
        $2$                    & $1.0$   & $.19$ & $\mathbf{\num{4e-9}}$  & $.12$ & $\mathrm{\num{2e-8}}$ & $\mathbf{.11}$ & $\mathrm{\num{1e-7}}$ & $.20$ & $\mathrm{\num{2e-5}}$ & $3.00$  & $\mathrm{\num{8e-9}}$          \\
                             & $3.0$   & $.13$ & $\mathbf{\num{1e-8}}$  & $.10$ & $\mathrm{\num{2e-8}}$ & $\mathbf{.05}$ & $\mathrm{\num{2e-6}}$ & $.22$ & $\mathrm{\num{5e-5}}$ & $2.94$  & $\mathrm{\num{2e-8}}$          \\
        $8$                    & $1.0$   & $4.88$ & $\mathbf{\num{2e-8}}$  & $2.30$ & $\mathrm{\num{1e-7}}$ & $\mathbf{.84}$ & $\mathrm{\num{6e-6}}$ & $7.56$ & $\mathrm{\num{4e-5}}$ & \multicolumn{2}{c}{Out of memory}          \\
                             & $3.0$   & $\mathbf{1.12}$ & $\mathbf{\num{1e-8}}$  & $2.59$ & $\mathrm{\num{2e-8}}$ & $1.40$ & $\mathrm{\num{9e-7}}$ & $6.06$ & $\mathrm{\num{3e-5}}$ & \multicolumn{2}{c}{Out of memory}          \\
        $32$                   & $3.0$   & $>3600.00$ & $\mathbf{\num{2e-7}}$ & $\mathbf{332.16}$ & $\mathrm{\num{4e-7}}$ & $355.38$ & $\mathrm{\num{3e-5}}$ & $>3600.00$ & $\mathrm{\num{6e-3}}$ & \multicolumn{2}{c}{Out of memory} \\
                             & $5.5$   & $1321.85$ & $\mathbf{\num{1e-7}}$ & $\mathbf{631.83}$ & $\mathrm{\num{2e-7}}$ & $1500.75$ & $\mathrm{\num{5e-5}}$ & $>3600.00$ & $\mathrm{\num{4e-3}}$ & \multicolumn{2}{c}{Out of memory} \\
        $128$                  & $7.0$   & \multicolumn{2}{c}{Timeout} & \multicolumn{2}{c}{Timeout} & $\mathbf{>3600.00}$ & $\mathbf{\num{2e-1}}$ & \multicolumn{2}{c}{Timeout} & \multicolumn{2}{c}{Out of memory} \\
                             & $8.5$   & \multicolumn{2}{c}{Timeout} & \multicolumn{2}{c}{Timeout} & $\mathbf{>3600.00}$ & $\mathbf{\num{1e-1}}$ & \multicolumn{2}{c}{Timeout} & \multicolumn{2}{c}{Out of memory} \\
        $512$                  & $9.5$   & \multicolumn{2}{c}{Out of memory}  & \multicolumn{2}{c}{Out of memory} & \multicolumn{2}{c}{Out of memory} & \multicolumn{2}{c}{Out of memory} & \multicolumn{2}{c}{Out of memory} \\
         & $11.0$  & \multicolumn{2}{c}{Out of memory}  & \multicolumn{2}{c}{Out of memory} & \multicolumn{2}{c}{Out of memory} &  \multicolumn{2}{c}{Out of memory}  & \multicolumn{2}{c}{Out of memory} \\ \bottomrule
        \end{tabular}
        }
    \end{subtable}
    }
    
    \vspace{0.5cm}    
    \centerline{
    \begin{subtable}{\textwidth}\centering
        \footnotesize
        \caption{With symmetry reduction}
        \label{tab:symm}
        \resizebox{\columnwidth}{!}{%
        \begin{tabular}{@{}lrrrrcrcrcrc@{}}
        \toprule
        \multicolumn{2}{l}{} & \multicolumn{2}{c}{MD-N} & \multicolumn{2}{c}{MD-GD} & \multicolumn{2}{c}{PDHG~\cite{he2023mirror}}      & \multicolumn{2}{c}{EM~\cite{hayashi2022bregman}} & \multicolumn{2}{c}{\textsc{CvxQuad}~\cite{fawzi2018efficient,fawzi2019semidefinite}}     \\ \cmidrule{3-4}\cmidrule{5-6}\cmidrule{7-8}\cmidrule{9-10}\cmidrule{11-12}
        $n$   &   $\kappa$               & Time & Gap & Time & Gap & Time & Gap & Time & Gap & Time & Gap \\ \midrule
        $2$                    & $1.0$   & $.06$ & $\mathrm{\num{4e-9}}$  & $\mathbf{.05}$ & $\mathrm{\num{2e-8}}$ & $\mathbf{.05}$ & $\mathrm{\num{7e-8}}$ & $.08$ & $\mathrm{\num{2e-8}}$ & $2.89$  & $\mathbf{\num{2e-9}}$          \\
                             & $3.0$   & $.05$ & $\mathrm{\num{1e-8}}$  & $\mathbf{.04}$ & $\mathrm{\num{2e-8}}$ & $.07$ & $\mathrm{\num{7e-8}}$ & $.07$ & $\mathrm{\num{5e-5}}$ & $2.61$  & $\mathbf{\num{2e-9}}$          \\
        $8$                    & $1.0$   & $.17$ & $\mathbf{\num{7e-9}}$  & $\mathbf{.10}$ & $\mathrm{\num{1e-7}}$ & $.12$ & $\mathrm{\num{6e-6}}$ & $.53$ & $\mathrm{\num{1e-6}}$ & $454.98$  & $\mathrm{\num{2e-8}}$          \\
                             & $3.0$   & $\mathbf{.07}$ & $\mathbf{\num{4e-9}}$  & $.16$ & $\mathrm{\num{8e-9}}$ & $.19$ & $\mathrm{\num{6e-7}}$ & $.41$ & $\mathrm{\num{5e-8}}$ & $686.11$  & $\mathrm{\num{6e-8}}$          \\
        $32$                   & $3.0$   & $1.52$ & $\mathbf{\num{6e-8}}$ & $\mathbf{.45}$ & $\mathrm{\num{2e-7}}$ & $1.84$ & $\mathrm{\num{2e-5}}$ & $14.83$ & $\mathrm{\num{2e-5}}$ & \multicolumn{2}{c}{Out of memory} \\
                             & $5.5$   & $\mathbf{.32}$ & $\mathbf{\num{5e-9}}$ & $.81$ & $\mathrm{\num{6e-8}}$ & $6.51$ & $\mathrm{\num{5e-5}}$ & $17.25$ & $\mathrm{\num{5e-7}}$ & \multicolumn{2}{c}{Out of memory} \\
        $128$                  & $7.0$   & $\mathbf{14.40}$ & $\mathbf{\num{3e-8}}$ & $14.97$ & $\mathrm{\num{6e-8}}$ & $47.07$ & $\mathrm{\num{1e-4}}$ & $>3600.00$ & $\mathrm{\num{6e-6}}$ & \multicolumn{2}{c}{Out of memory} \\
                             & $8.5$   & $\mathbf{8.91}$ & $\mathbf{\num{6e-10}}$ & $16.84$ & $\mathrm{\num{7e-8}}$ & $100.55$ & $\mathrm{\num{1e-4}}$ & $>3600.00$ & $\mathrm{\num{6e-7}}$ & \multicolumn{2}{c}{Out of memory} \\
        $512$                  & $9.5$   & $2174.38$ & $\mathbf{\num{7e-8}}$  & $\mathbf{140.97}$ & $\mathrm{\num{1e-7}}$ & $1650.75$ & $\mathrm{\num{3e-4}}$ & $>3600.00$ & $\mathrm{\num{9e-5}}$ & \multicolumn{2}{c}{Out of memory} \\
        & $11.0$  & $1216.96$ & $\mathbf{\num{5e-9}}$  & $\mathbf{575.86}$ & $\mathrm{\num{1e-8}}$ & $>3600.00$ & $\mathrm{\num{5e-4}}$ &  $>3600.00$  &  $\mathrm{\num{1e-4}}$  & \multicolumn{2}{c}{Out of memory} \\ \bottomrule
        \end{tabular}
        }
     \end{subtable}
     }
\end{table}

\section{Concluding remarks} \label{sec:conc}

We have presented an inexact mirror descent algorithm which is able to efficiently compute the quantum rate-distortion function with provable sublinear convergence rates. Additionally, we have shown how we can exploit symmetries that exist in common quantum rate-distortion settings to significantly reduce the dimensions of the problems we need to solve. It would be interesting to see if our techniques and analysis can be extended to other similar problems, such as the quantum rate-distortion for mixed states~\cite{khanian2023rate} and the quantum information bottleneck function~\cite{hayashi2023efficient,hayashi2023generalized}.

In this work, we have focused on a reformulation of the quantum rate-distortion problem where we dualize the distortion inequality constraint, as is typically done for these constrained channel capacity-type problems~\cite{blahut1972computation}. However, by converting the distortion inequality to an equality constraint using~\cite[Lemma 21]{hayashi2022bregman}, it is relatively straightforward to extend most of our algorithm and analysis to directly account for the distortion constraint. The only limitation is that there is not an obvious pseudo-projection~\eqref{eqn:inexact-primal-b} for this new problem. Therefore, future work to find an efficient pseudo-projection operator will allow us to use our inexact mirror descent algorithm to directly solve the unsimplified rate-distortion problem. Alternatively, we could perform exact computations of the mirror descent steps, which, although slower, circumvents the need for a pseudo-projection.

Computation of the mirror descent subproblems is an important part of inexact mirror descent which significantly impacts how effective the overall algorithm is. In Proposition~\ref{prop:dual}, we established qualitative properties of the dual function which allows us to guarantee global convergence when maximizing it using standard techniques. However, we lack a more quantitative understanding of these functions which may allow us to have a better understanding of convergence rates. A more in-depth analysis of the dual function may allow us to understand how to best solve the mirror descent subproblems. 




\appendix

\section{Classical rate-distortion} \label{appdx:crd}

In this section, we study the classical rate-distortion function, showing how symmetry reduction and mirror descent can be applied to this problem. The purpose of this is to provide a point of comparison for the concepts we use to study the quantum rate-distortion function, and to show how many of our results still apply to the classical setting.

\subsection{Notation}

We define $n$-dimensional classical states as $n$-dimensional probability distributions, i.e.,
\begin{equation*}
    \mathcal{P}_n \coloneqq \biggl\{p \in \mathbb{R}^n_+ : \sum_{i=1}^n p_i = 1 \biggr\}.
\end{equation*}
\emph{Classical channels} are linear functions that map $n$-dimensional classical states to $m$-dimensional classical states, and are mathematically represented as $m\times n$ column stochastic matrices
\begin{equation*}
    \mathcal{Q}_{m,n} \coloneqq \biggl\{ Q\in\mathbb{R}^{m\times n}_+ : \sum_{i=1}^m Q_{ij} = 1, \forall j=1,\ldots,n \biggr\}.
\end{equation*}
Probabilities defined on the Cartesian product of two sample spaces are represented by joint probability matrices of the form
\begin{equation*}
    \mathcal{P}_{n\times m} \coloneqq \biggl\{p \in \mathbb{R}^{n\times m}_+ : \sum_{i=1}^n\sum_{j=1}^m p_{ij} = 1 \biggr\}.
\end{equation*}
With some overloading of function notation, we define the Shannon entropy $H:\mathbb{R}_+^n\rightarrow\mathbb{R}$ as
\begin{equation}
    H(x)\coloneqq-\sum_{i=1}^n x_i \log(x_i),
\end{equation}
and the Kullback-Leibler (KL) divergence $H:\mathbb{R}_+^n\times\mathbb{R}_{++}^n\rightarrow\mathbb{R}$ as
\begin{equation}
    H\divx{x}{y}\coloneqq\sum_{i=1}^n x_i \log(x_i / y_i).
\end{equation}
Throughout this section we will use $\bm{1}$ to denote the all ones column vector.

\subsection{Problem definition}
For a classical channel with $n$ inputs and $m$ outputs, consider a probability distribution $p\in\mathcal{P}_n$, a distortion matrix $\delta\in\mathbb{R}^{m\times n}_+$ where $\delta_{ij}$ represents the distortion of producing the $i$-th output from the $j$-th input, and a maximum allowable total distortion $D\geq0$. The corresponding \emph{classical rate-distortion function}~\cite{shannon1948mathematical} is
\begin{subequations}\label{eqn:pre-crd}
    \begin{align}
        R_c(D; p, \delta) \quad \coloneqq \quad \minimize_{Q\in\mathcal{Q}_{m,n}} \quad & I_c(Q; p) \\
        \subjto \quad & \sum_{i=1}^m \sum_{j=1}^n p_jQ_{ij}\delta_{ij} \leq D,
    \end{align}
\end{subequations}
where
\begin{equation}
    I_c(Q; p) \coloneqq H\divx{Q\diag(p)}{ Qpp^\top },
\end{equation}
is known as the \emph{classical mutual information}. It follows from joint convexity of the KL divergence~\cite[Theorem 2.7.2]{cover1999elements} that~\eqref{eqn:pre-crd} is a convex optimization problem.

When there is the same number of inputs and outputs, i.e., $m=n$, a common distortion measure for the classical rate-distortion function is the Hamming distance~\cite{cover1999elements}, which corresponds to $\delta = \bm{1}\bm{1}^\top - \mathbb{I}$ (i.e., the square matrix with zeros on the diagonal and ones everywhere else).

\begin{rem}
    Like the quantum analog (see Remark~\ref{rem:degenerate}), without loss of generality, we will always assume that the input distribution of the classical channel is non-degenerate, i.e., $p > 0$. See Appendix~\ref{appdx:rd-equiv-degenerate} for more details.
\end{rem}

\paragraph{Change of variables}
We will make a change of variables where we optimize over the joint distribution $P\in\mathcal{P}_{m\times n}$, where $P_{ij} = p_jQ_{ij}$ is the joint probability of obtaining the $i$-th output from the $j$-th input. For distortion $D\geq0$, input distribution $p\in\mathcal{P}_n$, and distortion matrix $\delta\in\mathbb{R}_+^{m\times n}$, this gives us
\begin{subequations}\label{eqn:new-crd}
    \begin{align}
        R_c(D; p, \delta) \quad = \quad \minimize_{P\in\mathcal{P}_{m\times n}} \quad & I_c(P; p) \\
        \subjto \quad & \sum_{i=1}^m P_{ij} = p_j, \qquad j=1,\ldots,n,\\
                \quad & \sum_{i=1}^m \sum_{j=1}^n P_{ij}\delta_{ij} \leq D, \label{eqn:crd-c}
    \end{align}
\end{subequations}
where with some slight abuse of notation we redefine classical mutual information as
\begin{equation}
    I_c(P; p) \coloneqq H\divx{P}{ P\bm{1}p^\top }.
\end{equation}
\begin{prop}\label{prop:crd-equiv}
    The classical rate-distortion problems~\eqref{eqn:pre-crd} and~\eqref{eqn:new-crd} are equivalent, in the sense that the optimal values of the two problems are equal.
\end{prop}
\begin{proof}
    See Appendix~\ref{appdx:rd-equiv-state}.
\end{proof}

\paragraph{Distortion constraint}
Like the quantum case, we will reparameterize the distortion constraint with the dual variable $\kappa\geq0$ instead of $D$. The classical rate-distortion problem for distortion dual variable $\kappa\geq0$, input distribution $p\in\mathcal{P}_n$, and distortion matrix $\delta\in\mathbb{R}_+^{m\times n}$ then becomes
\begin{subequations}
    \makeatletter
    \def\@currentlabel{CRD}\label{eqn:crd}
    \makeatother
    \renewcommand{\theequation}{CRD.\alph{equation}}
    \begin{align}
        \tilde{R}_c(\kappa; p, \delta) \quad = \quad \minimize_{P\in\mathcal{P}_{m \times n}} \quad & I_c(P; p) + \kappa \sum_{i=1}^m \sum_{j=1}^n P_{ij}\delta_{ij} \label{eqn:crd-obj} \\
        \subjto \quad & \sum_{i=1}^m P_{ij} = p_j, \quad j=1,\ldots,n, \label{eqn:crd-pt}
    \end{align}
\end{subequations}

\subsection{Symmetry reduction}

To show how we can achieve a similar result as Theorem~\ref{thm:qrd-maxmix-solution} for the classical scenario, consider the classical rate-distortion problem with Hamming distortion $\delta=\bm{1}\bm{1}^\top - \mathbb{I}$ and a uniform input distribution $p=\bm{1}/n$. The analytic expression for this setting is well known (see, e.g.,~\cite[Exercise 10.5]{cover1999elements}), which is typically derived by lower bounding the mutual information using Fano's inequality, then showing achievability of this lower bound. Here, we show how this result can also be derived by using a similar symmetry reduction as we used to derive the quantum rate-distortion function.
\begin{cor}\label{cor:reals-dim-class}
    Let $\mathcal{G}$ be a compact group, let $(\mathbb{R}^n,\tau)$ be a real representation of $\mathcal{G}$, and let $(\mathbb{R}^{n\times n},\pi)$ be the real representation of $\mathcal{G}$ defined by $\pi(g)(X) = \tau(g)X\tau(g)^\dag$. Then the dimension (over $\mathbb{R}$) of the fixed point subspace is given by $\dim_{\mathbb{R}}(\mathcal{V}_{\pi}) = \langle \chi_\tau,\chi_\tau\rangle$.
\end{cor}
\begin{proof}
    This follows from a similar argument as the proof for Corollary~\ref{cor:reals-dim}. We first extend $(\mathbb{R}^n,\tau)$ to a complex representation in the natural way. We then note that $(\mathbb{C}^{n\times n}, \pi)$ decomposes into isomorphic subrepresentations given by the restriction to real and imaginary subspaces.
\end{proof}

\begin{lem}\label{lem:crd-maxmix-rep}
    Let us define the group $S_n$ as the group of permutations on $\{ 1,\ldots,n \}$, and a representation $(\mathbb{R}^{n\times n}, \pi)$ of $S_n$ as
    \begin{equation*}
        \pi(g)(X) = P_g X P_g^\top, \quad \forall X\in\mathbb{R}^{n^2\times n^2},
    \end{equation*}
    where $P_g$ is the matrix representation of permutation $g$. Problem~\eqref{eqn:crd} for uniform input distribution $p=\bm{1}/n$ and Hamming distortion $\delta=\bm{1}\bm{1}^\top - \mathbb{I}$ is invariant under $\pi$.
\end{lem}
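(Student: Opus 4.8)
The plan is to verify the two invariance conditions required for the convex optimization problem~\eqref{eqn:crd} to be invariant under the representation $\pi$, namely invariance of the objective~\eqref{eqn:crd-obj} and invariance of the feasible set defined by~\eqref{eqn:crd-pt}. Since $\pi(g)(X) = P_g X P_g^\top$ acts by simultaneous permutation of rows and columns, both checks reduce to observing that the relevant data of the problem -- the uniform input $p = \bm{1}/n$, the Hamming distortion $\delta = \bm{1}\bm{1}^\top - \mathbb{I}$, and the marginal/partial-trace map -- are themselves fixed by the symmetric group action. The key point to keep in mind is that the joint distribution $P$ here is indexed over an $n \times n$ grid (outputs $\times$ inputs), so a permutation $g \in S_n$ acts on $P$ by permuting inputs and outputs \emph{simultaneously} by the same $g$; this is the natural restriction of the product permutation action and is exactly what makes both $p$ and $\delta$ invariant.

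First I would establish invariance of the objective. Writing $I_c(P; p) = H\divx{P}{P\bm{1}p^\top}$, I would note that under $P \mapsto P_g P P_g^\top$ the first marginal $P\bm{1}$ maps to $P_g P \bm{1}$ (since $P_g^\top \bm{1} = \bm{1}$) and $p^\top$ maps to $p^\top P_g^\top = p^\top$ (since $p = \bm 1/n$ is permutation-invariant), so the reference distribution $P\bm{1}p^\top$ transforms the same way as $P$, i.e. by $P_g(\cdot)P_g^\top$. Since the KL divergence $H\divx{\cdot}{\cdot}$ is invariant under applying the same permutation to both arguments (it is a sum over entries of $x_{ij}\log(x_{ij}/y_{ij})$, and permuting indices just reorders the sum), $I_c$ is invariant. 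For the linear distortion term $\kappa \sum_{ij} P_{ij}\delta_{ij} = \kappa \inp{\delta}{P}$, I would use that $\inp{\delta}{P_g P P_g^\top} = \inp{P_g^\top \delta P_g}{P}$ together with the fact that the Hamming matrix $\bm{1}\bm{1}^\top - \mathbb{I}$ is fixed by conjugation by any permutation matrix ($P_g^\top \bm{1}\bm{1}^\top P_g = \bm{1}\bm{1}^\top$ and $P_g^\top \mathbb{I} P_g = \mathbb{I}$), so this term is invariant as well. Hence~\eqref{eqn:invar-obj} holds.

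Next I would check feasibility invariance~\eqref{eqn:invar-constr}. The feasible set is $\{P \in \mathcal{P}_{m\times n} : P^\top\bm 1 = p\}$ (the constraint~\eqref{eqn:crd-pt} says the column sums equal $p_j$; nonnegativity and overall normalization are preserved trivially by any permutation of entries). For $P$ feasible, $(P_g P P_g^\top)^\top \bm{1} = P_g P^\top P_g^\top \bm{1} = P_g P^\top \bm{1} = P_g p = p$, again using $P_g^\top \bm 1 = \bm 1$ and $P_g p = p$. So the feasible set is mapped into itself, and since this is an equality, onto itself. Both conditions of Lemma~\ref{lem:fix-pnt} are therefore satisfied, giving the claim. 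I do not expect any genuine obstacle here -- the only thing to be careful about is bookkeeping the two-sided (rows and columns) permutation action consistently and making sure one uses that $p$ is the \emph{uniform} distribution (without uniformity, $S_n$ would not fix $p$ and one would instead need the stabilizer subgroup of $p$, as in the non-uniform case). The mild subtlety worth a sentence is noting explicitly that $\pi$ restricted to the grid indexing of $P$ is the "diagonal" copy of $S_n$ acting identically on the input and output registers, which is why a single permutation $g$ suffices.
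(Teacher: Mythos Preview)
Your proposal is correct and follows essentially the same approach as the paper: verify separately that classical mutual information, the marginal constraint~\eqref{eqn:crd-pt}, and the Hamming distortion matrix are each invariant under simultaneous row/column permutation. The paper's proof is considerably terser (it simply asserts that $I_c$ and the constraint are invariant under row- and column-wise permutations and that conjugation by permutation matrices preserves diagonals and off-diagonals, hence fixes $\delta$), whereas you spell out the computations explicitly using $P_g^\top\bm{1}=\bm{1}$ and $P_g p = p$; but the logical structure is the same.
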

\begin{proof}
    Classical mutual information and the constraint~\eqref{eqn:crd-pt} are invariant under column-wise and row-wise permutations. Moreover, as conjugation by permutation matrices always maps diagonal elements to diagonals, and off-diagonal elements to off-diagonals, $\delta$ is also invariant under $\pi$.
\end{proof}
\begin{cor}
    A solution to~\eqref{eqn:crd} for input distribution $p=\bm{1}/n$ and Hamming distortion $\delta=\bm{1}\bm{1}^\top - \mathbb{I}$ exists in the subspace
    \begin{equation}
        \mathcal{V}_\pi = \{ a\mathbb{I} + b\bm{1}\bm{1}^\top : a,b\in\mathbb{R} \}.
    \end{equation}
\end{cor}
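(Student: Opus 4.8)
The plan is to follow the same template as the proof of Corollary~\ref{cor:qrd-maxmix-fps}: use the invariance already established to restrict an optimizer to the fixed-point subspace, compute the real dimension of that subspace via a character inner product, and then check that the claimed two-dimensional subspace is contained in it and has matching dimension, so the two coincide.

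First, by Lemma~\ref{lem:crd-maxmix-rep}, the problem~\eqref{eqn:crd} with $p=\bm 1/n$ and $\delta=\bm 1\bm 1^\top-\mathbb{I}$ is invariant under the representation $\pi(g)(X)=P_gXP_g^\top$ of $S_n$, so Lemma~\ref{lem:fix-pnt} immediately gives that an optimal $P$ may be taken in the fixed-point subspace $\mathcal{V}_\pi$. Next, since $\pi$ has the conjugation form $\pi(g)(X)=\tau(g)X\tau(g)^\dag$ with $\tau(g)=P_g$ the permutation representation of $S_n$ on $\mathbb{R}^n$, Corollary~\ref{cor:reals-dim-class} yields $\dim_{\mathbb{R}}\mathcal{V}_\pi=\langle\chi_\tau,\chi_\tau\rangle$. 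To evaluate this, I would use that $\chi_\tau(g)$ is the number of fixed points of $g$, so for the finite group $S_n$ (with Haar measure the normalized counting measure) $\langle\chi_\tau,\chi_\tau\rangle=\tfrac{1}{n!}\sum_{g\in S_n}\chi_\tau(g)^2$ equals, by the orbit-counting lemma, the number of $S_n$-orbits on $\{1,\dots,n\}\times\{1,\dots,n\}$, which is $2$ (the diagonal pairs and the off-diagonal pairs). Equivalently, one can note $\mathbb{R}^n=\mathbb{R}\bm 1\oplus\bm 1^\perp$ decomposes as the trivial representation plus the standard irreducible representation, each with multiplicity one, so $\langle\chi_\tau,\chi_\tau\rangle=1+1=2$ by Lemmas~\ref{lem:char-add} and~\ref{lem:char-ortho}.

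Finally, I would verify the explicit subspace: for any permutation matrix $P_g$ one has $P_g\mathbb{I}P_g^\top=\mathbb{I}$ and $P_g\bm 1\bm 1^\top P_g^\top=(P_g\bm 1)(P_g\bm 1)^\top=\bm 1\bm 1^\top$, since $P_g\bm 1=\bm 1$; hence $\{a\mathbb{I}+b\bm 1\bm 1^\top:a,b\in\mathbb{R}\}\subseteq\mathcal{V}_\pi$. The matrices $\mathbb{I}$ and $\bm 1\bm 1^\top$ are linearly independent (for $n\ge 2$), so this subspace is two-dimensional, and therefore equals $\mathcal{V}_\pi$ by the dimension count above. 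Combining with the first step gives an optimal $P$ of the claimed form.

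There is no genuinely difficult step here — the corollary is a direct classical counterpart of Corollary~\ref{cor:qrd-maxmix-fps}. The only mildly technical point is the character inner-product computation $\langle\chi_\tau,\chi_\tau\rangle=2$, which I expect to dispatch either by the orbit-counting argument or by the trivial-plus-standard decomposition of the permutation representation; everything else is routine invariance-checking.
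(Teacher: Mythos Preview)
Your proposal is correct and follows essentially the same route as the paper: invoke Lemmas~\ref{lem:fix-pnt} and~\ref{lem:crd-maxmix-rep} to land in the fixed-point subspace, use Corollary~\ref{cor:reals-dim-class} to reduce to $\langle\chi_\tau,\chi_\tau\rangle$, compute this as $2$ via the trivial-plus-standard decomposition and Lemmas~\ref{lem:char-add} and~\ref{lem:char-ortho}, and then verify that $\{a\mathbb{I}+b\bm 1\bm 1^\top\}$ is a two-dimensional invariant subspace. Your orbit-counting alternative for the character computation is a nice extra, but the paper uses exactly the decomposition argument you sketch second.
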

\begin{proof}
    It follows from Lemmas~\ref{lem:fix-pnt} and~\ref{lem:crd-maxmix-rep} that a solution must live in the fixed-point subspace of the representation $(\mathbb{R}^{n \times n}, \pi)$ defined in Lemma~\ref{lem:crd-maxmix-rep}. It is well known that the matrix representation $(\mathbb{C}^n, \tau)$ of the permutation group $S_n$ (i.e., $\tau(g)=P_g$) can be decomposed as the direct sum of two irreducible subrepresentations $V=\{ \bm{1} \}$ and $W = \{ x\in\mathbb{C}^n : \inp{\bm{1}}{x} = 0 \}$. Therefore using Corollary~\ref{cor:reals-dim-class}, we can show the dimension over the reals of the corresponding fixed-point subspace is
    \begin{align*}
        \dim_{\mathbb{R}} \mathcal{V}_\pi = \inp{\chi_\tau}{\chi_\tau} = \inp{\chi_V + \chi_W}{\chi_V + \chi_W} = 2,
    \end{align*}
    where the second and third equalities use Lemmas~\ref{lem:char-add} and~\ref{lem:char-ortho} respectively. It is straightforward to verify that the proposed two-dimensional subspace is invariant under $\pi$, which concludes the proof.
\end{proof}
\begin{thm}\label{thm:crd-maxmix-solution}
    The classical rate-distortion function with uniform input distribution $p_j=\bm{1}/n$ and Hamming distortion $\delta=\bm{1}\bm{1}^\top - \mathbb{I}$ is given by
    \begin{equation}
        R_c(D; p, \delta) = \begin{cases} 
            \log(n) - H \biggl( \biggl[1 - D, \underbrace{\frac{D}{n - 1}, \ldots, \frac{D}{n - 1}}_{(n-1) \textrm{ copies}} \biggr] \biggr), \quad &\textnormal{if } 0\leq D \leq \displaystyle 1-\frac{1}{n}\\
            0, &\textnormal{if } \displaystyle 1-\frac{1}{n} \leq D \leq 1.
        \end{cases}
    \end{equation}
    A channel which attains this rate for $0\leq D \leq 1-1/n$ is given by
    \begin{equation}
        Q = \frac{D}{n-1}\bm{1}\bm{1}^\top + \mleft(1 - \frac{nD}{n-1} \mright) \mathbb{I}.
    \end{equation}
\end{thm}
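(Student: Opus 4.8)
The plan is to follow the template of the proof of Theorem~\ref{thm:qrd-maxmix-solution}: solve the reparametrized problem~\eqref{eqn:crd} for a fixed $\kappa \geq 0$, express the optimal value and optimal joint distribution in terms of the induced distortion $D$, and then translate back into a channel. By the preceding corollary, which shows that a solution of~\eqref{eqn:crd} with $p = \bm{1}/n$ and $\delta = \bm{1}\bm{1}^\top - \mathbb{I}$ lies in $\mathcal{V}_\pi = \{ a\mathbb{I} + b\bm{1}\bm{1}^\top : a,b \in \mathbb{R} \}$, it suffices to minimize over $P = a\mathbb{I} + b\bm{1}\bm{1}^\top$. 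The marginal constraint~\eqref{eqn:crd-pt} forces $a + nb = 1/n$, leaving a one-parameter family; since the Hamming distortion satisfies $\sum_{ij} P_{ij}\delta_{ij} = 1 - \sum_i P_{ii}$, I would parametrize this family by that quantity $D \in [0,1]$, so that $P$ has $n$ diagonal entries equal to $(1-D)/n$ and $n(n-1)$ off-diagonal entries equal to $D/(n(n-1))$.

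Next I would observe that the output marginal $P\bm{1}$ is also uniform (its entries equal $a + nb = 1/n$), so the reference product distribution appearing in $I_c(P;p) = H\divx{P}{P\bm{1}p^\top}$ is the uniform distribution on $n^2$ outcomes, which gives $I_c(P;p) = \log(n) - H([1-D, D/(n-1), \ldots, D/(n-1)])$ after collecting the diagonal and off-diagonal entries. Substituting this into the objective~\eqref{eqn:crd-obj} yields a strictly convex univariate function of $D$ on $(0,1)$ whose derivative vanishes at $D(\kappa) = (n-1)/(e^{\kappa} + n - 1)$. Since the objective of~\eqref{eqn:crd} is convex and the corresponding $P^\ast$ is feasible for every $\kappa \geq 0$, the chain of equalities supplied by the corollary shows $P^\ast$ is a global minimizer; evaluating the mutual information at $P^\ast$ then gives the rate $R_c = I_c(P^\ast; p) = \log(n) + \kappa e^{\kappa}/(e^{\kappa} + n - 1) - \log(e^{\kappa} + n - 1)$ with induced distortion $D(\kappa)$.

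Eliminating $\kappa$ in favour of $D = D(\kappa)$ recovers $R_c(D; p, \delta) = \log(n) - H([1-D, D/(n-1), \ldots, D/(n-1)])$ on the range $D \in [0, 1-1/n]$, which corresponds to $\kappa \in [0,\infty)$; at $\kappa = 0$ the distortion equals $1 - 1/n$ (the independent coupling) and the rate equals $0$, and since $R_c(\wc; p, \delta)$ is non-increasing and non-negative in $D$ it is identically zero on $[1-1/n, 1]$, completing the formula. Finally, to exhibit an optimal channel I would invert the change of variables via $Q_{ij} = P^\ast_{ij}/p_j = n P^\ast_{ij}$, which gives exactly $Q = \tfrac{D}{n-1}\bm{1}\bm{1}^\top + (1 - \tfrac{nD}{n-1})\mathbb{I}$; this matrix is column stochastic by inspection and entrywise nonnegative precisely when $D \leq 1 - 1/n$, matching the non-trivial regime. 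I do not anticipate a serious obstacle: the computation is essentially the one in the proof of Theorem~\ref{thm:qrd-maxmix-solution} with $n^2$ replaced by $n$, and the only points that need care are checking that the output marginal is uniform (so that the mutual information collapses to $\log n$ minus an entropy) and verifying that the two branches of the formula agree at $D = 1 - 1/n$.
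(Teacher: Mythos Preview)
Your proposal is correct and follows essentially the same approach as the paper, which simply states that the result follows from a proof analogous to that of Theorem~\ref{thm:qrd-maxmix-solution}. You have filled in precisely those details---restricting to the two-dimensional invariant subspace via the preceding corollary, reducing to a one-variable problem, solving for the optimizer in terms of $\kappa$, and then reparametrizing in $D$ and recovering the channel---so nothing further is needed.
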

\begin{proof}
    This follows from a similar proof to that of Theorem~\ref{thm:qrd-maxmix-solution}.
\end{proof}

\subsection{Mirror descent}

Like the quantum case, we propose mirror descent to compute the classical rate-distortion function. This is justified by the following relative smoothness properties of mutual information.
\begin{lem}[{{\cite[Theorem 6.4]{he2023mirror}}}]
    Consider any input distribution $p\in\mathcal{P}_n$. Classical mutual information $I_c(\wc; p)$ is $1$-smooth relative to $-H$ on $\mathcal{P}_{m\times n}$.
\end{lem}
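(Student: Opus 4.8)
The plan is to verify the defining inequality of relative smoothness directly from Definition~\ref{def:rel}. With $L = 1$, reference function $\varphi = -H$, and objective $f = I_c(\wc; p)$, it suffices to show that $(-H) - I_c(\wc; p)$ is convex on $\relinterior\mathcal{P}_{m\times n}$ (and in fact I expect convexity on all of $\mathbb{R}^{m\times n}_+$, which is stronger).

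First I would expand the mutual information using the definition $I_c(P; p) = H\divx{P}{P\bm{1}p^\top}$. Splitting the logarithm and collecting terms gives
\[
    I_c(P; p) = \sum_{i,j} P_{ij}\log P_{ij} - \sum_{i,j} P_{ij}\log\bigl((P\bm{1})_i\, p_j\bigr) = -H(P) + H(P\bm{1}) - \inp{P^\top\bm{1}}{\log p},
\]
where $H$ is applied to $P$ viewed as a vector in $\mathbb{R}^{mn}$ and to the row-sum vector $P\bm{1}\in\mathbb{R}^m$, and where $\log p$ is finite by the standing assumption $p > 0$. Subtracting this identity from $-H(P)$ produces the key cancellation
\[
    (-H)(P) - I_c(P; p) = -H(P\bm{1}) + \inp{P^\top\bm{1}}{\log p}.
\]

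The right-hand side is manifestly convex: the map $P \mapsto P\bm{1}$ is linear, $-H$ is convex on $\mathbb{R}^m_+$, so $P\mapsto -H(P\bm{1})$ is convex as the composition of a convex function with a linear map; the term $P\mapsto\inp{P^\top\bm{1}}{\log p}$ is linear, hence convex; and the sum of the two is convex on $\mathbb{R}^{m\times n}_+$, in particular on $\relinterior\mathcal{P}_{m\times n}$. This is exactly the assertion that $I_c(\wc; p)$ is $1$-smooth relative to $-H$ on $\mathcal{P}_{m\times n}$.

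I do not anticipate a genuine obstacle; the only points requiring care are the bookkeeping in the expansion of $I_c$ and noting that $\log p$ is finite under the standing full-support assumption on $p$. As an alternative one could deduce the result from its quantum counterpart Lemma~\ref{lem:qrd-smooth} by restricting to diagonal density matrices (under which $I_q(\wc; \rho_A)$ reduces to $I_c(\wc; p)$ and $-S$ to $-H$, cf.\ Example~\ref{exmp:cc}), but the direct computation above is shorter and self-contained.
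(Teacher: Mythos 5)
Your computation is correct, and the argument is clean. Note that the paper does not actually give a proof here: the statement is cited directly from the authors' earlier work~\cite[Theorem 6.4]{he2023mirror}, so there is no in-paper argument to compare against. Your expansion
\begin{equation*}
    I_c(P; p) = -H(P) + H(P\bm{1}) - \inp{P^\top\bm{1}}{\log p}
\end{equation*}
is exactly the right decomposition; subtracting from $-H(P)$ cancels the $-H(P)$ term, leaving $(-H)(P)-I_c(P;p) = -H(P\bm{1}) + \inp{P^\top\bm{1}}{\log p}$, which is convex because it is the precomposition of the convex function $-H$ on $\mathbb{R}^m_+$ with the linear map $P\mapsto P\bm{1}$, plus a linear term (finite by the standing assumption $p>0$). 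That verifies the definition of $1$-smoothness relative to $-H$ directly, and indeed the convexity holds on all of $\mathbb{R}^{m\times n}_+$, not merely on $\relinterior\mathcal{P}_{m\times n}$. Your alternative route via Lemma~\ref{lem:qrd-smooth} by restriction to diagonal states (cf.\ Example~\ref{exmp:cc}) is also valid, though as you say the direct calculation is shorter and self-contained. The only remark worth adding is that this decomposition also shows the relative-smoothness constant $L=1$ is tight whenever $m\geq 2$: along directions $V$ with $V\bm{1}=0$ and $V^\top\bm{1}=0$, the Hessian of $(-H)-I_c(\wc;p)$ vanishes while that of $-H$ does not, so $L$ cannot be reduced below $1$.
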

Therefore, following from Fact~\ref{fact:conv}, mirror descent with a unit step size $t_k=1$ and kernel function $-H$ is guaranteed to converge sublinearly to the optimal value. The mirror descent update~\eqref{eqn:mirror-descent} for the classical rate-distortion problem can be expressed as the following analytic expression
\begin{equation}
    P_{ij}^{k+1} = p_j \frac{q^k_i \exp(-\kappa \delta_{ij})}{\sum_{i=1}^m q^k_i \exp(-\kappa \delta_{ij})} \quad \textrm{for} \quad i=1,\ldots,m,~j=1,\ldots,n. \label{eqn:crd-update}
\end{equation}
where $q^k_i = \sum_{j=1}^n P_{ij}^k$ is the marginal probability of obtaining the $i$-th output. This iteration is precisely the same as the Blahut-Arimoto algorithm~\cite{blahut1972computation}.

Like the quantum rate-distortion problem with the entanglement fidelity distortion measure, empirical results show that using the mirror descent iterates~\eqref{eqn:crd-update} to compute the classical rate-distortion with Hamming distortion results in a linear convergence rate. More details about this can be found in Appendix~\ref{appdx:linear}.

\section{Equivalencies between rate-distortion problems} \label{appdx:rd-equiv}

To prove that two optimization problems are equivalent, it suffices to show that we can use a feasible point of one optimization problem to obtain a feasible point of the other problem with the same objective value, and vice versa. 

To show equivalencies between quantum rate-distortion problems, we will first introduce the Choi representation of a quantum channel, which will be a convenient mathematical representation than~\eqref{eqn:kraus-choi} for us to use to analyze the quantum rate-distortion problem.
\begin{defn}[Choi operator]\label{defn:choi}
    Consider a linear map $\mathcal{N}:\mathbb{H}^n\rightarrow\mathbb{H}^m$ which maps Hermitian matrices on an $n$-dimensional input system $A$ to Hermitian matrices on an $m$-dimensional output system $B$. The \emph{Choi matrix representation} $C_\mathcal{N}\in\mathbb{H}^{mn}$ of $\mathcal{N}$, which acts on the bipartite system $BA$, is defined as
    \begin{equation}
        C_\mathcal{N} \coloneqq \sum_{ij}^n \mathcal{N}(v_iv_j^\dag) \otimes v_iv_j^\dag,
    \end{equation}
    where $\{ v_i \}$ is any orthonormal basis for $A$.
\end{defn}

A linear operator $\mathcal{N}$ is completely positive if and only if $C_\mathcal{N}\succeq0$~\cite[Theorem 2.22]{watrous2018theory}, and is trace preserving if and only if $\tr_B(C_\mathcal{N})=\mathbb{I}_A$~\cite[Theorem 2.26]{watrous2018theory}. Applying a linear operator on input state $\rho_A\in\mathcal{D}_n$ is equivalent to $\mathcal{N}(\rho_A)=\tr_A(C_\mathcal{N}(\mathbb{I}_B\otimes\rho_A))$~\cite[Equation (2.66)]{watrous2018theory}.

\subsection{Basis independence} \label{appdx:rd-equiv-basis}
We will first show that that the quantum rate-distortion problem is independent of the eigenvectors of the input state $\rho_A$ by showing that we can construct an equivalent optimization problem by performing a suitable unitary transform on systems $A$ and $R$. For a given spectral decomposition $\rho_A=\sum_{i=1}\lambda_i v_iv_i^\dag$ of $\rho_A$, let us define the unitary $U=\sum_{i=1}^{n} v_iu_i^\dag$ where $\{u_i\}$ is any orthonormal basis for $A$. We can define a new problem with input state $U^\dag \rho_A U$, purification $(U\otimes U)^\dag \rho_{AR} (U\otimes U)$ and distortion matrix $(\mathbb{I}\otimes U)^\dag \Delta (\mathbb{I}\otimes U)$. For any feasible channel $\mathcal{N}$ of the original problem~\eqref{eqn:pre-qrd}, we can construct feasible channel $\mathcal{N}'$ for the new problem with the same objective value by letting
\begin{equation*}
    C_{\mathcal{N}'} = (\mathbb{I}\otimes U)^\dag C_{\mathcal{N}} (\mathbb{I}\otimes U).
\end{equation*}
Equivalency in the reverse direction follows from an identical argument. We note that a similar argument can be used to show that rate-distortion problem is independent of the choice of purification, even when $R$ is chosen not to be isomorphic to $A$.

\subsection{Degenerate and non-degenerate inputs} \label{appdx:rd-equiv-degenerate}
Here, we show that we can always assume that the input states $p$ and $\rho_A$ are non-degenerate (i.e., $p>0$ and $\rho_A\succ0$), as it is possible to construct equivalent optimization problems of a lower input dimension to avoid any degeneracies. For the classical rate-distortion function, we can construct a new problem where we simply ignore the $j$-th element of $p_j$ and column of $\delta$ for all $j$ corresponding to $p_j=0$. Clearly, given any feasible channel $Q$ of the original problem~\eqref{eqn:pre-crd} we can construct a feasible channel $Q'$ to the reduced problem with the same objective value by dropping the corresponding columns of $Q$. In the reverse direction, we just need to populate the columns of $Q'$ we removed with any valid probability distribution.

For the quantum rate-distortion function, let us assume $\{ v_i \}$ for $i=1,\ldots,n'$ are the eigenvectors corresponding to non-zero eigenvalues of the (possibly rank-deficient) matrix $\rho_A$. Similar to the argument for basis independence, we can define a new $n'$-dimensional system $A'$ with orthonormal basis $\{ u_i \}$ and an isometry $V=\sum_{i=1}^{n'} v_iu_i^\dag$. We use this to define a new problem with a full-rank input state $V^\dag\rho_AV$, purification $(V\otimes V)^\dag\rho_{AR}(V\otimes V)$ and distortion matrix $(\mathbb{I}\otimes V)^\dag \Delta (\mathbb{I}\otimes V)$. For any feasible channel $\mathcal{N}$ of the original problem~\eqref{eqn:pre-qrd}, we can construct feasible channel $\mathcal{N}'$ for the reduced problem with the same objective value by letting
\begin{equation*}
    C_{\mathcal{N}'} = (\mathbb{I}\otimes V)^\dag C_{\mathcal{N}} (\mathbb{I}\otimes V).
\end{equation*}
In the reverse direction, we have
\begin{equation*}
    C_{\mathcal{N}} = (\mathbb{I}\otimes V) C_{\mathcal{N}'} (\mathbb{I}\otimes V)^\dag + \sum_{i=n'+1}^n \rho_i \otimes v_iv_i^\dag,
\end{equation*}
where $\rho_i\in\mathcal{D}_m$ is any valid density matrix for all $i=n'+1,\ldots,n$.

\subsection{Optimizing over channel and joint output state} \label{appdx:rd-equiv-state}
Now given non-degeneracy of $p$, it is fairly straightforward to show that we can map between feasible channels $Q$ and states $P$ which obtain the same objective values between~\eqref{eqn:pre-crd} and~\eqref{eqn:new-crd} using 
\begin{equation*}
    P_{ij} = p_jQ_{ij} \quad \textrm{and} \quad Q_{ij} = P_{ij} / p_j,
\end{equation*}
for all $i=1,\ldots,m$, $j=1,\ldots,n$. Note that this is well defined as $p_j>0$ for all $j=1,\ldots,n$. Similarly for the quantum setting, given that $\rho_A$ has spectral decomposition $\rho_A=\sum_{i=1}\lambda_i v_iv_i^\dag$, without loss of generality we will first represent $\sigma_{BR}$ and $\mathcal{N}$ as
\begin{equation*}
    \sigma_{BR} = \sum_{ij}^n X_{ij} \otimes v_iv_j^\dag  \quad \textrm{and} \quad C_{\mathcal{N}} = \sum_{ij}^n Y_{ij} \otimes v_iv_j^\dag,
\end{equation*}
for suitable $X_{ij},Y_{ij} \in \mathbb{H}^m$. Given this, we can map from feasible channels $\mathcal{N}$ and states $\sigma_{BR}$ which obtain the same objective values between~\eqref{eqn:pre-qrd} and~\eqref{eqn:new-qrd} using 
\begin{equation*}
    \sigma_{BR} = \sum_{ij}^n \sqrt{\lambda_i\lambda_j}\,Y_{ij} \otimes v_iv_j^\dag  \quad \textrm{and} \quad C_{\mathcal{N}} = \sum_{ij}^n \frac{1}{\sqrt{\lambda_i\lambda_j}} X_{ij} \otimes v_iv_j^\dag.
\end{equation*}
Again, this is well-defined as $\lambda_i>0$ for all $i=1,\ldots,n$ from the assumption that $\rho_A$ is full-rank. Note that the first relationship is equivalent to $\sigma_{BR}=(\mathcal{N}\otimes\mathbb{I})(\rho_{AR})$. 
\begin{equation*}
    C_{\mathcal{N}'} = (\mathbb{I}\otimes U_A)^\dag C_{\mathcal{N}} (\mathbb{I}\otimes U_A).
\end{equation*}

\section{Linear convergence rates}\label{appdx:linear}

In Figure~\ref{fig:rate_distortions}, we show the empirical convergence behaviors of exact mirror descent on the classical and quantum rate-distortion problems for the Hamming distortion $\delta = \bm{1}\bm{1}^\top - \mathbb{I}$ and entanglement fidelity distortion $\Delta = \mathbb{I} - \rho_{AR}$ respectively, for various values of $\kappa$ and randomly generated inputs. We observe linear convergence for all examples, and note that this linear rate appears to become faster as $\kappa$ increases. These observations were also shared across multiple randomly generated input states.

From Fact~\ref{fact:conv}, this suggests that classical and quantum mutual information might be strongly convex relative to Shannon and von Neumann entropy respectively. However, simple counterexamples show that this is false. For a continuously differentiable function $f$ to be strongly convex relative to a Legendre function $\varphi$ on a set $\mathcal{X}$, there must be a $\mu>0$ that satisfies
\begin{equation*}
    \inp{\nabla f(x) - \nabla f(y)}{x - y} \geq \mu (D_\varphi\divx{x}{y} + D_\varphi\divx{y}{x}),
\end{equation*}
for all $x, y \in \relinterior \mathcal{X}$ (see e.g.,~\cite[Proposition 1.1(b)]{lu2018relatively}). For classical mutual information with input distribution $p\in\mathcal{P}_n$, we can show that for $X,Y\in\mathcal{P}_{m\times n}$ with columns $X_j = p_jx$ and $Y_j = p_jy$ for any $x,y\in\mathcal{P}_m$ for all $j=1,\ldots,n$ that
\begin{align*}
    \inp{\nabla I_c(X; p) - \nabla I_c(Y; p)}{X - Y} = 0,
\end{align*}
Similarly, for the quantum case with input state $\rho_A\in\mathcal{D}_n$, we can show that for $\sigma_{BR}=\sigma_B\otimes\rho_A$ and $\omega_{BR}=\omega_B\otimes\rho_A$ for any $\sigma_B,\omega_B \in \mathcal{D}_m$, that
\begin{align*}
    \inp{\nabla I_q(\sigma_{BR}; \rho_A) - \nabla I_q(\omega_{BR}; \rho_A)}{\sigma_{BR} - \omega_{BR}} = 0,
\end{align*}
by using additivity of quantum relative entropy~\cite[Property 11.2.1]{wilde2017quantum}. Therefore, we conclude that classical and quantum mutual information are not relatively strongly convex on their respective domains. Instead, it is possible to gain insights into this linear convergence behavior through the \emph{local} convexity properties of mutual information at the solutions to the rate-distoriton problems.

\begin{figure}[t]
\centering
\pgfplotstableread[col sep=comma]{figures/data/rate_distortions.csv}\data
\begin{tikzpicture}
    \definecolor{clr1}{RGB}{0,0,0}
    \definecolor{clr2}{RGB}{75,75,75}
    \definecolor{clr3}{RGB}{150,150,150}
    
    \begin{semilogyaxis}[
        name=mainplot,
        width = 0.5\columnwidth,
        height = 0.5\columnwidth,
        xmin = 0.0, xmax = 50,
        ymin = 1e-8, ymax=1,
        ylabel=Optimality gap (bits),
        xlabel=Iteration $k$,
        label style = {font=\small},
        ticklabel style = {font=\small},
        title style = {font=\small},
        legend pos  = north east,
        legend cell align={left},
        legend style={fill=white, fill opacity=0.6, draw opacity=1,text opacity=1, font=\small},
        no markers,
        every axis plot/.append style={very thick},
        grid=major,
        major grid style={line width=.1pt,draw=gray!20},
        yminorticks=false,
        title = {(a) Classical rate distortion},
        ]
        
        \addplot[clr3, dashdotted] table[x=i , y=c25] {\data}; 
        \addplot[clr2, densely dashed] table[x=i , y=c3] {\data};
        \addplot[clr1] table[x=i , y=c45] {\data};

        \legend{
            $\kappa=2.5$,
            $\kappa=3$,
            $\kappa=4.5$
        }

    \end{semilogyaxis}
\end{tikzpicture}\hspace{0.5cm}\begin{tikzpicture}
    \definecolor{clr1}{RGB}{0,0,0}
    \definecolor{clr2}{RGB}{75,75,75}
    \definecolor{clr3}{RGB}{150,150,150}

    \begin{semilogyaxis}[
        name=secondplot,
        at={(mainplot.north east)},
        yshift=0.05cm,
        anchor=north west,   
        width = 0.5\columnwidth,
        height = 0.5\columnwidth,
        xmin = 0.0, xmax = 50,
        ymin = 1e-8, ymax=1,
        xlabel=Iteration $k$,
        yticklabels={,,},
        label style = {font=\small},
        ticklabel style = {font=\small},
        title style = {font=\small},
        legend pos  = north east,
        legend cell align={left},
        legend style={fill=white, fill opacity=0.6, draw opacity=1,text opacity=1, font=\small},
        no markers,
        every axis plot/.append style={very thick},
        grid=major,
        major grid style={line width=.1pt,draw=gray!20},
        yminorticks=false,
        title = {(b) Quantum rate distortion}
        ]
        
        \addplot[clr3, dashdotted] table[x=i , y=q4e] {\data}; 
        \addplot[clr2, densely dashed] table[x=i , y=q5e] {\data};
        \addplot[clr1] table[x=i , y=q7e] {\data};


        \legend{
            $\kappa=4$,
            $\kappa=5$,
            $\kappa=7$
        }
        
    \end{semilogyaxis}
\end{tikzpicture}
\caption{Convergence behaviors of computing the classical and quantum rate-distortion functions using mirror descent for Hamming distortion and entanglement fidelity distortion and for randomly generated inputs of dimension $n=32$.}
\label{fig:rate_distortions}
\end{figure}
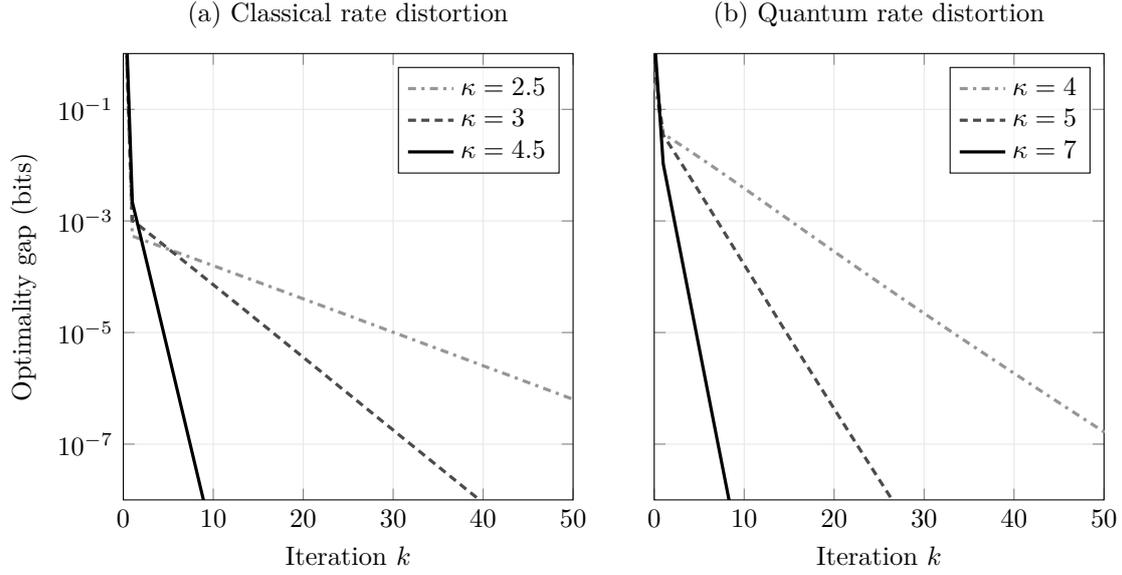

\subsection{Preliminaries}
We will first introduce the notion of local relative strong convexity.
\begin{defn}
    Let $f$ and $\varphi$ be twice continuously differentiable at some $x$, and let $\varphi$ be Legendre. Then $f$ is locally $\mu$-strongly convex relative to $\varphi$ at $x$ if there exists a $\mu>0$ such that
    \begin{equation*}
        \nabla^2 f(x) - \mu\nabla^2\varphi(x) \succeq 0.
    \end{equation*}
\end{defn}
We also introduce the following additional assumptions on the kernel function $\varphi$. The first two are standard assumptions used to prove convergence of the mirror descent iterates to the solution, rather than just convergence of the function values.
\begin{assump}\label{assump:varphi}
    The kernel function $\varphi: \mathbb{V}\rightarrow\bar{\mathbb{R}}$ satisfies the following.
    \begin{enumerate}[label=(\roman*), ref=\ref{assump:varphi}(\roman*)]
        \item The right partial level sets $\{ y\in\interior\domain\varphi : D_\varphi\divx{x}{y} \leq \alpha \}$ are bounded for all $x\in\domain\varphi$ and $\alpha\in\mathbb{R}$. \label{assump:varphi-i}
        \item If a sequence $\{ y^k \} \in \interior\domain\varphi$ converges to $x\in\domain\varphi$, then $D_\varphi\divx{x}{y^k}\rightarrow0$. \label{assump:varphi-ii}
        \item The function $\varphi$ is strongly convex. \label{assump:varphi-iii}
    \end{enumerate}
\end{assump}
We now present some results for local linear convergence.
\begin{prop}\label{prop:local-lin}
    Consider problem~\eqref{eqn:constr-min} where $x^*$ is any corresponding optimal solution, $f$ and $\varphi$ are twice continuously differentiable at $x^*$, and $\varphi$ is Legendre and satisfies Assumption~\ref{assump:varphi}. If $f$ is locally $\mu$-strongly convex relative to $\varphi$ at $x^*$, then the following results hold.
    \begin{enumerate}[label=(\roman*), ref=\ref{lem:local-lin}(\roman*)]
        \item The optimal solution $x^*$ is the unique optimal solution.
        \item Let $\{ x^k \}$ be the sequence of iterates generated by using the mirror descent iterates~\eqref{eqn:mirror-descent} to solve~\eqref{eqn:constr-min}. If $f$ is $L$-smooth relative to $\varphi$ and $t_k=1/L$ for all $k$, then there exists a $\mu' \in (0, \mu)$ and $N\in\mathbb{N}$ such that the sequence $\{ x^k \}$ satisfies
        \begin{equation}\label{eqn:md-linear}
            D_\varphi\divx{x^*}{x^k} \leq \biggl(1 - \frac{\mu'}{L}\biggr)^k D_\varphi\divx{x^*}{x^N},
        \end{equation}
        for all $k>N$.\label{lem:local-lin-b}
    \end{enumerate}
\end{prop}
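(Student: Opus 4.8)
The plan is to first establish uniqueness, then extract a linear rate from the recursion~\eqref{eqn:inexact-proof}-style inequality specialized to the exact mirror descent setting, using local relative strong convexity in a neighborhood of $x^*$ rather than globally.

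For part (i), suppose for contradiction that $x^{**} \neq x^*$ is another optimal solution. By convexity of $\mathcal{C}$ and of $f$, the entire segment $[x^*, x^{**}]$ is optimal, so $f$ is affine along this segment. But local $\mu$-strong convexity relative to $\varphi$ at $x^*$ gives $\nabla^2 f(x^*) \succeq \mu \nabla^2\varphi(x^*) \succ 0$ on the relevant subspace (using that $\varphi$ is strongly convex by Assumption~\ref{assump:varphi-iii}, so $\nabla^2\varphi(x^*)$ is positive definite); in particular $f$ has positive curvature at $x^*$ in the direction $x^{**}-x^*$, contradicting affineness along the segment. Hence $x^*$ is unique. (One subtlety: $\mathcal{A}(x^{**}-x^*)=0$, so the direction lies in $\ker\mathcal{A}$, and we only need positive curvature of $f$ restricted to $\ker\mathcal{A}$, which still follows.)

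For part (ii), the key steps are: (a) First, since $f$ is $L$-smooth relative to $\varphi$ with $t_k = 1/L$, standard mirror descent analysis (Fact~\ref{fact:conv}(i)) gives $f(x^k) \to f^*$, and under Assumptions~\ref{assump:varphi-i} and~\ref{assump:varphi-ii} the iterates are bounded and every limit point is optimal; by part (i) this forces $x^k \to x^*$. (b) Fix a convex compact neighborhood $\mathcal{X}$ of $x^*$ on which, by continuity of $\nabla^2 f$ and $\nabla^2\varphi$ and local relative strong convexity, there is $\mu' \in (0,\mu)$ with $\nabla^2 f(x) - \mu'\nabla^2\varphi(x) \succeq 0$ for all $x \in \mathcal{X}$, i.e.\ $f$ is $\mu'$-strongly convex relative to $\varphi$ on $\mathcal{X}$. (c) Choose $N$ so that $x^k \in \mathcal{X}$ for all $k \geq N$. (d) For $k \geq N$, rerun the exact mirror descent descent-lemma argument (the $\varepsilon_k = 0$ specialization of the chain leading to~\eqref{eqn:inexact-proof}, equivalently the proof of Fact~\ref{fact:conv}(ii) from~\cite{teboulle2018simplified}) but now using $\mu'$-relative strong convexity valid on $\mathcal{X}$, together with $f^* \le f(x^{k+1})$, to obtain $D_\varphi\divx{x^*}{x^{k+1}} \le (1-\mu'/L)\,D_\varphi\divx{x^*}{x^k}$. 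Iterating this from $N$ gives~\eqref{eqn:md-linear}.

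The main obstacle is step (d): the three-point / descent inequalities underlying Fact~\ref{fact:conv}(ii) are normally derived using \emph{global} relative strong convexity, so I must check that each inequality in that derivation only ever evaluates the convexity of $f - \mu'\varphi$ at points that are guaranteed to lie in $\mathcal{X}$. Concretely, the critical term is $f(x^{k+1}) - f(x^*) \ge \langle \nabla f(x^*), x^{k+1}-x^*\rangle + \mu' D_\varphi\divx{x^{k+1}}{x^*}$ (and the analogous bound with the roles adjusted), whose validity needs both $x^{k+1}$ and $x^*$ in $\mathcal{X}$ — which holds for $k \ge N$ since $\mathcal{X}$ is convex and contains the tail of the sequence and the limit. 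Once this localization is verified, the recursion and its iteration are routine. A secondary point to handle cleanly is that $D_\varphi\divx{x^*}{x^N}$ is finite (immediate since $x^* \in \mathcal{C} \subseteq \domain\varphi$ and $x^N \in \interior\domain\varphi$ by Fact~\ref{fact:interior}).
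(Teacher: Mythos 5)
Your proof is correct and follows essentially the same route as the paper's: both parts hinge on passing from the pointwise Hessian inequality at $x^*$ to a uniform one $\nabla^2 f(x) - \mu'\nabla^2\varphi(x) \succeq 0$ on a convex neighborhood (via continuity of the Hessians and Assumption~\ref{assump:varphi-iii}), then using convexity of the solution set for uniqueness and iterate convergence (from $L$-relative smoothness plus Assumption~\ref{assump:varphi}) to enter that neighborhood before invoking the linear rate of Fact~\ref{fact:conv}(ii). The one place you go beyond the paper's terse citation of Fact~\ref{fact:conv} is your explicit check in step (d) that the three-point inequality only ever tests convexity of $f-\mu'\varphi$ between points of the tail $\{x^k\}_{k\geq N}\cup\{x^*\}\subset\mathcal{X}$; that verification is correct and worth making explicit.
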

\begin{proof}
    As $\mu>0$ and $\varphi$ satisfies Assumption~\ref{assump:varphi-iii}, meaning $\nabla^2 \varphi(x) \succ 0$ for all $x \in \interior \domain \varphi$, then there exists a $\mu' \in (0, \mu)$ such that 
    \begin{equation*}
        \nabla^2 f(x^*) - \mu'\nabla^2\varphi(x^*) \succ 0.
    \end{equation*}    
    As $f$ and $\varphi$ are both twice continuously differentiable, then there exists an $\varepsilon>0$ such that 
    \begin{equation*}
        \nabla^2 f(x) - \mu'\nabla^2\varphi(x) \succeq 0,
    \end{equation*}
    for all $\norm{x - x^*} < \varepsilon$. For part (i), it follows from Assumption~\ref{assump:varphi-iii} that $f$ is strictly convex within this neighborhood, in which $x^*$ must therefore be the unique solution. As the set of solutions to~\eqref{eqn:constr-min} must be convex, $x^*$ must be the unique solution on the entire feasible set.
    
    For part (ii), we first note that given $L$-relative smoothness of $f$ and Assumption~\ref{assump:varphi} that the mirror descent iterates $\{ x^k \}$ converge to the unique solution $x^*$ (this follows from a similar argument as, e.g.,~\cite[Theorem 2(ii)]{bauschke2017descent}). Therefore there exists an $N\in\mathbb{N}$ such that $\norm{x^k - x^*} < \varepsilon$  for all $k>N$. Linear convergence within this region follows from Fact~\ref{fact:conv}.
\end{proof}

\begin{rem}\label{rem:varphi-prop}
    The kernel functions $\varphi=-H$ and $\varphi=-S$ satisfy Assumption~\ref{assump:varphi-i} and Assumption~\ref{assump:varphi-ii}. Additionally, they satisfy Assumption~\ref{assump:varphi-iii} on any bounded set (notably, they are $1$-strongly convex with respect to the $1$- and trace-norm over $\mathcal{P}_n$ and $\mathcal{D}_n$ respectively~\cite[Proposition 3.12]{he2023mirror}).
\end{rem}


\subsection{Analysis of rate-distortion problems}
If we can characterize the local relative strong convexity parameters of mutual information at the solutions to the rate-distortion problems, then Proposition~\ref{prop:local-lin} allows us to predict the convergence behavior of mirror descent around a neighborhood of the solution. Knowing these local relative strong convexity parameters a priori can be difficult for arbitrary parameterizations of the rate-distortion problems. However, in Section~\ref{sec:uniform}, we derived explicit expressions for the optimal solutions of the classical rate-distortion problem for uniform input distribution and Hamming distortion (Theorem~\ref{thm:crd-maxmix-solution}), and the quantum rate-distortion problem for maximally mixed input and entanglement fidelity distortion (Theorem~\ref{thm:qrd-maxmix-solution}). Therefore, we can compute the local relative strong convexity parameters of mutual information at these solutions by solving what amounts to a generalized eigenvalue problem. If we consider~\eqref{eqn:constr-min} where $\mathbb{V}=\mathbb{R}^n$, this amounts to finding the smallest $\mu$ that satisfies
\begin{equation*}
    (U^\dag \nabla^2 f(x^*) U) v = \mu (U^\dag \nabla^2\varphi(x^*) U) v,
\end{equation*}
for some $v\in\mathbb{R}^m\setminus\{0\}$, where $m=\dim_{\mathbb{R}}(\ker\mathcal{A})$ and $U\in\mathbb{R}^{n\times m}$ is an isometry with columns forming an orthonormal basis of $\ker\mathcal{A}$ such that $UU^\dag$ is a projection onto this nullspace. 

Figure~\ref{fig:qrd_maxmix} shows these values of the local relative strong convexity parameter $\mu$ for various values of distortion dual variables $\kappa$ and problem dimensions $n$ for both of these specific classical and quantum rate-distortion problems. Notably, we see that for both the classical and quantum scenarios that $\mu$ increases monotonically from $0$ to $1$ as $\kappa$ increases from $0$ to infinity. Combining these observations with Lemma~\ref{lem:qrd-smooth} and Proposition~\ref{prop:local-lin}, we conclude that mirror descent applied to solve these particular instances of rate-distortion problems will converge linearly around a neighborhood around the solution. Moreover, this linear rate becomes arbitrarily fast as $\kappa$ tends to infinity. 

As seen in Figure~\ref{fig:rate_distortions}, these observations reflect our empirical findings for the classical rate-distortion problem with Hamming distortion and quantum rate-distortion problem with entanglement fidelity distortion, not only for uniform input distributions and maximally mixed input states, but for \emph{any} input state. We remark that it is possible to rigorously prove that the $\lim_{\kappa\rightarrow\infty}\mu(\kappa)=1$ property seen in Figure~\ref{fig:qrd_maxmix} holds for any input state. 

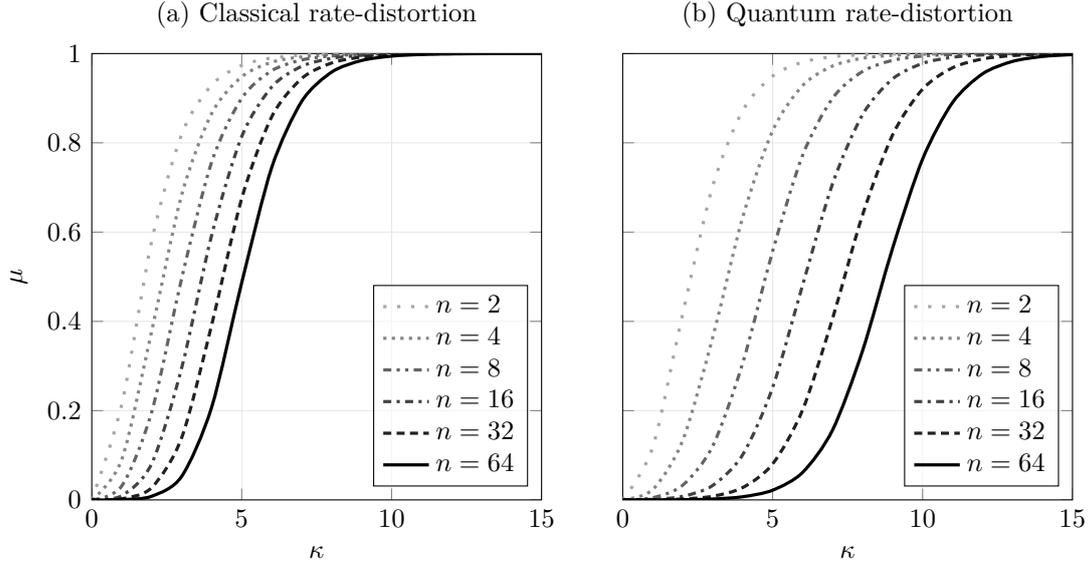
\begin{figure}[t]
\centering
\pgfplotstableread[col sep=comma]{figures/data/qrd_maxmix.csv}\data
\begin{tikzpicture}
    \definecolor{clr1}{RGB}{0,0,0}
    \definecolor{clr2}{RGB}{33,33,33}
    \definecolor{clr3}{RGB}{66,66,66}
    \definecolor{clr4}{RGB}{100,100,100}
    \definecolor{clr5}{RGB}{133,133,133}
    \definecolor{clr6}{RGB}{166,166,166}
    
    \begin{axis}[
        width = 0.5\columnwidth,
        height = 0.5\columnwidth,
        xmin = 0.0, xmax = 15,
        ymin = 0.0, ymax=1,
        xlabel=$\kappa$,
        ylabel={$\mu$},
        label style = {font=\small},
        ticklabel style = {font=\small},
        legend pos  = south east,
        legend cell align={left},
        legend style={fill=white, fill opacity=0.6, draw opacity=1,text opacity=1, font=\small},
        no markers,
        every axis plot/.append style={very thick},
        grid=major,
        major grid style={line width=.1pt,draw=gray!20},
        yminorticks=false,
        title={(a) Classical rate-distortion},
        title style = {font=\small},
        ]

        \addplot[smooth, clr6, loosely dotted] table[x=lambda , y=c2] {\data}; 
        \addplot[smooth, clr5, dotted] table[x=lambda , y=c4] {\data}; 
        \addplot[smooth, clr4, dashdotdotted] table[x=lambda , y=c8] {\data};
        \addplot[smooth, clr3, dashdotted] table[x=lambda , y=c16] {\data};
        \addplot[smooth, clr2, densely dashed] table[x=lambda , y=c32] {\data};
        \addplot[smooth, clr1] table[x=lambda , y=c64] {\data};
        
        \legend{
            $n = 2$,
            $n = 4$,
            $n = 8$,
            $n = 16$,
            $n = 32$,
            $n = 64$
        }
    \end{axis}
\end{tikzpicture}\hspace{0.5cm}\begin{tikzpicture}
    \definecolor{clr1}{RGB}{0,0,0}
    \definecolor{clr2}{RGB}{33,33,33}
    \definecolor{clr3}{RGB}{66,66,66}
    \definecolor{clr4}{RGB}{100,100,100}
    \definecolor{clr5}{RGB}{133,133,133}
    \definecolor{clr6}{RGB}{166,166,166}
    
    \begin{axis}[
        width = 0.5\columnwidth,
        height = 0.5\columnwidth,
        xmin = 0.0, xmax = 15,
        ymin = 0.0, ymax=1,
        xlabel=$\kappa$,
        yticklabels={,,},
        label style = {font=\small},
        ticklabel style = {font=\small},
        legend pos  = south east,
        legend cell align={left},
        legend style={fill=white, fill opacity=0.6, draw opacity=1,text opacity=1, font=\small},
        no markers,
        every axis plot/.append style={very thick},
        grid=major,
        major grid style={line width=.1pt,draw=gray!20},
        yminorticks=false,
        title={(b) Quantum rate-distortion},
        title style = {font=\small},
        ]

        \addplot[smooth, clr6, loosely dotted] table[x=lambda , y=q2] {\data}; 
        \addplot[smooth, clr5, dotted] table[x=lambda , y=q4] {\data}; 
        \addplot[smooth, clr4, dashdotdotted] table[x=lambda , y=q8] {\data};
        \addplot[smooth, clr3, dashdotted] table[x=lambda , y=q16] {\data};
        \addplot[smooth, clr2, densely dashed] table[x=lambda , y=q32] {\data};
        \addplot[smooth, clr1] table[x=lambda , y=q64] {\data};
        
        \legend{
            $n = 2$,
            $n = 4$,
            $n = 8$,
            $n = 16$,
            $n = 32$,
            $n = 64$
        }
    \end{axis}
\end{tikzpicture}
\vspace{-0.5em}
\caption{Local relative strong convexity parameters $\mu$ at the optimal point of the (a) classical rate-distortion problem with uniform input distribution and Hamming distortion and (b) quantum rate-distortion problem with maximally mixed input state and entanglement fidelity distortion, at different problem dimensions $n$ and distortion dual variables $\kappa$.}
\label{fig:qrd_maxmix}
\end{figure}


\section{Frank-Wolfe lower bound} \label{appdx:lb}

\subsection{Preliminaries}
We will first remind the reader about Frank-Wolfe type lower bounds for convex optimization problems over compact sets, then show how these can be applied to the classical and quantum rate-distortion problems. Consider~\eqref{eqn:constr-min} where $\mathcal{C}$ is a compact set. As $f$ is convex, any linear approximation will always lower bound the function. That is, for any $x, y \in \mathcal{C}$
\begin{equation}
   f(x) \geq f(y) + \langle \nabla f(y), x - y \rangle.
\end{equation}
By finding the infimum of the right hand side over $y$ over $\mathcal{C}$, we can obtain a value which lower bounds the function. As $\mathcal{C}$ is compact, this lower bound will always exist, and therefore we define
\begin{align}
   f_{lb}(x) &\coloneqq \min_{y\in\mathcal{C}} \bigl\{ f(x) + \langle \nabla f(x), y - x \rangle \bigr\} \nonumber\\
   &= f(x) - \inp{\nabla f(x)}{x} + \min_{y\in\mathcal{C}} \bigl\{ \inp{\nabla f(x)}{y} \bigr\}. \label{eqn:lowerbound}
\end{align}
We can use this to define the estimated optimality gap $e(x)$
\begin{equation}
    e(x) \coloneqq f(x)-f_{lb}(x)=\max_{y\in\mathcal{C}} \bigl\{ \langle \nabla f(x), x - y \rangle \bigr\},\label{eqn:opt-gap}
\end{equation}
which upper bounds the true optimality gap, i.e., $e(x) \geq f(x)-f^*$. We also show that this error bound converges to zero as $x$ converges to an optimal point.

\begin{prop}
    Consider the constrained convex optimization problem~\eqref{eqn:constr-min} where $\mathcal{C}$ is a compact set. At an optimal point $x^*$, the estimated optimality gap equals zero $e(x^*)=0$. Moreover, if $f$ is continuously differentiable, for any sequence $\{ x^k \}$ that converges to $x^*$, the sequence $\{ e(x^k) \}$ will converge to zero.
\end{prop}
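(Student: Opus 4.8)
The plan is to treat the two assertions separately. For the first, I would start from the inequality $e(x)\ge f(x)-f^*$ already noted above, which immediately gives $e(x^*)\ge f(x^*)-f^*=0$. For the reverse inequality I would use that $\mathcal{C}=\{x\in\domain f:\mathcal{A}(x)=b\}$ is convex, being the intersection of the convex set $\domain f$ with an affine set. Since $x^*$ minimizes the convex function $f$ over the convex set $\mathcal{C}$ and $f$ is differentiable at $x^*$, the first-order optimality condition yields $\inp{\nabla f(x^*)}{y-x^*}\ge 0$ for every $y\in\mathcal{C}$ (obtained by considering $f(x^*+t(y-x^*))\ge f(x^*)$ for $t\in(0,1]$, dividing by $t$, and letting $t\to 0^+$). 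Hence $\inp{\nabla f(x^*)}{x^*-y}\le 0$ for all $y\in\mathcal{C}$, so $e(x^*)=\max_{y\in\mathcal{C}}\inp{\nabla f(x^*)}{x^*-y}\le 0$, and therefore $e(x^*)=0$.

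For the second assertion I would write $h(x,y)\coloneqq\inp{\nabla f(x)}{x-y}$, which is jointly continuous in $(x,y)$ because $\nabla f$ is continuous and the inner product is bilinear, so that $e(x)=\max_{y\in\mathcal{C}}h(x,y)$ with the maximum attained thanks to compactness of $\mathcal{C}$. I would then establish $e(x^k)\to e(x^*)$ via two one-sided bounds. For the lower bound: fixing a maximizer $y^*$ of $h(x^*,\wc)$ over $\mathcal{C}$ gives $e(x^k)\ge h(x^k,y^*)\to h(x^*,y^*)=e(x^*)$, hence $\liminf_k e(x^k)\ge e(x^*)$. For the upper bound: choosing for each $k$ a maximizer $y^k$ of $h(x^k,\wc)$ over $\mathcal{C}$, every subsequence of $\{y^k\}$ has a further subsequence $\{y^{k_j}\}$ converging to some $\bar y\in\mathcal{C}$ by compactness, along which $e(x^{k_j})=h(x^{k_j},y^{k_j})\to h(x^*,\bar y)\le e(x^*)$; since every subsequence of $\{e(x^k)\}$ thus has a further subsequence with limit at most $e(x^*)$, we get $\limsup_k e(x^k)\le e(x^*)$. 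Combining with the first assertion gives $e(x^k)\to e(x^*)=0$.

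The genuinely routine parts are the first-order optimality condition and the subsequence bookkeeping in the second step. The one point needing a little care is the continuity of the value function $x\mapsto\max_{y\in\mathcal{C}}h(x,y)$; this is an instance of Berge's maximum theorem, but because the feasible set $\mathcal{C}$ is a fixed compact set that does not vary with $x$ and $h$ is jointly continuous, the elementary subsequence argument above suffices and no heavier machinery is required. I would also flag explicitly that $\nabla f$ is assumed to exist and be continuous at $x^*$ and at the points $x^k$, which is implicit in the statement since $e$ is defined in terms of $\nabla f$.
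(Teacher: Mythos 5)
Your proof is correct. For the first assertion you derive the variational inequality $\inp{\nabla f(x^*)}{y-x^*}\ge 0$ directly from difference quotients, whereas the paper phrases the same fact via the normal-cone condition $-\nabla f(x^*)\in N_\mathcal{C}(x^*)$; these are equivalent, so that part is essentially the paper's argument. For the second assertion you diverge: the paper observes that $e(x)=\inp{\nabla f(x)}{x}+\sigma_\mathcal{C}(-\nabla f(x))$, where $\sigma_\mathcal{C}$ is the support function of the compact set $\mathcal{C}$, and then cites continuity of support functions on bounded sets to conclude $e$ is continuous. You instead prove the needed continuity of $x\mapsto\max_{y\in\mathcal{C}}\inp{\nabla f(x)}{x-y}$ from scratch via a two-sided subsequence argument over the fixed compact feasible set. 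The paper's route is shorter and offloads the work to a standard convex-analysis fact; yours is self-contained and makes explicit where compactness of $\mathcal{C}$ and joint continuity of $(x,y)\mapsto\inp{\nabla f(x)}{x-y}$ enter, at the cost of more bookkeeping. Both are valid; your version is a reasonable choice if one prefers not to invoke the support-function lemma.
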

\begin{proof}
    The first order optimality condition for a constrained minimization problem~\eqref{eqn:constr-min} is
    \begin{equation}
       -\nabla f(x^*) \in N_\mathcal{C}(x^*), \label{eqn:opt-constr-cond}
    \end{equation}    
    where $N_\mathcal{C}$ is the normal cone of $\mathcal{C}$ (see e.g.~\cite{bertsekas2009convex}). It follows from the definition of the normal cone and the optimality condition~\eqref{eqn:opt-constr-cond} that for all $x\in\mathcal{C}$
    \begin{equation}
       \inp{\nabla f(x^*)}{x^* - x} \leq 0.
    \end{equation}
    Noting that the optimality gap $e(x^*)$ maximizes the left hand side over $x\in\mathcal{C}$ and is non-negative shows that $e(x^*)=0$ as desired.

    To show convergence of the optimality gap to zero, we recognize that $e(x)$ is the sum of a continuous function and the support function of a compact set. As support functions on bounded sets are continuous~\cite[Proposition 8.6]{arutyunov2016convex}, from continuous differentiability of $f$ and the fact that compositions of continuous functions are continuous, we conclude that $e(x)$ is a continuous function, from which the desired result follows.
\end{proof}

Over certain sets, there are well-known ways to compute the minimization required in~\eqref{eqn:lowerbound}, see, e.g.,~\cite{jaggi2013revisiting}. 

\subsection{Application to rate-distortion problems}
We now show how the Frank-Wolfe lower bound can be applied to both the classical and quantum rate-distortion problems.

\begin{rem}
    Using the mirror descent interpretation of Blahut-Arimoto algorithms~\cite{he2023mirror}, the lower bounds derived for Blahut-Arimoto algorithms applied to classical~\cite{blahut1972computation} and quantum~\cite{li2019blahut,ramakrishnan2020computing} channel capacities are identical to the Frank-Wolfe lower bound~\eqref{eqn:opt-gap}.
\end{rem}

\paragraph{Classical rate-distortion}
For the classical rate-distortion problem~\eqref{eqn:crd}, let us consider the set $\mathcal{P}' = \{ P\in\mathcal{P}_{m\times n} : \sum_{i=1}^m P_{ij} = p_j, j=1,\ldots,n\}$ where $p\in\mathcal{P}_n$. Over this set, we find
\begin{equation}
    \min_{y\in\mathcal{P}'} \{ \inp{\nabla f(x)}{y} \} = \sum_{j=1}^n p_j \min_i \{ \partial_{ij} f(x) \}. \label{eqn:col-stoc-ub}
\end{equation}
where $\partial_{ij} f$ is the partial derivative of $f$ with respect to the $(i,j)$-th coordinate, by noting the minimization is separable into minimizing each column of $P$ over a scaled probability simplex. 

\paragraph{Quantum rate-distortion}
For the quantum rate-distortion problem~\eqref{eqn:qrd}, consider the set $\mathcal{D}' = \{ \sigma_{BR} \in\mathcal{D}_{mn} : \tr_B(\sigma_{BR}) = \sigma \}$ where $\sigma\in\mathcal{D}_n$ with spectral decomposition $\sigma=\sum_{i=1}^n\lambda_iv_iv_i^\dag$. Unfortunately, it is not obvious how to compute the lower bound~\eqref{eqn:lowerbound} efficiently over this set. However, if the gradient is of the form
\begin{equation}\label{eqn:diag-grad}
    \nabla f(x) = \sum_{i=1}^m \sum_{j=1}^n g_{ij} b_ib_i^\dag \otimes v_jv_j^\dag,
\end{equation}
where $g_{ij}\in\mathbb{R}$ for all $i,j$ and $\{ b_i \}$ is any orthonormal bases for $B$, then a symmetry reduction argument shows that the left-hand side of~\eqref{eqn:simp-bound} is invariant under the same representation as in Example~\ref{exmp:cc}, and therefore the optimal value is an element of the subspace~\eqref{eqn:cc-subspace}. Therefore, the minimization is separable in the same way as~\eqref{eqn:col-stoc-ub}, and
\begin{equation}\label{eqn:simp-bound}
    \min_{y\in\mathcal{D}'} \{ \inp{\nabla f(x)}{y} \} = \sum_{j=1}^n \lambda_j \min_i \{ g_{ij} \}.
\end{equation}
Although~\eqref{eqn:diag-grad} this seems like a very limiting assumption, we can show this holds for the quantum rate-distortion problem setting we are most interested in, i.e., where distortion is measured using the entanglement fidelity, and if iterates are generated using Algorithm~\ref{alg:qrd}. To see this, consider~\eqref{eqn:qrd} parameterized by input state $\rho_A\in\mathcal{D}_n$ and distortion matrix $\Delta=\mathbb{I} - \rho_{AR}$. By substituting~\eqref{eqn:qrd-iter} into the gradient of quantum mutual information~\cite[Equation (61)]{he2023mirror}, we can express the gradient of the objective function as
\begin{equation}\label{eqn:qmi-grad-diag}
    \nabla f(\sigma_{BR}^k) = (\log(\tr_R(\sigma_{BR}^{k-1}))  - \log(\tr_R(\sigma_{BR}^k))) \otimes \mathbb{I} - \mathbb{I}\otimes \nu^k.
\end{equation}
From Theorem~\ref{cor:sym-space} and Corollary~\ref{cor:subspace-ptrace}, we know $\tr_R(\sigma_{BR}^k)=\sum_{i=1}^n \alpha_i v_iv_i^\dag$ and $\nu^k=\sum_{j=1}^n \beta_j v_jv_j^\dag$ for some $\alpha,\beta\in\mathbb{R}^n$ for all $k$, and therefore the gradient is of the desired form~\eqref{eqn:diag-grad}.

This property of the gradients no longer holds in general if we use the inexact method in Algorithm~\ref{alg:inexact-qrd}. However, we can still use this lower bounding technique whenever we solve a step exactly, which we can choose to do at the last iteration before terminating the algorithm to obtain an optimality gap for the algorithm's final output. 

\section{Background proofs}

\subsection{Proof of Proposition~\ref{prop:dual-prop}}\label{appdx:proof-dual}
We first present a simple result about how strict convexity and coercivity are preserved by injective linear maps.
\begin{lem}\label{lem:injective}
    Consider a function $f:\mathbb{V}\rightarrow\mathbb{R}$ and an affine operator   $\mathcal{A}:\mathbb{V}\rightarrow\mathbb{V}'$.
    \begin{enumerate}[label=(\roman*), ref=\ref{lem:injective}(\roman*)]
        \item If $f$ is strictly convex, then $h(x) = f(\mathcal{A}^\dag(x))$ is strictly convex if and only if $\mathcal{A}^\dag$ is injective.\label{lem:injective-a}
        \item If $f$ is coercive, then $h(x) =  f(\mathcal{A}^\dag(x))$ is coercive if and only if $\mathcal{A}^\dag$ is injective. \label{lem:injective-b}
    \end{enumerate}
\end{lem}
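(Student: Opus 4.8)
The plan is to prove both parts by a direct argument, treating each of the two implications ("only if" and "if") separately, and noting that the affine part of $\mathcal{A}^\dag$ contributes nothing to either strict convexity or coercivity, so we may reduce to the linear case. Write $\mathcal{A}^\dag(x) = Bx + c$ where $B$ is linear and $c$ is a fixed vector; since $f$ strictly convex implies $f(\wc + c)$ strictly convex, and $f$ coercive implies $f(\wc + c)$ coercive, it suffices to handle $h(x) = f(Bx)$ with $B$ linear. Throughout I would use the standard fact that a linear map $B$ is injective if and only if $\ker B = \{0\}$, if and only if there is a constant $m > 0$ with $\norm{Bx} \geq m \norm{x}$ for all $x$ (the latter equivalence using finite-dimensionality).

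For part (i): if $B$ is injective, then for distinct $x, y$ we have $Bx \neq By$, so strict convexity of $f$ along the segment $[Bx, By]$ transfers directly to strict convexity of $h$ along $[x,y]$, using $B(\lambda x + (1-\lambda)y) = \lambda Bx + (1-\lambda) By$. Conversely, if $B$ is not injective, pick $v \neq 0$ with $Bv = 0$; then $h$ is constant on the line $\{x_0 + tv : t \in \mathbb{R}\}$ for any $x_0$, hence affine (not strictly convex) on that segment, so $h$ is not strictly convex. For part (ii): if $B$ is injective, take any sequence $\{x^k\}$ with $\norm{x^k} \to \infty$; then $\norm{Bx^k} \geq m\norm{x^k} \to \infty$, so coercivity of $f$ gives $h(x^k) = f(Bx^k) \to \infty$, i.e., $h$ is coercive. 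Conversely, if $B$ is not injective, pick $v \neq 0$ with $Bv = 0$; then $\norm{kv} \to \infty$ while $h(kv) = f(0)$ is constant, so $h$ is not coercive.

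The argument is essentially routine; the only point requiring a little care is the "only if" direction, where I exhibit a bad direction in $\ker B$ and must be slightly careful that in part (ii) this direction genuinely escapes to infinity — which is immediate since $\norm{kv} = k\norm{v} \to \infty$ as $v \neq 0$. I do not anticipate any real obstacle; the main thing is to state the equivalence "$B$ injective $\iff$ $\exists m>0: \norm{Bx} \geq m\norm{x}$" correctly, which relies on finite-dimensionality of $\mathbb{V}$ (it is the statement that an injective linear map between finite-dimensional spaces is bounded below, proved by compactness of the unit sphere). Since the ambient spaces in the paper are finite-dimensional inner product spaces, this is available.
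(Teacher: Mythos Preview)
Your proposal is correct and follows essentially the same argument as the paper: injectivity transfers strict convexity via distinct images and gives a lower bound $\norm{Bx}\geq m\norm{x}$ (obtained by minimizing over the unit sphere) for coercivity, while non-injectivity yields a kernel direction along which $h$ is constant. Your explicit reduction from the affine case to the linear case via $\mathcal{A}^\dag(x)=Bx+c$ is a small extra bit of care the paper's proof omits (it tacitly treats $\mathcal{A}^\dag$ as linear), but otherwise the two proofs coincide.
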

\begin{proof}
    For part (i), consider any $x,y\in\mathbb{V}'$ where $x\neq y$. If $\mathcal{A}^\dag$ is injective, then $\mathcal{A}^\dag(x)\neq\mathcal{A}^\dag(y)$, and therefore from strict convexity of $f$ we have for $\lambda\in[0, 1]$
    \begin{align}\label{eqn:strict-proof}
        f(\mathcal{A}^\dag(\lambda x + (1-\lambda)y)) = f(\lambda \mathcal{A}^\dag(x) + (1-\lambda)\mathcal{A}^\dag(y)) < \lambda f(\mathcal{A}^\dag(x)) + (1-\lambda)f(\mathcal{A}^\dag(y)),
    \end{align}
    which proves strict convexity of $h$. If $\mathcal{A}^\dag$ is not injective, then there exists $x,y\in\mathbb{V}$ where $x\neq y$ such that $\mathcal{A}^\dag(x)=\mathcal{A}^\dag(y)$. For this example, the inequality~\eqref{eqn:strict-proof} must hold with equality throughout, showing that $h$ is not strictly convex.

    For part (ii), consider a sequence $x^k$ satisfying $\norm{x^k}\rightarrow\infty$. Let $a=\min\{ \norm{\mathcal{A}^\dag(x)} : \norm{x}=1 \}$. Clearly $a>0$ if and only if $\mathcal{A}^\dag$ is injective (otherwise $\mathcal{A}^\dag$ has nontrivial kernel), in which case $a\norm{x^k}\leq\norm{\mathcal{A}^\dag(x^k)}$ for all $k\in\mathbb{N}$, and so $\norm{\mathcal{A}^\dag(x^k)}\rightarrow\infty$. Therefore, it follows from coercivity of $f$ that $h$ is also coercive. If $\mathcal{A}^\dag$ is not injective, then there exists a direction $d\in\ker\mathcal{A}^\dag$ that we can use to construct the sequence $x^k = x^0 + kd$. This satisfies $\norm{x^k}\rightarrow\infty$, but $h(x^k) = h(x^0)$ for all $k\in\mathbb{N}$, and so $h$ is not coercive.
\end{proof}

We now present the main proof.
\begin{proof}[Proof of Proposition~\ref{prop:dual-prop}]
    Part (i) follows from Fact~\ref{fact:legendre-i} and Lemma~\ref{lem:injective-a}. For part (ii), we first use~\cite[Fact 2.11]{bauschke1997legendre} to show that because $\varphi^*$ is closed, convex and proper, that the function $x\mapsto \varphi^*(x)-\inp{s}{x}$ is coercive if and only if $s\in\interior\domain\varphi$. Using Lemma~\ref{lem:injective-b}, it follows that
    \begin{equation*}
        \nu \mapsto \frac{1}{t} \varphi^*[ \nabla\varphi(y) - t(\nabla f(y) + \mathcal{A}^\dag(\nu)) ] - \biggl\langle \frac{s}{t}, \nabla\varphi(y) - t(\nabla f(y) + \mathcal{A}^\dag(\nu))  \biggr\rangle,
    \end{equation*}
    is coercive if and only if $s\in\interior\domain\varphi$ and $\mathcal{A}^\dag$ is injective. Letting $b=\mathcal{A}(s)$ and discarding constant terms gives us the negative dual function $-g(\wc; y)$, which concludes the proof.
\end{proof}

\subsection{Proof of Proposition~\ref{prop:dual-ii}} \label{appdx:dual}

We will first show that the dual function has a negative definite Hessian.
\begin{prop}\label{prop:dual-neg-def}
    The dual function $g^k_q$, as defined in~\eqref{eqn:dual}, has negative definite Hessian. 
\end{prop}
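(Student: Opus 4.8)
The plan is to evaluate the Hessian quadratic form $\inp{V}{\mathsf{D}\nabla g_q^k(\nu)[V]}$ directly from the formula~\eqref{eqn:hess-dual} and to show it is strictly negative for every nonzero $V\in\mathbb{H}^n$. Set $W\coloneqq\mathbb{I}_B\otimes V$, and recall the diagonalization~\eqref{eqn:diag-dual-hess}, writing $U\Lambda U^\dag$ for $\log(\tr_R(\sigma_{BR}^k))\otimes\mathbb{I}_R-\mathbb{I}_B\otimes\nu-\kappa\Delta$. Pairing~\eqref{eqn:hess-dual} with $V$, using that $\tr_B$ is the adjoint of $X\mapsto\mathbb{I}_B\otimes X$, and then using unitary invariance of the Hermitian inner product, gives
\[
\inp{V}{\mathsf{D}\nabla g_q^k(\nu)[V]}
= -\inp{W}{U(f^{[1]}(\Lambda)\odot(U^\dag W U))U^\dag}
= -\inp{M}{f^{[1]}(\Lambda)\odot M},
\]
where $M\coloneqq U^\dag W U$ is Hermitian. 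Expanding the Hadamard product entrywise, and using that $\Lambda$ is real so $f^{[1]}(\Lambda)$ is real and symmetric, this reduces to
\[
\inp{V}{\mathsf{D}\nabla g_q^k(\nu)[V]} = -\sum_{i,j} f^{[1]}(\lambda_i,\lambda_j)\,\abs{M_{ij}}^2 .
\]

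The second step is to note that every divided difference of $f(x)=e^x$ is strictly positive: for $\lambda\neq\mu$, equation~\eqref{eqn:div-diff-a} writes $f^{[1]}(\lambda,\mu)$ as a ratio of $e^\lambda-e^\mu$ and $\lambda-\mu$, which have the same sign since $x\mapsto e^x$ is strictly increasing, and for $\lambda=\mu$ equation~\eqref{eqn:div-diff-b} gives $f^{[1]}(\lambda,\lambda)=e^\lambda>0$. Hence the sum above has only nonnegative terms, so the quadratic form is $\leq 0$, and it equals $0$ only when $M=0$. Since $U$ is unitary, $M=U^\dag W U=0$ forces $W=0$, and since $V\mapsto\mathbb{I}_B\otimes V$ is injective, $W=\mathbb{I}_B\otimes V=0$ forces $V=0$. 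Therefore $\inp{V}{\mathsf{D}\nabla g_q^k(\nu)[V]}<0$ for every nonzero $V$, which is precisely the statement that $g_q^k$ has negative definite Hessian.

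The only points requiring care are routine bookkeeping: transposing indices correctly when turning the trace of a Hadamard product into the sum $\sum_{i,j}f^{[1]}(\lambda_i,\lambda_j)\abs{M_{ij}}^2$, and making the passage from ``$\leq 0$'' to ``strictly negative'' rigorous via strict positivity of the divided differences together with injectivity of $V\mapsto\mathbb{I}_B\otimes V$; everything else is a direct manipulation of~\eqref{eqn:hess-dual}. It is also worth remarking that the argument uses nothing about $\sigma_{BR}^k$, $\nu$, $\kappa$ or $\Delta$ beyond Hermiticity of the exponent in~\eqref{eqn:diag-dual-hess}, so the negative definiteness holds uniformly over $k$ and over all dual iterates $\nu$, which is exactly what feeds the coercivity and strong-concavity claims behind Proposition~\ref{prop:dual}.
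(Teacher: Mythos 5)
Your proof is correct and follows essentially the same route as the paper: reduce the Hessian quadratic form via~\eqref{eqn:hess-dual} and the diagonalization~\eqref{eqn:diag-dual-hess} to $-\sum_{ij}f^{[1]}(\lambda_i,\lambda_j)\abs{M_{ij}}^2$, invoke strict positivity of the divided differences of $e^x$, and finish via injectivity of $V\mapsto\mathbb{I}_B\otimes V$ and unitarity of $U$. (Your $\abs{M_{ij}}^2$ is in fact the careful version of the paper's $(U^\dag(\mathbb{I}_B\otimes V)U)_{ij}^2$, which reads as an unconjugated square even though $M$ is Hermitian with generally complex off-diagonal entries.)
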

\begin{proof}
    From~\eqref{eqn:hess-dual} and using the relationship $\mathsf{D}^2 g_q^k(\nu)[V,V] = \inp{V}{\mathsf{D}\nabla g_q^k(\nu)[V]}$, we have
    \begin{equation*}
        \mathsf{D}^2 g_q^k(\nu)[V,V] = -\sum_{ij}^{mn} f^{[1]}(\Lambda)_{ij} (U^\dag (\mathbb{I}_B\otimes V) U)_{ij}^2,
    \end{equation*}
    where $U,\Lambda$ are given by~\eqref{eqn:diag-dual-hess} and $f^{[1]}(\Lambda)$ is the first divided differences matrix, as defined in~\eqref{eqn:first-divided-diff}, of $\Lambda$ corresponding to $f(x)=e^x$. As the exponential function is strictly positive and monotone, all elements of the divided differences matrix $f^{[1]}(\Lambda)$ are strictly positive. Moreover, as the tensor product is injective, $U^\dag (\mathbb{I}\otimes \nu) U = 0$ if and only if $\nu=0$. Therefore, we conclude that
    \begin{equation*}
        \mathsf{D}^2 g_q^k(\nu)[V,V] < 0,
    \end{equation*}
    for all $\nu$ and non-zero $V$, which proves the desired result.
\end{proof}

We will also briefly remind the reader about these fairly standard results which establish Lipschitz-continuity and strong convexity over compact sets.
\begin{lem}\label{lem:compact-Lipschitz}
    Consider a continuously differentiable function $f:\mathcal{U} \rightarrow \mathbb{V}$ where $\mathcal{U}$ is an open subset of an inner product space $\mathbb{V}'$, and a convex compact set $\mathcal{X} \subset \mathcal{U}$. Then $f$ is Lipschitz on $\mathcal{X}$, meaning that there exists an $L\geq0$ such that $\lVert f(x) - f(y) \rVert \leq L \lVert x - y \rVert$ for all $x,y \in \mathcal{X}$. 
\end{lem}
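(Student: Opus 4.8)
The plan is to reduce the statement to a uniform bound on the derivative of $f$ over $\mathcal{X}$ and then to apply the mean value inequality along line segments, which stay inside $\mathcal{X}$ precisely because $\mathcal{X}$ is convex.

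First I would observe that, since $f$ is continuously differentiable on the open set $\mathcal{U}$, the Fréchet derivative $x\mapsto\mathsf{D}f(x)$ is a continuous map from $\mathcal{U}$ into the space of bounded linear operators $\mathbb{V}'\rightarrow\mathbb{V}$; in particular, the real-valued function $x\mapsto\norm{\mathsf{D}f(x)}_{\mathrm{op}}$ (with $\norm{\cdot}_{\mathrm{op}}$ the operator norm) is continuous on $\mathcal{U}$. Restricting this function to the compact set $\mathcal{X}$ and invoking the extreme value theorem, it attains a finite maximum on $\mathcal{X}$, which I would take as the Lipschitz constant $L\coloneqq\max_{x\in\mathcal{X}}\norm{\mathsf{D}f(x)}_{\mathrm{op}}\geq0$.

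Next, for a fixed pair $x,y\in\mathcal{X}$ I would set $\gamma(t)=(1-t)x+ty$ for $t\in[0,1]$; by convexity of $\mathcal{X}$ each point $\gamma(t)$ lies in $\mathcal{X}\subset\mathcal{U}$, so $\phi\coloneqq f\circ\gamma$ is well defined and continuously differentiable on $[0,1]$, with $\phi'(t)=\mathsf{D}f(\gamma(t))[y-x]$ by the chain rule, and hence $\norm{\phi'(t)}\leq L\norm{y-x}$ for all $t$. Applying the fundamental theorem of calculus for vector-valued functions, $f(y)-f(x)=\phi(1)-\phi(0)=\int_0^1\phi'(t)\,dt$, and moving the norm inside the integral yields $\norm{f(y)-f(x)}\leq\int_0^1\norm{\phi'(t)}\,dt\leq L\norm{y-x}$, which is the claim.

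I do not anticipate a genuine obstacle here, as this is a classical fact; the only points requiring a little care are that the pointwise mean value \emph{equality} fails for vector-valued maps, so the integral (mean value inequality) form must be used as above, and that convexity of $\mathcal{X}$ is exactly what guarantees the whole segment $[x,y]$ remains inside the region where $f$ is differentiable so that the chain rule and the bound on $\norm{\mathsf{D}f}$ apply along it.
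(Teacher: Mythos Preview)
Your proof is correct and follows essentially the same approach as the paper, which simply cites the mean value theorem (for vector-valued maps) together with convexity of $\mathcal{X}$. You have spelled out the two ingredients---a uniform bound on $\norm{\mathsf{D}f}$ over the compact set and the integral mean value inequality along segments---and correctly noted that the pointwise mean value equality is unavailable for vector-valued $f$.
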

\begin{proof}
    This is a straightfoward consequence of the mean value theorem~\cite[Theorem X.4.5]{bhatia2013matrix} and convexity of $\mathcal{X}$.
\end{proof}

\begin{lem}\label{lem:compact-strong}
    Consider a twice continuously differentiable function $f:\mathcal{U} \rightarrow \mathbb{R}$ where $\mathcal{U}$ is an open subset of an inner product space $\mathbb{V}$, and a convex compact set $\mathcal{X} \subset \mathcal{U}$. If $f$ has a positive definite Hessian, then $f$ is strongly convex on $\mathcal{X}$, meaning that there exists a $\mu>0$ such that $\mathsf{D}^2f(x)[v,v]\geq\mu$ for all $x\in\mathcal{X}$ and $v\in\mathbb{V}$ satisfying $\norm{v}=1$.
\end{lem}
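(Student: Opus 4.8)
The plan is a routine compactness argument via the extreme value theorem. First I would define the function $g : \mathcal{X} \times \mathbb{S} \to \mathbb{R}$ by $g(x, v) \coloneqq \mathsf{D}^2 f(x)[v, v]$, where $\mathbb{S} \coloneqq \{ v \in \mathbb{V} : \norm{v} = 1 \}$ denotes the unit sphere of $\mathbb{V}$. Since $f$ is twice continuously differentiable, the map $x \mapsto \mathsf{D}^2 f(x)$ is continuous into the (finite-dimensional) space of symmetric bilinear forms on $\mathbb{V}$, and the evaluation $(B, v) \mapsto B[v, v]$ is jointly continuous in $B$ and $v$; composing these shows $g$ is continuous on $\mathcal{X} \times \mathbb{S}$.

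Next I would observe that $\mathcal{X}$ is compact by hypothesis and that $\mathbb{S}$ is compact because $\mathbb{V}$ is a finite-dimensional inner product space (so $\mathbb{S}$ is closed and bounded), hence $\mathcal{X} \times \mathbb{S}$ is compact. By the extreme value theorem $g$ attains its infimum on $\mathcal{X} \times \mathbb{S}$ at some point $(x_0, v_0)$; set $\mu \coloneqq g(x_0, v_0)$. Because $f$ has positive definite Hessian, $\mathsf{D}^2 f(x_0)[v_0, v_0] > 0$, so $\mu > 0$. Then for every $x \in \mathcal{X}$ and every $v \in \mathbb{V}$ with $\norm{v} = 1$ we get $\mathsf{D}^2 f(x)[v, v] = g(x, v) \geq \mu > 0$, which is exactly the asserted bound. (If desired, combining this with the mean value theorem along segments in the convex set $\mathcal{X}$ yields the familiar inequality $f(y) \geq f(x) + \inp{\nabla f(x)}{y - x} + \tfrac{\mu}{2}\norm{y - x}^2$ for all $x, y \in \mathcal{X}$, justifying the terminology "strongly convex on $\mathcal{X}$".)

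There is no genuine obstacle here: the only facts being used are the joint continuity of $(x, v) \mapsto \mathsf{D}^2 f(x)[v, v]$ and the compactness of the unit sphere in a finite-dimensional space, both of which are standard. An alternative phrasing I could use, avoiding explicit mention of $\mathbb{S}$, is to note that $x \mapsto \lambda_{\min}(\mathsf{D}^2 f(x))$ is continuous on $\mathcal{X}$ (eigenvalues depend continuously on the form) and strictly positive there, and then apply the extreme value theorem on the compact set $\mathcal{X}$ alone; I would present the sphere-based argument as the main line and perhaps remark on this variant.
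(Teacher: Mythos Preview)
Your proposal is correct and follows essentially the same approach as the paper: the paper's proof simply notes that $\mathsf{D}^2 f(x)[v,v]$ is continuous in both $x$ and $v$ by twice continuous differentiability, and then takes the infimum over the compact product set $\mathcal{X}\times\{v\in\mathbb{V}:\norm{v}=1\}$. Your write-up just makes the same argument more explicit (spelling out why the map is continuous, invoking the extreme value theorem by name, and remarking on the $\lambda_{\min}$ variant), but there is no substantive difference.
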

\begin{proof}
    Given twice continuous differentiability of $f$, we know $\mathsf{D}^2f(x)[v,v]$ is a continuous function in both $x$ and $v$. The desired result then follows by taking the infumum of $\mathsf{D}^2f(x)[v,v]$ over the product of the compact sets $\mathcal{X}$ and $\{v\in\mathbb{V}: \norm{v}=1\}$.
\end{proof}

We now present the main proof. 

\begin{proof}[Proof of Proposition~\ref{prop:dual-ii}]
    Given that gradient descent and Newton's method with backtracking both result in a monotonic sequence of function values, for some initialization $\nu_0\in\mathbb{H}^n$, we can constrain our analysis to the superlevel set $\mathcal{S}=\{ \nu\in\mathbb{H}^n : g_q^k(\nu) \geq g_q^k(\nu_0) \}$ which is compact from coercivity of $-g_q^k(\nu)$. Over this compact set, from Lemmas~\ref{lem:compact-Lipschitz} and~\ref{lem:compact-strong} we establish that $g_q^k(\nu)$ is strongly concave, and has Lipschitz gradient and Hessian. These properties are sufficient to establish global convergence of the function values to the optimal value by using gradient descent and Newton's method with backtracking~\cite[Sections 9.3.1 and 9.5.3]{boyd2004convex}. Convergence of the iterates to the unique solution follows from strong convexity and Lipshitz gradient (see e.g., the discussion in~\cite[Section 9.1.2]{boyd2004convex}).
\end{proof}

\urlstyle{rm}
\printbibliography

\end{document}